\begin{document}

\title{From signatures to monads in \UniMath}
\author{Benedikt Ahrens\thanks{This material is based upon work supported by the
National Science Foundation under agreement Nos. DMS-1128155 and CMU 1150129-338510.
Any opinions, findings and conclusions or recommendations expressed in this material
are those of the author(s) and do not necessarily reflect the views of the National
Science Foundation.}~\thanks{This work has partly been funded by the CoqHoTT ERC Grant 637339.}
\and Ralph Matthes \and Anders Mörtberg\footnotemark[1]}


\maketitle

\begin{abstract}
The term \UniMath refers both to a formal
system for mathematics, as well as a computer-checked library of
mathematics formalized in that system. The \UniMath system is a core
dependent type theory, augmented by the univalence axiom. The system
is kept as small as possible in order to ease verification of it---in
particular, general inductive types are not part of the system.

In this work, we partially remedy the lack of inductive types 
by constructing some datatypes
and their associated induction principles from other type
constructors. This involves a formalization of a category-theoretic
result on the construction of initial algebras, as well as a mechanism
to conveniently use the datatypes obtained. We also connect this
construction to a previous formalization of substitution for languages
with variable binding. Altogether, we construct a framework that
allows us to concisely specify, via a simple notion of binding
signature, a language with variable binding. From such a specification
we obtain the datatype of terms of that language, equipped with a
certified monadic substitution operation and a suitable recursion
scheme. Using this we formalize the untyped lambda calculus and the
raw syntax of Martin-Löf type theory.
\end{abstract}

\tableofcontents 

\section{Introduction}

The \UniMath\footnote{The \UniMath library can be found at:
  \url{https://github.com/UniMath/UniMath}\\A summary file related to
  this paper can be found at:
  \url{https://github.com/UniMath/UniMath/blob/master/UniMath/SubstitutionSystems/FromBindingSigsToMonads_Summary.v}}
language is meant to be a core dependent type theory, making use of
as few type constructors as possible.  The goal of this restriction to
a minimal \enquote{practical} type theory is to make a formal proof of
(equi-)consistency of the theory feasible.  In practice, the \UniMath
language is (currently) a subset of the language implemented by the
proof assistant \Coq.
Importantly, the \UniMath language does not include a primitive for postulating arbitrary inductive types.
Concretely, this means that the use of the \Coq \texttt{Inductive} vernacular is not
part of the subset that constitutes the \UniMath language.
The purpose of avoiding the  \texttt{Inductive} vernacular is
to ease the semantic analysis of \UniMath, that is, the construction of models of the \UniMath language.
Another benefit of keeping the language as small as possible is that
it will be easier to one day port the library to a potential
proof assistant specifically designed for univalent mathematics.

In the present work, we partially remedy the lack of general inductive
types in \UniMath
by constructing datatypes as initial algebras.
We provide a
suitable induction principle for the types we construct, analogous to
the induction principle the \texttt{Inductive} scheme would generate
for us.  This way we can construct standard datatypes, for instance
the type of lists over a fixed type, with reasonable computational
behavior as explained in Section~\ref{sec:conclusion}.
In what follows we refer to types defined using \Coq's
\texttt{Inductive} scheme as ``inductive types'' and the types we
construct as ``initial algebras'' or ``datatypes''.

Intuitively, datatypes are types of tree-shaped data, and \emph{inductive} datatypes limit them to wellfounded
  trees; here we exemplify two use cases:
\begin{itemize}
  \item Structured collections of homogeneous data, e.\,g., lists of
    elements of a fixed type:
  \begin{verbatim}
    Inductive list (X : Type) :=
      | nil : list X
      | cons : X -> list X -> list X.
  \end{verbatim}
  \vspace{-10pt} 

    There are also many kinds of branching data structures for
    organizing homogeneous data.

  \item Representations of mathematically interesting objects, e.\,g.,
    natural numbers and lambda terms (see
    Example~\ref{ex:sig_strength_lc} for a categorical presentation)
    where the type parameter represents the names
    of the variables that may occur free in them:
    \begin{verbatim}
    Inductive LC (X : Type) :=          
      | Var : X -> LC X
      | App : LC X * LC X -> LC X
      | Abs : LC (option X) -> LC X    
    \end{verbatim}
    \vspace{-10pt}
    Here \verb+option X+ is \verb+X+ together with one extra element. This is an example of a ``nested datatype'' (see Section~\ref{sec:inductive_families}).
\end{itemize}

There are two ways to characterize (or specify) inductive datatypes:
either \emph{externally}, via inference rules, or \emph{internally},
via a universal property.
The relationship between the two ways was studied in
\cite{DBLP:conf/lics/AwodeyGS12}. 
There, the authors do not ask
whether (some) inductive types are derivable in univalent
mathematics.
Instead, they 
start with a basic type theory with the axiom of function extensionality,
and 
present two extensions of that type theory
by axioms postulating inductive types, in two different ways: first by
axioms mimicking the inference rules, that is, by an internal variant
of an external postulate, and second by axioms postulating existence of initial
algebras for polynomial functors.  The authors then show that those
extensions are (logically) equivalent.
In the present work, we are
interested in an internal characterization of
datatypes, as initial objects, and we construct suitable initial algebras.

An inductive datatype has to come with a \emph{recursion principle} (a
calculational form of the universal property) which ought to be
mechanically derived together with the datatype itself.  Doing this by hand on
a case-by-case basis means doing similar tasks many times.
For the research program that tries to avoid this ``boiler plate''
of multiple instances of the same higher-level principles, the name
\enquote{datatype-generic programming} has been coined by Roland
Backhouse and Jeremy Gibbons---nicely indicating in what sense
genericity is aimed at.

In this work we focus on a particular class of datatypes that represent languages with
variable binding. Those datatypes are families of types that are indexed over the type
of free variables allowed to occur in the expressions of the language. Variable binding
modifies the indexing type by adding extra free variables in the scope of the binder, as
seen in the motivating code example \verb+LC+ of representations of lambda terms above.
 
Still within the target area of datatype-generic programming (and
reasoning), but more specifically, the datatypes we focus on in the
present work are canonically equipped with a \emph{substitution}
operation---itself defined via a variant of the recursion principle
associated to the datatypes (recursion in Mendler-style
\cite{DBLP:journals/apal/Mendler91}).  This substitution satisfies the
laws of the well-known mathematical structure of a \emph{monad}---an
observation originating in
\cite{BellegardeHook,DBLP:journals/jfp/BirdP99,alt_reus}.
In this work, we not only construct the datatypes themselves, but
also provide a monadic structure---both the operations and the
laws---on those datatypes.

The datatypes representing languages with binders are specified
via a notion of \emph{signature}. A signature abstractly describes the
shape of the trees by specifying
\begin{itemize}
  \item the type of nodes and
  \item the ``number'' of subtrees of a node.
\end{itemize}
In the present work, we consider two notions of signatures,
and relate
them by constructing a function from one type of signatures to the
other.  One notion is that of a \emph{binding signature}
(cf.~Definition~\ref{def:binding_sig}), a
simple notion of
signature for which we know how to construct their associated
datatypes.  The other notion is that of a \emph{signature with
  strength} (cf.~Definition~\ref{def:sig_w_strength}), introduced in
\cite{DBLP:journals/tcs/MatthesU04}.  The latter is a more general
notion of signature which comes with information on how to perform
\emph{substitution} on the associated language (or, more generally, on
any \enquote{model} of the signature---even including coinductive
interpretations in form of languages with non-wellfounded legal parse trees,
that, however, are not studied in the present work).

\paragraph*{Outline of the paper.}
~The present work is built on top of existing work.
Here, we list previous work as well as
work done for the present article, in order to give a coherent
picture:

\begin{enumerate}
  \item In Section~\ref{sec:signature}, we construct a signature with
    strength from a binding signature.  This involves constructing an
    endofunctor on the category of endofunctors on a base category $\C$
    from a family of lists of natural numbers, as well as a strength (a
    natural transformation with extra properties) between suitable
    functors. \label{item:signatures}
      
  \item Instantiating the base category $\C$ of the previous section to $\set$, 
    we construct the data type, as an initial algebra of the
    endofunctor on endo\-functors on the category of sets that is
    specified by a binding signature, using just the type constructors
    available in the \UniMath language.  In particular, we do not use
    the \Coq vernacular \verb!Inductive!.
    This work is reported on in Section~\ref{sec:inductive_types}. \label{item:init_alg}
    
  \item

    In previous work
    \cite{DBLP:journals/tcs/MatthesU04,DBLP:journals/corr/AhrensM16},
    a model (``substitution system'') of a signature with strength was
    constructed on a hypothetical initial algebra.  This construction
    was carried out over an arbitrary base category, which, by
    hypothesis, is sufficiently well-behaved. In particular, right
    Kan extensions were required to exist. In the present work, we 
    base the needed scheme of generalized iteration in Mendler-style
    on another theorem in~\cite{DBLP:journals/fac/BirdP99} that is
    based on cocontinuity assumptions instead of the existence of right
    Kan extensions.
    We apply this modified construction to the data type
    constructed in item \ref{item:init_alg}, 
    where the base category is the category of sets.\label{item:hss} 
    We hence have to provide the prerequisites for that general
    construction, in particular, we show that precomposition with
    a functor preserves colimits of any kind (while only preservation
    of initial objects and colimits of chains is required for the iteration scheme).
    This work is reported on in Section~\ref{sec:hss}.
  
    \item In previous work \cite{DBLP:journals/tcs/MatthesU04,DBLP:journals/corr/AhrensM16}, a monad was constructed from any substitution system 
          over an arbitrary base category---thus showing that the substitution constructed in \ref{item:hss} satisfies 
          widely recognized minimum requirements on substitution.
          The modified construction of the present work can be applied to our more specific situation without any further conceptual work, see Section~\ref{sec:monad}.
          \label{item:monad}
      
   \end{enumerate}

The construction of (indexed) datatypes described in item \ref{item:init_alg} 
certainly constitutes the 
bulk of the present work,
but connecting this construction with the previous work mentioned above also required some effort.
The construction is done by combining two results:
  
\begin{itemize}
  \item a classical category-theoretic result saying that an initial
    algebra of an $\omega$-co\-con\-tinuous functor can be constructed
    from a colimit of a certain chain (i.\,e., a countably
    infinite linear diagram)
    \cite{Adamek74};
  \item the constructibility of colimits in the category of sets
    (a.k.a.~discrete types) in \UniMath as a consequence of the
    constructibility of set quotients.
\end{itemize}

The construction of set-level quotients was done by Voevodsky
\cite{voevodsky_experimental}.  It is a prime example of the new
possibilities that the univalence axiom and its consequences provide
for the formalization of (set level) mathematics compared to the type
theories implemented by \Coq or \Agda without the univalence axiom.

On the way to our results, we also deepened the degree of categorical
analysis, e.\,g., we organized the signatures with strength into a
category, constructed certain limits and colimits in that category, and identified pointed
distributive laws as a means to construct signatures with strength.

The results presented in this article are \emph{not
  surprising}---it is our hope, however, that their formalization will
be \emph{useful} and that its underlying ideas extend to richer
notions of datatypes and type families. 
One envisioned use of the library formalized in the present work
is outlined in Section~\ref{sec:c-systems}.

\subsection{About \UniMath}

The \UniMath language is a variation of intensional Martin-Löf type
theory~\cite{MLTT}. 
It has dependent function types (also called
$\Pi$-types), dependent pair types (also called $\Sigma$-types),
identity types and coproduct types.

There are also a few base types: the type of natural numbers, the
empty type, the unit type, the type of booleans.  Furthermore, we
assume that all the types are elements of a universe $\UU$---for sake
of simplicity, and while waiting for a satisfying universe mechanism
(that supports resizing rules besides universe polymorphism), we even
assume the inconsistent typing rule $\UU:\UU$. This means that the
\Coq system does not provide us with a validation of our usage of
universes, although we informally claim that we do not exploit that
rule in inconsistent ways (to be confirmed in future implementations).
We denote by $A \simeq B$ the type of equivalences between types
$A$ and $B$. For details we refer to \cite[Chapter 2.4]{hottbook}.

An important part of \UniMath is the \fat{univalence axiom}. 
This axiom characterizes the identity type on the universe $\type$.
It asserts that the type of identities between types is 
equivalent to the type of equivalences between those types.
More precisely, it asserts that the map from identities between types
to equivalences between types that is specified by sending the 
reflexivity to the identity equivalence is an equivalence.
In the present work, we crucially use some consequences of the 
univalence axiom that are not provable in pure Martin-Löf type theory.
Details are described in Section~\ref{sec:conclusion}.
  
Note that the general scheme to define strictly positive inductive
types and families in \Coq, the vernacular \texttt{Inductive}, is
\fat{not} part of \UniMath.  Indeed, while the types above are, for
technical reasons, implemented in \UniMath using the
\texttt{Inductive} vernacular, its use is not permitted outside a
\enquote{preamble} that introduces those types.  In this way we
simulate a theory in which the above types are primitive rather than
an instance of a general type definition mechanism.  It is the purpose
of the present work to construct some of the inductive types that
could otherwise be defined using the \texttt{Inductive} scheme.\footnote{
This is similar in spirit to the datatype mechanism of the
\Isabelle proof assistant where the datatypes are constructed inside a
core theory; thus the recursion and induction principles do not form
part of the \enquote{trusted code base} of \Isabelle while they do
constitute a part of the \Coq kernel. We go beyond the justification in \Isabelle
in having the base category as parameter.}
Consequently, the experimental
\emph{Higher Inductive Types (HITs)}~\cite{hottbook} are not part of
the \UniMath language either.
  
In \UniMath, 
types are stratified
according to their \emph{homotopy level}: we say that a type is
\fat{contractible} if it has exactly one element/inhabitant.  
A type is a \fat{proposition} if any two of its inhabitants are
identical
(there need
not be any inhabitant, corresponding to an unprovable proposition).  A
type is a \fat{set} if all of its identity types are
propositions. 
The hierarchy of homotopy levels continues with groupoids, 2-groupoids
and so on, but in the present work these higher levels are not used.
  
We call propositional truncation a type transformation that associates
to any type $A$ the proposition $\brck{A}$.  Intuitively,
$\brck{A}$ is empty when $A$ is, and contractible otherwise.  Note
that propositional truncation, often implemented as a HIT, is
implemented in \UniMath via a universal quantification, in the style of a 
generalized double negation:
\[ \brck{A} := \prd{P : \prop} (A \to P) \to P \enspace . \]

The propositional truncation is used to turn the strong,
constructive, existential quantification into a weak, classical, one:
we write $\exists a : A, B(a)$ for $\brck{\Sigma_{a:A}B(a)}$.
As in~\cite{hottbook} we use the term \fat{merely exists} to denote
the weaker notion of existence.

This distinction between structure and property given by the two
different existential quantifiers---$\Sigma$ and $\exists$, respectively---is
also reflected in our use of the vocabulary
`Problem \& Construction' vs.\ `Theorem \& Proof'.
Indeed, whenever
we describe the construction of a structure, that is,
when we construct a term of a type that is not a proposition in the above sense,
we use the terminology `Problem \& Construction'.
The pair `Theorem \& Proof' is reserved for the construction of inhabitants of 
a proposition. A corner case is strong unique existence which is exactly the same as being contractible, and contractibility of a given type is a proposition, but still it comes with a construction.

The \UniMath library contains a significant amount of category theory, for details see
\cite{rezk_completion}. A \fat{category} $\C$ in \UniMath is given by:
\begin{itemize}
  \item a type $\C_0$ of objects;
  \item for any two objects $A,B : \C_0$, a \fat{set} $\C(A,B)$ of
    morphisms;
  \item for any three objects $A,B,C : \C_0$, a composition operation
    \[ \compose{\_}{\_} : \C(B,C)\to \C(A,B)\to \C(A,C)\] 
  \item for any object $A : \C_0$, an identity arrow
    $1 = 1_A : \C(A,A)$,
\end{itemize}
subject to the usual axioms of category theory. Functors, natural
transformations, etc.\ are defined in the usual way.

The category $\set$ has as objects sets and as morphisms from $X$ to
$Y$ the set of~(type-theoretic) functions from $X$ to $Y$. Given
categories $\C$ and $\D$, we denote by $[\C,\D]$ the category of
functors from $\C$ to $\D$, and natural transformations between them.

  The article \cite{rezk_completion} calls \enquote{precategory} the notion here introduced as category,
   and reserves the word \enquote{category} for precategories with an additional property, called \enquote{univalence (for categories)}. 
   This property is not relevant for the work reported here. 
   We will occasionally remark on what would be guaranteed in addition for a univalent base category.
   The category $\set$ is univalent, and univalence is inherited from
   the target category $\D$ of a functor category $[\C,\D]$.

In the present work we reuse the existing library of category theory
and extend it as described below.

\subsection{Notational conventions regarding category
  theory}\label{sec:notation_category_theory}

We assume the reader to be familiar with the concepts of category
theory.  Here, we only point to the specific but rather standard
notations and conventions we will use throughout.

Instead of writing that $F$ is an object of the functor category
$[\C,\D]$, we often abbreviate this to $F:[\C,\D]$, but also to
$F:\C\to\D$. Given $d : \D$, we call $\constfunctor{d} : \C \to \D$
the functor that is constantly $d$ and $1_d$ on objects and morphisms,
respectively. This notation hides the category $\C$, which will
usually be deducible from the context. We write $\Id{\C}$ for the
identity endofunctor on $\C$. We also let \emph{(co)product} denote
general indexed (co)products and explicitly write if they are binary.

The category $\Ptd(\C)$ has, as objects, pointed endofunctors on $\C$,
that is, pairs of an endofunctor $F : \C \to \C$ and a natural
transformation $\eta : \Id{\C} \to F$.  We write $\idwt{\Ptd(\C)}$ for
the identity functor with its trivial point. Let $U$ be the forgetful
functor from $\Ptd(\C)$ to $[\C,\C]$ (that forgets the point).

Categories, functors and natural transformations constitute the prime
example of a 2-category. We write $\vcomp$ for vertical composition of
natural transformations and $\hcomp$ for their horizontal
composition. If one of the arguments to horizontal composition is the
identity on some functor, we just write the functor as the respective
argument. The corner case where both arguments are the identity on
some functors $X$ and $Y$ is just functor composition that is hence
written $X\hcomp Y$ (on objects and morphisms, this is $X$ applied
after $Y$, hence $(X\hcomp Y)(A)=X(YA)$ and likewise for
morphisms). Horizontal composition of $\mu:F\to G$ and $\nu:F'\to G'$
has $\mu\hcomp\nu:F\hcomp F'\to G\hcomp G'$ provided $F,G:\D\to\E$ and
$F',G':\C\to\D$.  The order of vertical composition $\vcomp$ is the
same as of functor composition: if $F,G,H:\C\to\D$ and $\mu:G\to H$
and $\nu:F\to G$, then $\mu\vcomp\nu:F\to H$ is defined by object-wise
composition in $\D$.

Given a functor $F : [\A,\B]$ and a category $\C$ we define the
functor $\_ \hcomp F$ on functor categories:
\[
 \_ \hcomp F : [\B,\C] \to [\A,\C]
\]
This functor takes a functor $X : [\B,\C]$ and precomposes it with
$F$, that is, $X \mapsto X\hcomp F$, and likewise with the morphisms, 
i.\,e., the natural transformations. Once again the category $\C$ is
hidden, but it can often be deduced from the context.

We follow \cite{DBLP:journals/corr/AhrensM16} in making explicit the
monoidal structure on functor category $[\C,\C]$ that carries over to
$\Ptd(\C)$: let
$\alpha_{X,Y,Z} : X \hcomp (Y \hcomp Z) \simeq (X \hcomp Y) \hcomp Z$,
$\rho_X : 1_{\C} \hcomp X \simeq X$ and
$\lambda_X : X \hcomp 1_{\C} \simeq X$ denote the monoidal
isomorphisms. Notice that all those morphisms are pointwise the
identity, but making them explicit is needed for typechecking in the
implementation \cite{DBLP:journals/corr/AhrensM16}.

\section{Two notions of signature}\label{sec:signature}

As outlined in the introduction, a signature abstractly
specifies a datatype by describing the shape of 
elements of that type.
We give two notions of signatures suitable for the description
of languages with variable binding, such as the untyped
lambda calculus.
We first describe a rather syntactic notion of signature:
binding signatures. We then proceed
with a description of a semantic notion of signature: signatures
with strength. We give constructions to obtain
signatures with strength and finally associate a signature with
strength to each binding signature.

\subsection{Binding signatures}\label{subsec:bindingsignatures}

A \emph{binding signature} is given by simple syntactic data 
that allows one to concisely 
specify a language with variable binding.
Binding signatures are less expressive than the signatures with strength that
will be presented in the next section. On the other hand, they
are easier to specify.

\begin{definition}[Arity, Binding signature]\label{def:binding_sig}
  An \fat{arity} is a (finite) list of natural numbers. A \fat{binding
    signature} is a family of arities, more precisely,
  \begin{itemize}
    \item a type $I$ with decidable equality and
    \item a function $\arity : I \to \lst{\nat}$.
  \end{itemize}
\end{definition}

Intuitively, the type $I$ of a binding signature indexes the language
constructors, and the function $\arity$ associates an arity to each
constructor.
We need decidable equality on the indexing type (which, by Hedberg's
theorem \cite{DBLP:journals/jfp/Hedberg98}, is a set)
in the proof of Lemma~\ref{lem:sem_sig_cocont}. 
Hypothesizing a
decidable equality also makes our notion of binding signature equal to
the notion used in \cite{fpt}.

In \UniMath we define this as a nested $\Sigma$-type (with \coqdocdefinition{UU} for the universe $\type$):

\begin{coqdoccode}
\coqdocemptyline
\coqdocnoindent
\coqdockw{Definition} \coqdef{UniMath.SubstitutionSystems.BindingSigToMonad.BindingSig}{BindingSig}{\coqdocdefinition{BindingSig}} : \coqref{UniMath.Foundations.Basics.Preamble.UU}{\coqdocdefinition{UU}} := \coqref{UniMath.Foundations.Basics.Preamble.:type scope:'xCExA3' x '..' x ',' x}{\coqdocnotation{Σ}} \coqref{UniMath.Foundations.Basics.Preamble.:type scope:'xCExA3' x '..' x ',' x}{\coqdocnotation{(}}\coqdocvar{I} : \coqref{UniMath.Foundations.Basics.Preamble.UU}{\coqdocdefinition{UU}}) (\coqdocvar{h} : \coqref{UniMath.Foundations.Basics.PartB.isdeceq}{\coqdocdefinition{isdeceq}} \coqdocvariable{I}\coqref{UniMath.Foundations.Basics.Preamble.:type scope:'xCExA3' x '..' x ',' x}{\coqdocnotation{),}} \coqdocvariable{I} \coqexternalref{:type scope:x '->' x}{http://coq.inria.fr/distrib/8.5pl3/stdlib/Coq.Init.Logic}{\coqdocnotation{\ensuremath{\rightarrow}}} \coqref{UniMath.Foundations.Combinatorics.Lists.list}{\coqdocdefinition{list}} \coqexternalref{nat}{http://coq.inria.fr/distrib/8.5pl3/stdlib/Coq.Init.Datatypes}{\coqdocinductive{nat}}.\coqdoceol
\coqdocemptyline
\end{coqdoccode}

We also define functions for accessing the components of a
\coqdocdefinition{BindingSig}
and a constructor function for constructing one:

\begin{coqdoccode}
\coqdocemptyline
\coqdocnoindent
\coqdockw{Definition} \coqdef{UniMath.SubstitutionSystems.BindingSigToMonad.BindingSigIndex}{BindingSigIndex}{\coqdocdefinition{BindingSigIndex}} : \coqref{UniMath.SubstitutionSystems.BindingSigToMonad.BindingSig}{\coqdocdefinition{BindingSig}} \coqexternalref{:type scope:x '->' x}{http://coq.inria.fr/distrib/8.5pl3/stdlib/Coq.Init.Logic}{\coqdocnotation{\ensuremath{\rightarrow}}} \coqref{UniMath.Foundations.Basics.Preamble.UU}{\coqdocdefinition{UU}} := \coqref{UniMath.Foundations.Basics.Preamble.pr1}{\coqdocdefinition{pr1}}.\coqdoceol
\coqdocnoindent
\coqdockw{Definition} \coqdef{UniMath.SubstitutionSystems.BindingSigToMonad.BindingSigIsdeceq}{BindingSigIsdeceq}{\coqdocdefinition{BindingSigIsdeceq}} (\coqdocvar{s} : \coqref{UniMath.SubstitutionSystems.BindingSigToMonad.BindingSig}{\coqdocdefinition{BindingSig}}) : \coqref{UniMath.Foundations.Basics.PartB.isdeceq}{\coqdocdefinition{isdeceq}} (\coqref{UniMath.SubstitutionSystems.BindingSigToMonad.BindingSigIndex}{\coqdocdefinition{BindingSigIndex}} \coqdocvariable{s}) :=\coqdoceol
\coqdocindent{1.00em}
\coqref{UniMath.Foundations.Basics.Preamble.pr1}{\coqdocdefinition{pr1}} (\coqref{UniMath.Foundations.Basics.Preamble.pr2}{\coqdocdefinition{pr2}} \coqdocvariable{s}).\coqdoceol
\coqdocnoindent
\coqdockw{Definition} \coqdef{UniMath.SubstitutionSystems.BindingSigToMonad.BindingSigMap}{BindingSigMap}{\coqdocdefinition{BindingSigMap}} (\coqdocvar{s} : \coqref{UniMath.SubstitutionSystems.BindingSigToMonad.BindingSig}{\coqdocdefinition{BindingSig}}) : \coqref{UniMath.SubstitutionSystems.BindingSigToMonad.BindingSigIndex}{\coqdocdefinition{BindingSigIndex}} \coqdocvariable{s} \coqexternalref{:type scope:x '->' x}{http://coq.inria.fr/distrib/8.5pl3/stdlib/Coq.Init.Logic}{\coqdocnotation{\ensuremath{\rightarrow}}} \coqref{UniMath.Foundations.Combinatorics.Lists.list}{\coqdocdefinition{list}} \coqexternalref{nat}{http://coq.inria.fr/distrib/8.5pl3/stdlib/Coq.Init.Datatypes}{\coqdocinductive{nat}} :=\coqdoceol
\coqdocindent{1.00em}
\coqref{UniMath.Foundations.Basics.Preamble.pr2}{\coqdocdefinition{pr2}} (\coqref{UniMath.Foundations.Basics.Preamble.pr2}{\coqdocdefinition{pr2}} \coqdocvariable{s}).\coqdoceol
\coqdocemptyline
\coqdocnoindent
\coqdockw{Definition} \coqdef{UniMath.SubstitutionSystems.BindingSigToMonad.mkBindingSig}{mkBindingSig}{\coqdocdefinition{mkBindingSig}} \{\coqdocvar{I} : \coqref{UniMath.Foundations.Basics.Preamble.UU}{\coqdocdefinition{UU}}\} (\coqdocvar{h} : \coqref{UniMath.Foundations.Basics.PartB.isdeceq}{\coqdocdefinition{isdeceq}} \coqdocvariable{I}) (\coqdocvar{f} : \coqdocvariable{I} \coqexternalref{:type scope:x '->' x}{http://coq.inria.fr/distrib/8.5pl3/stdlib/Coq.Init.Logic}{\coqdocnotation{\ensuremath{\rightarrow}}} \coqref{UniMath.Foundations.Combinatorics.Lists.list}{\coqdocdefinition{list}} \coqexternalref{nat}{http://coq.inria.fr/distrib/8.5pl3/stdlib/Coq.Init.Datatypes}{\coqdocinductive{nat}}) : \coqref{UniMath.SubstitutionSystems.BindingSigToMonad.BindingSig}{\coqdocdefinition{BindingSig}} := \coqdoceol
\coqdocindent{1.00em}
 (\coqdocvariable{I}\coqref{UniMath.Foundations.Basics.Preamble.::x ',,' x}{\coqdocnotation{,,}}(\coqdocvariable{h}\coqref{UniMath.Foundations.Basics.Preamble.::x ',,' x}{\coqdocnotation{,,}}\coqdocvariable{f})).\coqdoceol
\coqdocemptyline
\end{coqdoccode}

This way we can mimic the behavior of \Coq's \texttt{Record} types
which are not part of \UniMath as they are defined using
\texttt{Inductive}.

We can take the coproduct of two binding signatures by taking the
coproduct of the underlying indexing sets, and, for the function
specifying the arities, the induced function on the coproduct type.

\begin{example}[Binding signature of untyped lambda calculus]\label{ex:binding_sig_lc}
  The binding signature of the untyped lambda calculus is given by
  $I:= \{ \abs{}, \app{} \}$ and the arity function is
  \[  \abs{} \mapsto [1] \enspace , \enspace \app{} \mapsto [0,0] \enspace .\] 
This is to be read as follows: there are---besides variables that are
treated generically in Section~\ref{sec:hss}---two constructors. 
The first constructor $\abs{}$, corresponding to lambda abstraction,
has just one argument (as $\arity(\abs{})$ is a one-element list), 
and this argument
can make use of 1 extra variable being bound by the constructor. 
The second constructor $\app{}$, corresponding to application, has
two arguments, and there is no binding involved.
\end{example}

\begin{example}[Binding signature of presyntax of Martin-Löf type theory]\label{ex:binding_sig_mltt}
  The binding signature of Martin-Löf type theory is given 
  in Section~\ref{sec:monad}, as part of an extended example that uses an infinite index set.
  Using the coproduct of binding signatures, it can easily be
  decomposed, in particular, using the binding signature of the untyped
  lambda calculus as one ingredient. 
\end{example}

\subsection{Signatures with strength}\label{subsec:signatureswithstrength}

The next, more semantic, notion of signature was defined in
\cite[Definition~5]{DBLP:journals/tcs/MatthesU04}; there, it was merely
called \enquote{signature}. In order to explicitly distinguish them
from binding signatures, we call them \enquote{signatures with
  strength} here. As a new contribution, we organize
the signatures with strength as a category.

\begin{definition}[Signatures with strength]\label{def:sig_w_strength}

  Given a category $\C$, a \fat{signature with strength} is a pair
  $(H,\theta)$ of an endofunctor $H$ on $[\C,\C]$, called the
  signature functor, and a natural transformation $\th{}: (H {-})
  \hcomp U {\sim} \arr H ({-} \hcomp U {\sim})$ 
    between bifunctors $[\C,\C] \times
  \Ptd(\C) \arr [\C,\C]$ such that $\th{}$ is `linear' in the second
  component. 
\end{definition}

  In more detail, 
  the bifunctors applied to a pair of objects $(X,(Z,e))$ with $X:[\C,\C]$ and $(Z,e):\Ptd(\C)$ 
  ($X$ for the argument symbolized by $-$ and $(Z,e)$ for the argument symbolized by $\sim$) 
  yield $HX\hcomp Z$ and $H(X\hcomp Z)$, thus $\th{X,(Z,e)}:HX\hcomp Z\to H(X\hcomp Z)$.
  By `linearity' of $\th{}$ in the second argument we mean the equations
  \[ \th{X, \idwt{\Ptd(\C)}} = H(\lambda^{-1}_X) \vcomp \lambda_{HX} \] 
  (note that $\lambda_{HX}:HX\hcomp 1\to HX$ and $H(\lambda^{-1}_X):HX\to H(X\hcomp1)$, using the monoidal isomorphism $\lambda$ introduced in Section~\ref{sec:notation_category_theory}) and
  \[  \th{X, (Z' \hcomp Z, e' \hcomp e)} = H(\alpha^{-1}_{X,Z',Z}) \vcomp   \th{X \hcomp Z', (Z,
    e)} \vcomp (\th{X, (Z', e')} \hcomp Z) \vcomp \alpha_{HX, Z', Z} \enspace , \]
as illustrated by the diagram
  \[
  \begin{xy}
   \xymatrix@C=50pt@R=25pt{
                  **[l]H X \hcomp (Z' \hcomp Z)  \ar[rr]^{\th{X,(Z' \hcomp Z, e' \hcomp e)}} \ar[d]_{\alpha_{HX, Z', Z}} & &  **[r]H(X \hcomp (Z' \hcomp Z)) \\
                  **[l] (H X \hcomp Z') \hcomp Z \ar[r]^-{\th{X, (Z', e')} \hcomp Z} &   **[c]H(X \hcomp Z')\hcomp Z \ar[r]^-{\th{X \hcomp Z', (Z,
    e)}} &   **[r]H((X \hcomp Z')\hcomp Z) \ar[u]_{H(\alpha^{-1}_{X,Z',Z})}\\
     }
  \end{xy}
  \]

\begin{definition}[Morphism of signatures with strength]\label{def:mor_sig_w_strength}
  Given two signatures with
  strength $(H,\theta)$ and $(H',\theta')$, a \fat{morphism of
    signatures with strength} from $(H,\theta)$ to $(H',\theta')$ is a natural
  transformation $h : H \to H'$ such that the following diagram
  commutes for any $X : [\C,\C]$ and $(Z,e) : \Ptd(\C)$.
  \[
  \begin{xy}
   \xymatrix@C=50pt@R=30pt{
                  **[l]H X \hcomp Z  \ar[r]^{\th{X,(Z,e)}} \ar[d]_{h_X \hcomp Z} &   **[r]H (X \hcomp Z) \ar[d]^{h_{X\hcomp Z}}\\
                  **[l]H' X \hcomp Z  \ar[r]^{\th{X,(Z,e)}'} &   **[r]H' (X \hcomp Z) \\
     }
  \end{xy}
  \]
\end{definition}

Composition and identity morphisms of signatures with strength are
given by composition and identity of natural transformations.  This
defines the category of signatures with strength.

Examples of signatures with strength are given in
\cite{DBLP:journals/tcs/MatthesU04}. Another way of producing examples
is the map defined in Construction~\ref{constr:sem_sig_from_binding}.

The signatures with strength do not distinguish between arities and
signatures.
As developed in \cite{DBLP:journals/corr/AhrensM16}, there is a way
to build a new signature by taking the coproduct of two signatures.
Intuitively, and just as for binding signatures, this corresponds to constructing a new language by taking
the disjoint union of the language constructors of two given
languages. 
 What is new here compared to \cite{DBLP:journals/corr/AhrensM16}
 is the explicitly categorical treatment (i.\,e., taking into account morphisms
 of signatures with strength). 
The construction generalizes easily to the coproduct of an arbitrary
family of such signatures:

\begin{definition}[Coproduct of signatures with strength]\label{def:coprod_sem_sig}
  If $\C$ has coproducts, then the coproduct of a family of signatures
  with strength is defined as follows:
   \begin{itemize}
     \item the signature functor is given by the coproduct in the
       endofunctor category on $[\C,\C]$ induced by that on $\C$;
     \item the strength is induced by coproduct of arrows.
   \end{itemize}
   The strength laws are simple consequences of the strength laws of
   each member of the family of signatures, and the universal property is readily established.
\end{definition}

\begin{definition}[Binary product of signatures with strength]\label{def:prod_sem_sig}
  If $\C$ has binary products, then the binary product of two
  signatures with strength has, as signature functor, the binary
  product of the functors of the given signatures. 
  The strength is then induced analogously to coproducts.
\end{definition}
By way of iteration, binary products will be used to model multiple arguments of a datatype constructor.

Definitions~\ref{def:coprod_sem_sig} and \ref{def:prod_sem_sig} entail
that the forgetful functor from signatures with strength to
endofunctors on $[\C,\C]$ lifts and preserves coproducts and binary
products.

\subsection{Signatures with strength from binding signatures}
\label{subsec:sem_sig_from_binding}

Constructing suitable signatures with strength for a language
seems like a daunting
task.  Fortunately, it is often sufficient to specify the binding
signature.  The generic solution to the following problem then yields
the corresponding signature with strength.

\begin{problem}\label{prob:sem_sig_from_binding}
  Let $\C$ be a category with coproducts, binary products and
  a terminal object.
  Given a binding signature, to construct a signature with strength on $\C$.  
  This task is naturally divided into
  \begin{enumerate}
    \item the construction of the signature functor $H$ as endofunctor on
      $[\C,\C]$ and then\label{constr_item_H}
    \item the construction of a strength for $H$.\label{constr_item_theta}
  \end{enumerate}
\end{problem}

\begin{construction}[Part \ref{constr_item_H} of Problem~\ref{prob:sem_sig_from_binding}]\label{constr_H:sem_sig_from_binding}
  Let $(I,\arity)$ be a binding signature.
  Let $i:I$. To the list $\arity(i) = [n_1,\ldots,n_k]$ we associate the
  functor defined on objects by
  \[A \mapsto \prod_{1\leqslant j \leqslant k} (X \hcomp \option^{n_j})(A) \]
  Here, the functor $\option : \C\to\C$ is
  defined on objects by $\option(A) := 1 + A$.  The product is
  implemented as an iterated binary product.
   Put differently, we define a functor
   \begin{align*}
              [\C,\C] &\to [\C,\C] \\
                 X    &\mapsto \prod_{1 \leqslant j \leqslant k} X \hcomp \option^{n_j}
   \end{align*}
     The functor associated
   to the signature $(I,\arity)$ is then obtained as the coproduct of the
   functors associated to each arity,
   \begin{align*}
             H : [\C,\C] &\to [\C,\C] \\
                 X    &\mapsto \coprod_{i : I} \prod_{1 \leqslant j \leqslant \cfont{length}(\arity(i))} X \hcomp \option^{\arity(i)_j}
   \end{align*}
   
   For the construction of this functor over the category of sets (i.\,e., when $\C=\set$), it
   is essential for $I$ to be a \fat{set}.  This is the case, as a
   consequence of our hypothesis of $I$ having decidable equality as
   the indexing set of a binding signature.
 \end{construction}

 As we have just seen, the signature functors $H$ that arise from
 binding signatures are of a special shape, where the argument $X$
 only enters in the form of $X\hcomp \option^n$. This can be exploited
 in the construction of the strength
   $\theta$ for
 $H$. The right level of generality of this pattern is signature
 functors $H$ that are given by precomposition with a fixed
 endofunctor $G$ on $\C$, i.\,e., with $HX=X\hcomp G$. Pointed
 distributive laws for $G$ to be introduced next will lift to
 strengths for $H$, hence providing signatures with strength from a
 simpler input.  
 \begin{definition}[Pointed distributive law] 
   Let $\C$ be a category and $G:[\C,\C]$. A pointed distributive law
   for $G$ is a natural transformation $\delta: G\hcomp U{\sim}\to
   U{\sim}\hcomp G$ of functors $\Ptd(\C)\to[\C,\C]$ such that
  \[\delta_\idwt{\Ptd(\C)} =1_G\]
  and
  \[\delta_{(Z' \hcomp Z, e' \hcomp e)}= \alpha_{Z', Z,G}\vcomp Z'\hcomp\delta_{(Z,e)}\vcomp \alpha^{-1}_{Z',G,Z}\vcomp\delta_{(Z',e')}\hcomp Z\vcomp\alpha_{G,Z', Z}\enspace,\]
where the second equation is commutation of the following diagram:
  \[
  \begin{xy}
   \xymatrix@C=35pt@R=25pt{
                  **[l]G\hcomp (Z' \hcomp Z)  \ar[rrr]^{\delta_{(Z' \hcomp Z, e' \hcomp e)}} \ar[d]_{\alpha_{G, Z', Z}} & & &  **[r](Z' \hcomp Z)\hcomp G \\
                  **[l] (G \hcomp Z') \hcomp Z \ar[r]^-{\delta_{(Z', e')} \hcomp Z} &   **[c](Z'\hcomp G)\hcomp Z \ar[r]^{\alpha^{-1}_{Z',G,Z}} &   **[c]Z'\hcomp(G \hcomp Z) \ar[r]^-{Z'\hcomp\delta_{(Z,e)}}&   **[r]Z'\hcomp(Z \hcomp G) \ar[u]_{\alpha_{Z', Z,G}}\\
     }
  \end{xy}
  \]
\end{definition}
Note that, in analogy with the definition of signature with strength, we symbolize the sole argument of the functors as $\sim$.
Note that setting $\delta_\idwt{\Ptd(\C)}$ to
$\rho^{-1}_G\vcomp\lambda_G$ instead of the identity would be to emphasize
the monoidal structure on $[\C,\C]$, but our implementation did not
run into problems with our simplified definition (that, anyway, is pointwise identical).

The prime example is with $G=\option$, where
\[\delta_{(Z,e)}(A)=[e_{\option(A)}\vcomp\inl_{1,A} \; , \; Z(\inr_{1,A})]:\option(ZA)\to Z(\option(A))\enspace, \] 
with the injections $\inl_{1,A}$ and $\inr_{1,A}$ into $\option(A)$.

The following lemma is obtained by easy calculations.
\begin{lemma}\label{lem:strength_from_distrib_law}
  Let $\C$ be a category, $G:[\C,\C]$ and $\delta$ a pointed
  distributive law for $G$. Let $H$ be precomposition with $G$, then
\[\th{X, (Z, e)}:=\alpha_{X, Z,G}\vcomp X\hcomp\delta_{(Z,e)}\vcomp\alpha^{-1}_{X,G,Z}\enspace,\] 
as illustrated by the diagram
\[
  \begin{xy}
   \xymatrix@C=50pt@R=25pt{
                  **[l](X\hcomp G) \hcomp Z  \ar[r]^{\th{X,(Z,e)}} \ar[d]_{\alpha^{-1}_{X,G,Z}} &   **[r](X \hcomp Z)\hcomp G \\
                  **[l]X \hcomp(G \hcomp Z)  \ar[r]^{X\hcomp\delta{(Z,e)}} &   **[r]X \hcomp (Z\hcomp G) \ar[u]_{\alpha_{X, Z,G}}\\
     }
  \end{xy}
  \]
yields a natural transformation, and $(H,\theta)$ is a signature with strength.
\end{lemma}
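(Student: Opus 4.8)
The plan is to verify the two strength laws of Definition~\ref{def:sig_w_strength} for the transformation $\theta$ defined from the pointed distributive law $\delta$, after first checking that $\theta$ is indeed natural in $(X,(Z,e))$. Naturality of $\theta_{X,(Z,e)}$ reduces, by its very definition as a vertical pasting of the associators $\alpha^{-1}_{X,G,Z}$, $\alpha_{X,Z,G}$ and the whiskered law $X\hcomp\delta_{(Z,e)}$, to naturality of $\alpha$ (which holds in any monoidal/bicategorical setting and is already available in the library) together with naturality of $\delta$ as a transformation of functors $\Ptd(\C)\to[\C,\C]$, plus functoriality of whiskering $X\hcomp(-)$. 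Since the associators are pointwise identities, this is in fact a routine check; the only care needed is to feed the 2-categorical coherence lemmas the right arguments so that the implementation typechecks (as the paper warns in Section~\ref{sec:notation_category_theory}).

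For the unit law $\theta_{X,\idwt{\Ptd(\C)}} = H(\lambda^{-1}_X)\vcomp\lambda_{HX}$, I would unfold the left-hand side: $\theta_{X,\idwt{\Ptd(\C)}} = \alpha_{X,1_\C,G}\vcomp X\hcomp\delta_{\idwt{\Ptd(\C)}}\vcomp\alpha^{-1}_{X,G,1_\C}$. By the first axiom of a pointed distributive law, $\delta_{\idwt{\Ptd(\C)}} = 1_G$, so the middle factor is $X\hcomp 1_G = 1_{X\hcomp G}$, and what remains is a composite of associators with unitors which must equal $H(\lambda^{-1}_X)\vcomp\lambda_{HX}$. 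Here $HX = X\hcomp G$, and this is precisely a triangle-type coherence identity in the monoidal structure on $[\C,\C]$; I expect it to follow either from the triangle axiom or, more cheaply in this formalization, from the fact that all the morphisms involved are pointwise the identity, so that after composing one appeals to the appropriate coherence lemma (or simply to pointwise reflexivity and functoriality).

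For the multiplicativity law, the strategy is the same but longer: expand both sides of
\[ \th{X, (Z' \hcomp Z, e' \hcomp e)} = H(\alpha^{-1}_{X,Z',Z}) \vcomp \th{X \hcomp Z', (Z, e)} \vcomp (\th{X, (Z', e')} \hcomp Z) \vcomp \alpha_{HX, Z', Z} \]
in terms of $\delta$, then substitute the second axiom of the pointed distributive law, namely the pentagon-shaped equation for $\delta_{(Z'\hcomp Z,\,e'\hcomp e)}$, and finally cancel everything using naturality of $\alpha$ and the pentagon/coherence axiom for the associators. Concretely, $\th{X,(Z'\hcomp Z,e'\hcomp e)}$ contains $X\hcomp\delta_{(Z'\hcomp Z,e'\hcomp e)}$, which by the $\delta$-axiom rewrites to a whiskered composite of $\delta_{(Z',e')}$ and $\delta_{(Z,e)}$ with associators; one then reassembles these occurrences of $\delta$ into the two copies $\th{X\hcomp Z',(Z,e)}$ and $\th{X,(Z',e')}\hcomp Z$ appearing on the right, absorbing all the leftover associators via Mac Lane coherence.

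The main obstacle is bookkeeping: the equation is an identity between long vertical pastings of natural transformations living in a 2-category, and matching up the associator rearrangements coming from the $\delta$-axiom with those demanded by the definition of $\theta$ on both sides is where the "easy calculations" hide their one subtlety. In the formalization this is compounded by the need to insert the monoidal isomorphisms explicitly for typechecking even though they are pointwise identities; pragmatically one can therefore discharge the final coherence step by passing to the pointwise level (every transformation in sight is, objectwise, the identity in $\C$) rather than invoking abstract coherence, which is presumably what "easy calculations" refers to. Once naturality and the two laws are in hand, $(H,\theta)$ is a signature with strength by Definition~\ref{def:sig_w_strength}.
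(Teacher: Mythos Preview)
Your proposal is correct and matches the paper's approach: the paper's entire proof is the single sentence ``obtained by easy calculations,'' and what you have written is a faithful expansion of exactly those calculations (naturality from the ingredients, the unit law from $\delta_{\idwt{\Ptd(\C)}}=1_G$ plus coherence, the multiplicativity law from the second $\delta$-axiom plus associator bookkeeping). Your observation that the coherence steps can be discharged pointwise because all the monoidal isomorphisms are pointwise identities is also how the formalization proceeds in practice.
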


Also the next lemma is obtained by easy calculations.
\begin{lemma}\label{lem:distrib_law_comp}
Let $\C$ be a category, $G_1, G_2:[\C,\C]$ with pointed distributive
laws $\delta^1$ and $\delta^2$, respectively. Then, the following is a pointed distributive law for
$G_1\hcomp G_2$:
\[\delta_{(Z,e)}:=\alpha^{-1}_{Z,G_1,G_2}\vcomp\delta^1_{(Z,e)}\hcomp G_2\vcomp \alpha_{G_1,Z,G_2}\vcomp G_1\hcomp\delta^2_{(Z,e)}\vcomp\alpha^{-1}_{G_1,G_2,Z}\enspace,\]
visualized as follows: 
\[
  \begin{xy}
   \xymatrix@C=30pt@R=25pt{
                  **[l](G_1\hcomp G_2) \hcomp Z  \ar[rrr]^{\delta_{(Z, e)}} \ar[d]_{\alpha^{-1}_{G_1,G_2,Z}} & & &  **[r]Z\hcomp (G_1\hcomp G_2) \\
                  **[l] G_1\hcomp (G_2 \hcomp Z) \ar[r]^-{G_1\hcomp\delta^2_{(Z, e)}} &   **[c]G_1\hcomp (Z\hcomp G_2) \ar[r]^{\alpha_{G_1,Z,G_2}} &   **[c] (G_1\hcomp Z)\hcomp G_2\ar[r]^-{\delta^1_{(Z,e)}\hcomp G_2}&   **[r](Z\hcomp G_1)\hcomp G_2 \ar[u]_{\alpha^{-1}_{Z,G_1,G_2}}\\
     }
  \end{xy}
  \]
\end{lemma}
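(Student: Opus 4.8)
The plan is to verify the two pointed-distributive-law axioms for the proposed $\delta$ on $G_1 \hcomp G_2$ by straightforward diagram chasing, using the corresponding axioms for $\delta^1$ and $\delta^2$ together with the coherence (pentagon/triangle) laws for the associators $\alpha$. First I would check the unit law $\delta_{\idwt{\Ptd(\C)}} = 1_{G_1 \hcomp G_2}$. Plugging $(Z,e) := \idwt{\Ptd(\C)}$, i.e. $Z = \Id{\C}$, into the definition, the two occurrences $\delta^1_{\idwt{\Ptd(\C)}}$ and $\delta^2_{\idwt{\Ptd(\C)}}$ become identities by the unit axioms for $\delta^1$ and $\delta^2$, so $\delta_{\idwt{\Ptd(\C)}}$ reduces to $\alpha^{-1}_{\Id{\C},G_1,G_2} \vcomp (1 \hcomp G_2) \vcomp \alpha_{G_1,\Id{\C},G_2} \vcomp (G_1 \hcomp 1) \vcomp \alpha^{-1}_{G_1,G_2,\Id{\C}}$; since all these associators involving $\Id{\C}$ are pointwise the identity (indeed all associators here are pointwise identities), this collapses to $1_{G_1\hcomp G_2}$. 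In the formalization one would discharge this using the explicit monoidal-isomorphism lemmas rather than relying on definitional equality.

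Next I would establish the composition law, i.e.\ that $\delta_{(Z'\hcomp Z, e'\hcomp e)}$ equals the pasting $\alpha_{Z',Z,G_1\hcomp G_2} \vcomp (Z'\hcomp \delta_{(Z,e)}) \vcomp \alpha^{-1}_{Z',G_1\hcomp G_2,Z} \vcomp (\delta_{(Z',e')}\hcomp Z) \vcomp \alpha_{G_1\hcomp G_2,Z',Z}$. The strategy is: expand $\delta$ on both sides according to its definition in terms of $\delta^1$ and $\delta^2$; on the left-hand side apply the composition axiom for $\delta^1$ (at the pair $(Z'\hcomp Z, e'\hcomp e)$) and likewise for $\delta^2$; then the claim becomes an equation between two composites built solely from instances of $\delta^1_{(Z',e')}$, $\delta^1_{(Z,e)}$, $\delta^2_{(Z',e')}$, $\delta^2_{(Z,e)}$ and associators. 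That equation should follow from naturality of the associators (to slide the $\delta^i$'s past each other and past the functors $G_1$, $G_2$, $Z$, $Z'$) together with the Mac~Lane pentagon coherence to rearrange the associator-only subdiagrams. This is precisely the "easy calculation" the lemma refers to.

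The main obstacle is not conceptual but bookkeeping: there are many associators, each with a specific bracketing of a triple of functors drawn from $\{G_1, G_2, Z, Z'\}$, and getting the pentagon applications and the naturality squares lined up in the right order is fiddly. In the \UniMath formalization the difficulty is compounded by the fact that, although every $\alpha$, $\lambda$, $\rho$ is pointwise the identity natural transformation, they are \emph{not} definitionally trivial, so each reassociation must be justified by an explicit coherence lemma; hence the proof proceeds by a sequence of rewrites with the monoidal-category laws, exactly as in the treatment of strengths in \cite{DBLP:journals/corr/AhrensM16}. A useful sanity check along the way is to note that, since all structural isomorphisms are pointwise identities, the desired equation holds pointwise by inspection; the formal work is "merely" to produce a proof term that typechecks without appealing to that pointwise collapse.
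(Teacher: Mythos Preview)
Your proposal is correct and matches the paper's approach: the paper offers no proof beyond the remark that the lemma ``is obtained by easy calculations,'' and your sketch spells out exactly those calculations (unit law via the unit axioms for $\delta^1,\delta^2$ plus associator coherence; composition law via the composition axioms for $\delta^1,\delta^2$ plus naturality and pentagon). One small omission: a pointed distributive law is by definition a \emph{natural} transformation in the argument $(Z,e)$, so you should also record that $\delta$ inherits naturality from $\delta^1$ and $\delta^2$ together with naturality of $\alpha$; this is equally routine.
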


\begin{construction}[Part \ref{constr_item_theta} of Problem~\ref{prob:sem_sig_from_binding}]\label{constr:sem_sig_from_binding}
  Let $(I,\arity)$ be a binding signature.
  It suffices to define the signature with strength associated to any
  $\arity(i)$ for $i : I$.  The signature with strength associated to
  $(I,\arity)$ is then obtained by taking the coproduct of all the
  signatures with strength associated to $\arity(i)$ as in
  Definition~\ref{def:coprod_sem_sig}.

  Let $i:I$. Thanks to Definition~\ref{def:prod_sem_sig} for binary
  products, used repeatedly in order to account for multiple arguments
  (i.\,e., multiple elements in the list $\arity(i)$), it suffices to
  define the strength associated to the endofunctor on $[\C,\C]$,
  expressed by the term $X\hcomp\option^{n_k}$ in the above
  construction. However, this is an instance of
  Lemma~\ref{lem:strength_from_distrib_law}, with $G=\option^{n_k}$,
  and the latter is an iterated composition of $\option$ for which the
  pointed distributive law has been given above. So, Lemma~\ref{lem:distrib_law_comp}
  provides a pointed distributive law for
  $\option^{n_k}$.
\end{construction}

\begin{example}[The signature with strength for the untyped lambda calculus]\label{ex:sig_strength_lc}
  Consider the binding signature of Example~\ref{ex:binding_sig_lc}.
  The signature functor obtained from that binding signature via the
  map defined in Construction~\ref{constr_H:sem_sig_from_binding} is
  given by
  \[ X \mapsto X \hcomp \option + X\times X \]
  We also obtain a strength law for this functor by
  Construction~\ref{constr:sem_sig_from_binding}. For more details
  about this see~\cite{DBLP:journals/tcs/MatthesU04}.
\end{example}

The next section 
is dedicated to the construction of
initial algebras for the signature functor associated to a binding signature by Construction~\ref{constr_H:sem_sig_from_binding},
culminating in Theorem~\ref{lem:sem_sig_cocont} and Construction~\ref{thm:initial_alg_for_binding_sig}. 
In Section~\ref{sec:thewholechain} we then equip those initial algebras with a monad structure.

\section{Construction of datatypes as initial algebras}\label{sec:inductive_types}

Given a category $\D$, we define the datatype
specified by a functor $F : \D \to \D$ to be any initial algebra
of $F$. 
Note that by this definition, such datatypes are only defined 
up to unique isomorphism.
For a given endofunctor $F$ on $\D$, an initial algebra might or might not exist. 
In this section, we construct initial algebras for signature functors as in
Section~\ref{sec:signature}, with $\D$ instantiated to the category of
endofunctors on the category of sets, hence with category $\C$ of the previous section fixed to $\set$. 
However, the results of this section are stated and proved for arbitrary categories $\C$ equipped with suitable
structure, and only instantiated to $\set$ in the end.

Our main tool for the construction of initial algebras is Construction~\ref{thm:init_alg}. 
That construction yields an initial $F$-algebra for an $\omega$-cocontinuous endofunctor $F$
from a certain colimit.
It hence reduces our task of constructing  datatypes
(i.\,e., initial
algebras) to the construction of certain colimits (see
Section~\ref{sec:colimits_in_sets}) and to showing that various
functors preserve these colimits 
(see Sections~\ref{sec:inductive_sets} and
\ref{sec:inductive_families}).

\subsection{Colimits}

In our formalization, colimits are parametrized by diagrams over
graphs, as suggested by \cite[p.~71]{maclane}.
\begin{definition}[Graph]
   A \fat{graph} is a pair consisting of
   \begin{itemize}
     \item a type $\vertex : \UU$ representing the vertices and
     \item a family $\edge : \vertex \to \vertex \to \UU$ representing
       the edges as a dependent family of types.
   \end{itemize}
\end{definition}
A diagram, accordingly, is a map from a graph into the graph
underlying a category.
\begin{definition}[Diagram]
  Given a graph $G$ made of $\vertex_G$ and $\edge_G$ and a category $\C$, a \fat{diagram} of shape
  $G$ in $\C$ is a pair consisting of
  \begin{itemize}
    \item a map $\dob : \vertex_G\to \C_0$ and
    \item a family of maps $\dmor : \prd{u,v:\vertex_G}\edge_G(u,v) \to
      \C(\dob(u),\dob(v))$.
  \end{itemize}
\end{definition}
Henceforth, we will abbreviate $u:\vertex_G$ by $u:G$. These
definitions are also conveniently represented in \UniMath using
$\Sigma$-types with suitable accessor and constructor functions:
\newcommand{\leftdoublebracket}{[\kern-0.35em[}
\newcommand{\rightdoublebracket}{]\kern-0.35em]}
\begin{coqdoccode}
\coqdocemptyline
\coqdocnoindent
\coqdockw{Definition} \coqdef{UniMath.CategoryTheory.limits.graphs.colimits.graph}{graph}{\coqdocdefinition{graph}} := \coqref{UniMath.Foundations.Basics.Preamble.:type scope:'xCExA3' x '..' x ',' x}{\coqdocnotation{Σ}} \coqref{UniMath.Foundations.Basics.Preamble.:type scope:'xCExA3' x '..' x ',' x}{\coqdocnotation{(}}\coqdocvar{D} : \coqref{UniMath.Foundations.Basics.Preamble.UU}{\coqdocdefinition{UU}}\coqref{UniMath.Foundations.Basics.Preamble.:type scope:'xCExA3' x '..' x ',' x}{\coqdocnotation{),}} \coqdocvariable{D} \coqexternalref{:type scope:x '->' x}{http://coq.inria.fr/distrib/8.5pl3/stdlib/Coq.Init.Logic}{\coqdocnotation{\ensuremath{\rightarrow}}} \coqdocvariable{D} \coqexternalref{:type scope:x '->' x}{http://coq.inria.fr/distrib/8.5pl3/stdlib/Coq.Init.Logic}{\coqdocnotation{\ensuremath{\rightarrow}}} \coqref{UniMath.Foundations.Basics.Preamble.UU}{\coqdocdefinition{UU}}.\coqdoceol
\coqdocemptyline
\coqdocnoindent
\coqdockw{Definition} \coqdef{UniMath.CategoryTheory.limits.graphs.colimits.vertex}{vertex}{\coqdocdefinition{vertex}} : \coqref{UniMath.CategoryTheory.limits.graphs.colimits.graph}{\coqdocdefinition{graph}} \coqexternalref{:type scope:x '->' x}{http://coq.inria.fr/distrib/8.5pl3/stdlib/Coq.Init.Logic}{\coqdocnotation{\ensuremath{\rightarrow}}} \coqref{UniMath.Foundations.Basics.Preamble.UU}{\coqdocdefinition{UU}} := \coqref{UniMath.Foundations.Basics.Preamble.pr1}{\coqdocdefinition{pr1}}.\coqdoceol
\coqdocnoindent
\coqdockw{Definition} \coqdef{UniMath.CategoryTheory.limits.graphs.colimits.edge}{edge}{\coqdocdefinition{edge}} \{\coqdocvar{g} : \coqref{UniMath.CategoryTheory.limits.graphs.colimits.graph}{\coqdocdefinition{graph}}\} : \coqref{UniMath.CategoryTheory.limits.graphs.colimits.vertex}{\coqdocdefinition{vertex}} \coqdocvariable{g} \coqexternalref{:type scope:x '->' x}{http://coq.inria.fr/distrib/8.5pl3/stdlib/Coq.Init.Logic}{\coqdocnotation{\ensuremath{\rightarrow}}} \coqref{UniMath.CategoryTheory.limits.graphs.colimits.vertex}{\coqdocdefinition{vertex}} \coqdocvariable{g} \coqexternalref{:type scope:x '->' x}{http://coq.inria.fr/distrib/8.5pl3/stdlib/Coq.Init.Logic}{\coqdocnotation{\ensuremath{\rightarrow}}} \coqref{UniMath.Foundations.Basics.Preamble.UU}{\coqdocdefinition{UU}} := \coqref{UniMath.Foundations.Basics.Preamble.pr2}{\coqdocdefinition{pr2}} \coqdocvariable{g}.\coqdoceol
\coqdocnoindent
\coqdockw{Definition} \coqdef{UniMath.CategoryTheory.limits.graphs.colimits.mk graph}{mk\_graph}{\coqdocdefinition{mk\_graph}} (\coqdocvar{D} : \coqref{UniMath.Foundations.Basics.Preamble.UU}{\coqdocdefinition{UU}}) (\coqdocvar{e} : \coqdocvariable{D} \coqref{UniMath.Foundations.Basics.Preamble.:type scope:x 'xE2x86x92' x}{\coqdocnotation{→}} \coqdocvariable{D} \coqref{UniMath.Foundations.Basics.Preamble.:type scope:x 'xE2x86x92' x}{\coqdocnotation{→}} \coqref{UniMath.Foundations.Basics.Preamble.UU}{\coqdocdefinition{UU}}) : \coqref{UniMath.CategoryTheory.limits.graphs.colimits.graph}{\coqdocdefinition{graph}} := \coqref{UniMath.Foundations.Basics.Preamble.tpair}{\coqdocconstructor{tpair}} \coqdocvar{\_} \coqdocvariable{D} \coqdocvariable{e}.\coqdoceol
\coqdocemptyline
\coqdocnoindent
\coqdockw{Definition} \coqdef{UniMath.CategoryTheory.limits.graphs.colimits.diagram}{diagram}{\coqdocdefinition{diagram}} (\coqdocvar{g} : \coqref{UniMath.CategoryTheory.limits.graphs.colimits.graph}{\coqdocdefinition{graph}}) (\coqdocvar{C} : \coqref{UniMath.CategoryTheory.precategories.precategory}{\coqdocdefinition{precategory}}) : \coqref{UniMath.Foundations.Basics.Preamble.UU}{\coqdocdefinition{UU}} :=\coqdoceol
\coqdocindent{1.00em}
\coqref{UniMath.Foundations.Basics.Preamble.:type scope:'xCExA3' x '..' x ',' x}{\coqdocnotation{Σ}} \coqref{UniMath.Foundations.Basics.Preamble.:type scope:'xCExA3' x '..' x ',' x}{\coqdocnotation{(}}\coqdocvar{f} : \coqref{UniMath.CategoryTheory.limits.graphs.colimits.vertex}{\coqdocdefinition{vertex}} \coqdocvariable{g} \coqexternalref{:type scope:x '->' x}{http://coq.inria.fr/distrib/8.5pl3/stdlib/Coq.Init.Logic}{\coqdocnotation{\ensuremath{\rightarrow}}} \coqdocvariable{C}\coqref{UniMath.Foundations.Basics.Preamble.:type scope:'xCExA3' x '..' x ',' x}{\coqdocnotation{),}} \coqref{UniMath.Foundations.Basics.Preamble.:type scope:'xCExA0' x '..' x ',' x}{\coqdocnotation{\ensuremath{\Pi}}} \coqref{UniMath.Foundations.Basics.Preamble.:type scope:'xCExA0' x '..' x ',' x}{\coqdocnotation{(}}\coqdocvar{a} \coqdocvar{b} : \coqref{UniMath.CategoryTheory.limits.graphs.colimits.vertex}{\coqdocdefinition{vertex}} \coqdocvariable{g}\coqref{UniMath.Foundations.Basics.Preamble.:type scope:'xCExA0' x '..' x ',' x}{\coqdocnotation{),}} \coqref{UniMath.CategoryTheory.limits.graphs.colimits.edge}{\coqdocdefinition{edge}} \coqdocvariable{a} \coqdocvariable{b} \coqexternalref{:type scope:x '->' x}{http://coq.inria.fr/distrib/8.5pl3/stdlib/Coq.Init.Logic}{\coqdocnotation{\ensuremath{\rightarrow}}} \coqdocvariable{C}\coqref{UniMath.CategoryTheory.UnicodeNotations.::x 'xE2x9FxA6' x ',' x 'xE2x9FxA7'}{\coqdocnotation{\leftdoublebracket}}\coqdocvariable{f} \coqdocvariable{a}\coqref{UniMath.CategoryTheory.UnicodeNotations.::x 'xE2x9FxA6' x ',' x 'xE2x9FxA7'}{\coqdocnotation{,}} \coqdocvariable{f} \coqdocvariable{b}\coqref{UniMath.CategoryTheory.UnicodeNotations.::x 'xE2x9FxA6' x ',' x 'xE2x9FxA7'}{\coqdocnotation{\rightdoublebracket}}.\coqdoceol
\coqdocemptyline
\end{coqdoccode}

\begin{remark}
  For conceptual economy, it is customary in category theory to 
  index limits and colimits by categories instead of graphs, and by functors instead of
  diagrams. The extra structure
  that categories and functors have compared to graphs and diagrams is not
  used in what we are presenting here. However, our formalization
  can also be
  used with categories and functors, thanks to coercions from 
  categories and functors 
  to graphs and diagrams, respectively.
\end{remark}

\begin{definition}[Cocone]
  Given a diagram $d$ of shape $G$ in $\C$ made of $\dob_d$ and $\dmor_d$, and an object $C : \C_0$,
  a \fat{cocone} under $d$ with tip $C$ is given by
  \begin{itemize}
    \item a family of morphisms $a:\prd{v:G} \C(\dob_d(v),C)$
      and
    \item a family of equalities $\prd{u:G,v:G,e:\edge_G(u,v)}
      \compose{\dmor_d(e)}{a(v)} = a(u)$.
  \end{itemize}
  Let $\Cocone(d,C)$ be the type of cocones under $d$ with tip $C$.
\end{definition}
The equalities in the definition can be depicted as:
  \[
 \begin{xy}
  \xymatrix{
                 \dob_d(u) \ar[rr]^{\dmor_d(e)} \ar[rd]_{a(u)}   &  &  \dob_d(v) \ar[ld]^{a(v)} \\
                             & C &
  }
 \end{xy}
\]

We often omit the equalities, denoting a cocone just by its family of
morphisms.

\begin{definition}[Colimiting cocone]
  A cocone $a$ under $d$ (of shape $G$) with tip $C$ is called
  \fat{colimiting} if for any cocone $a'$ under $d$ with tip $C'$
  there is exactly one morphism $f : \C(C,C')$ such that
  $\compose{a(v)}{f} = a'(v)$ for any $v : G$.
  Let $\iscolimitingCocone(d,C,a)$ denote this property.
\end{definition}

This definition can be illustrated by the following diagram:
  \[
 \begin{xy}
  \xymatrix@R=5pc{
                 \dob_d(u) \ar[rr]^{} \ar[rd]_{a(u)}   \ar@{-->}[rrrd]_>>>>>>>>>>>>>>>{a'(u)}|!{[dr];[rr]}\hole
                           &  &  \dob_d(v) \ar[ld]_>>>>>>>{a(v)} \ar@{-->}[rd]^{a'(v)}                           
                           \\
                             & C \ar@{.>}[rr]_{\exists ! f}& & C'
  }
 \end{xy}
\]

In \UniMath we represent this by:

\begin{coqdoccode}
\coqdocemptyline
\coqdocnoindent
\coqdockw{Definition} \coqdef{UniMath.CategoryTheory.limits.graphs.colimits.isColimCocone}{isColimCocone}{\coqdocdefinition{isColimCocone}} \{\coqdocvar{g} : \coqref{UniMath.CategoryTheory.limits.graphs.colimits.graph}{\coqdocdefinition{graph}}\} (\coqdocvar{d} : \coqref{UniMath.CategoryTheory.limits.graphs.colimits.diagram}{\coqdocdefinition{diagram}} \coqdocvariable{g} \coqdocvariable{C}) (\coqdocvar{c} : \coqdocvariable{C})\coqdoceol
\coqdocindent{1.00em}
(\coqdocvar{a} : \coqref{UniMath.CategoryTheory.limits.graphs.colimits.cocone}{\coqdocdefinition{cocone}} \coqdocvariable{d} \coqdocvariable{c}) : \coqref{UniMath.Foundations.Basics.Preamble.UU}{\coqdocdefinition{UU}} := \coqref{UniMath.Foundations.Basics.Preamble.:type scope:'xCExA0' x '..' x ',' x}{\coqdocnotation{\ensuremath{\Pi}}} \coqref{UniMath.Foundations.Basics.Preamble.:type scope:'xCExA0' x '..' x ',' x}{\coqdocnotation{(}}\coqdocvar{c'} : \coqdocvariable{C}) (\coqdocvar{a'} : \coqref{UniMath.CategoryTheory.limits.graphs.colimits.cocone}{\coqdocdefinition{cocone}} \coqdocvariable{d} \coqdocvariable{c'}\coqref{UniMath.Foundations.Basics.Preamble.:type scope:'xCExA0' x '..' x ',' x}{\coqdocnotation{),}}\coqdoceol
\coqdocindent{2.00em}
\coqref{UniMath.Foundations.Basics.PartA.iscontr}{\coqdocdefinition{iscontr}} (\coqref{UniMath.Foundations.Basics.Preamble.:type scope:'xCExA3' x '..' x ',' x}{\coqdocnotation{Σ}} \coqdocvar{x} : \coqdocvariable{C}\coqref{UniMath.CategoryTheory.UnicodeNotations.::x 'xE2x9FxA6' x ',' x 'xE2x9FxA7'}{\coqdocnotation{\leftdoublebracket}}\coqdocvariable{c}\coqref{UniMath.CategoryTheory.UnicodeNotations.::x 'xE2x9FxA6' x ',' x 'xE2x9FxA7'}{\coqdocnotation{,}}\coqdocvariable{c'}\coqref{UniMath.CategoryTheory.UnicodeNotations.::x 'xE2x9FxA6' x ',' x 'xE2x9FxA7'}{\coqdocnotation{\rightdoublebracket}}\coqref{UniMath.Foundations.Basics.Preamble.:type scope:'xCExA3' x '..' x ',' x}{\coqdocnotation{,}} \coqref{UniMath.Foundations.Basics.Preamble.:type scope:'xCExA0' x '..' x ',' x}{\coqdocnotation{\ensuremath{\Pi}}} \coqdocvar{v}\coqref{UniMath.Foundations.Basics.Preamble.:type scope:'xCExA0' x '..' x ',' x}{\coqdocnotation{,}} \coqref{UniMath.CategoryTheory.limits.graphs.colimits.coconeIn}{\coqdocdefinition{coconeIn}} \coqdocvariable{a} \coqdocvariable{v} \coqref{UniMath.CategoryTheory.UnicodeNotations.::x ';;' x}{\coqdocnotation{;;}} \coqdocvariable{x} \coqref{UniMath.Foundations.Basics.Preamble.:type scope:x '=' x}{\coqdocnotation{=}} \coqref{UniMath.CategoryTheory.limits.graphs.colimits.coconeIn}{\coqdocdefinition{coconeIn}} \coqdocvariable{a} \coqdocvariable{v}).\coqdoceol
\coqdocemptyline
\end{coqdoccode}

Here \coqdocdefinition{iscontr} is a predicate saying that the type is
contractible, in other words that it has only one inhabitant which
exactly captures the unique existence of $f$.

\begin{remark}[Uniqueness of colimits]

  If $\C$ is a \emph{univalent} category~\cite{rezk_completion}, and
  $d$ is a diagram of shape $G$ in $\C$, then the type of colimits of
  $d$,
  \[  \sum_{C:\C}\sum_{a : \Cocone(d,C)}\iscolimitingCocone(d,C,a) \enspace , \]
  is a proposition.
\end{remark}

Given a functor $F : \C\to\D$, a diagram $d$ in $\C$ and a cocone $a$
of $d$ with tip $C : \C_0$, then $Fa$ is a cocone under $Fd$ with tip
$FC$ in $\D$, where $Fa$ and $Fd$ are defined in the obvious way.
 
\begin{definition}[Preservation of colimits]
  Fix a graph $G$.
  We say that $F$
  \fat{preserves} 
  colimits of shape $G$ if, for any diagram $d$ of shape $G$ in $\C$,
  and any cocone $a$ under $d$ with tip $C$, 
  the cocone $Fa$ is colimiting for $Fd$ whenever $a$ is colimiting for $d$.
\end{definition}

A functor is called \fat{cocontinuous} if it preserves all colimits. In \UniMath:

\begin{coqdoccode}
\coqdocemptyline
\coqdocnoindent
\coqdockw{Definition} \coqdef{UniMath.CategoryTheory.CocontFunctors.preserves colimit}{preserves\_colimit}{\coqdocdefinition{preserves\_colimit}} \{\coqdocvar{g} : \coqref{UniMath.CategoryTheory.limits.graphs.colimits.graph}{\coqdocdefinition{graph}}\} (\coqdocvar{d} : \coqref{UniMath.CategoryTheory.limits.graphs.colimits.diagram}{\coqdocdefinition{diagram}} \coqdocvariable{g} \coqdocvariable{C}) (\coqdocvar{L} : \coqdocvariable{C})\coqdoceol
\coqdocindent{1.00em}
(\coqdocvar{cc} : \coqref{UniMath.CategoryTheory.limits.graphs.colimits.cocone}{\coqdocdefinition{cocone}} \coqdocvariable{d} \coqdocvariable{L}) : \coqref{UniMath.Foundations.Basics.Preamble.UU}{\coqdocdefinition{UU}} :=\coqdoceol
\coqdocindent{1.00em}
\coqref{UniMath.CategoryTheory.limits.graphs.colimits.isColimCocone}{\coqdocdefinition{isColimCocone}} \coqdocvariable{d} \coqdocvariable{L} \coqdocvariable{cc} \coqexternalref{:type scope:x '->' x}{http://coq.inria.fr/distrib/8.5pl3/stdlib/Coq.Init.Logic}{\coqdocnotation{\ensuremath{\rightarrow}}} \coqref{UniMath.CategoryTheory.limits.graphs.colimits.isColimCocone}{\coqdocdefinition{isColimCocone}} (\coqref{UniMath.CategoryTheory.CocontFunctors.mapdiagram}{\coqdocdefinition{mapdiagram}} \coqdocvariable{d}) (\coqdocvariable{F} \coqdocvariable{L}) (\coqref{UniMath.CategoryTheory.CocontFunctors.mapcocone}{\coqdocdefinition{mapcocone}} \coqdocvariable{d} \coqdocvariable{cc}).\coqdoceol
\coqdocemptyline
\coqdocnoindent
\coqdockw{Definition} \coqdef{UniMath.CategoryTheory.CocontFunctors.is cocont}{is\_cocont}{\coqdocdefinition{is\_cocont}} := \coqref{UniMath.Foundations.Basics.Preamble.:type scope:'xCExA0' x '..' x ',' x}{\coqdocnotation{\ensuremath{\Pi}}} \coqref{UniMath.Foundations.Basics.Preamble.:type scope:'xCExA0' x '..' x ',' x}{\coqdocnotation{\{}}\coqdocvar{g} : \coqref{UniMath.CategoryTheory.limits.graphs.colimits.graph}{\coqdocdefinition{graph}}\} (\coqdocvar{d} : \coqref{UniMath.CategoryTheory.limits.graphs.colimits.diagram}{\coqdocdefinition{diagram}} \coqdocvariable{g} \coqdocvariable{C}) (\coqdocvar{L} : \coqdocvariable{C})\coqdoceol
\coqdocindent{1.00em}
(\coqdocvar{cc} : \coqref{UniMath.CategoryTheory.limits.graphs.colimits.cocone}{\coqdocdefinition{cocone}} \coqdocvariable{d} \coqdocvariable{L}\coqref{UniMath.Foundations.Basics.Preamble.:type scope:'xCExA0' x '..' x ',' x}{\coqdocnotation{),}} \coqref{UniMath.CategoryTheory.CocontFunctors.preserves colimit}{\coqdocdefinition{preserves\_colimit}} \coqdocvariable{d} \coqdocvariable{L} \coqdocvariable{cc}.\coqdoceol
\coqdocemptyline
\end{coqdoccode}

A functor is called \fat{$\omega$-cocontinuous} if it preserves
colimits of diagrams of the shape
\[ 
  \begin{xy}
            \xymatrix@C=4pc{
	      A_0 \ar[r]^{f_0} & A_1  \ar[r]^{f_1} & A_2  \ar[r]^{f_2} & \ldots
            }
  \end{xy}
\]
that is, diagrams on the graph where objects are natural numbers and
where there is a unique arrow from $m$ to $n$ if and only if $1 + m =
n$. We refer to diagrams of this shape as \fat{chains}. 

Actually, in the formalization, the type of arrows from $m$ to $n$ is
defined to be the type of proofs that $1 + m = n$, exploiting the fact
that the type of natural numbers is a set:

\begin{coqdoccode}
\coqdocemptyline
\coqdocnoindent
\coqdockw{Definition} \coqdef{UniMath.CategoryTheory.CocontFunctors.nat graph}{nat\_graph}{\coqdocdefinition{nat\_graph}} : \coqref{UniMath.CategoryTheory.limits.graphs.colimits.graph}{\coqdocdefinition{graph}} :=
\coqref{UniMath.CategoryTheory.limits.graphs.colimits.mk graph}{\coqdocdefinition{mk\_graph}} \coqexternalref{nat}{http://coq.inria.fr/distrib/8.5pl3/stdlib/Coq.Init.Datatypes}{\coqdocinductive{nat}} (\coqref{UniMath.Foundations.Basics.Preamble.::'xCExBB' x '..' x ',' x}{\coqdocnotation{\ensuremath{\lambda}}} \coqdocvar{m} \coqdocvar{n}\coqref{UniMath.Foundations.Basics.Preamble.::'xCExBB' x '..' x ',' x}{\coqdocnotation{,}} 1 \coqexternalref{:nat scope:x '+' x}{http://coq.inria.fr/distrib/8.5pl3/stdlib/Coq.Init.Peano}{\coqdocnotation{+}} \coqdocvariable{m} \coqref{UniMath.Foundations.Basics.Preamble.:type scope:x '=' x}{\coqdocnotation{=}} \coqdocvariable{n}).\coqdoceol
\coqdocemptyline
\coqdocnoindent
\coqdockw{Notation} "'chain'" := (\coqref{UniMath.CategoryTheory.limits.graphs.colimits.diagram}{\coqdocdefinition{diagram}} \coqref{UniMath.CategoryTheory.CocontFunctors.nat graph}{\coqdocdefinition{nat\_graph}}).\coqdoceol
\coqdocemptyline
\coqdocnoindent
\coqdockw{Definition} \coqdef{UniMath.CategoryTheory.CocontFunctors.is omega cocont}{is\_omega\_cocont}{\coqdocdefinition{is\_omega\_cocont}} \{\coqdocvar{C} \coqdocvar{D} : \coqref{UniMath.CategoryTheory.precategories.precategory}{\coqdocdefinition{precategory}}\} (\coqdocvar{F} : \coqref{UniMath.CategoryTheory.functor categories.functor}{\coqdocdefinition{functor}} \coqdocvariable{C} \coqdocvariable{D}) : \coqref{UniMath.Foundations.Basics.Preamble.UU}{\coqdocdefinition{UU}} :=
\coqdoceol
\coqdocindent{1.00em}
\coqref{UniMath.Foundations.Basics.Preamble.:type scope:'xCExA0' x '..' x ',' x}{\coqdocnotation{\ensuremath{\Pi}}} \coqref{UniMath.Foundations.Basics.Preamble.:type scope:'xCExA0' x '..' x ',' x}{\coqdocnotation{(}}\coqdocvar{c} : \coqref{UniMath.CategoryTheory.CocontFunctors.omega cocont.::'chain'}{\coqdocnotation{chain}} \coqdocvariable{C}) (\coqdocvar{L} : \coqdocvariable{C}) (\coqdocvar{cc} : \coqref{UniMath.CategoryTheory.limits.graphs.colimits.cocone}{\coqdocdefinition{cocone}} \coqdocvariable{c} \coqdocvariable{L}\coqref{UniMath.Foundations.Basics.Preamble.:type scope:'xCExA0' x '..' x ',' x}{\coqdocnotation{),}}
\coqref{UniMath.CategoryTheory.CocontFunctors.preserves colimit}{\coqdocdefinition{preserves\_colimit}} \coqdocvariable{F} \coqdocvariable{c} \coqdocvariable{L} \coqdocvariable{cc}.\coqdoceol
\coqdocemptyline
\coqdocnoindent
\coqdockw{Definition} \coqdef{UniMath.CategoryTheory.CocontFunctors.omega cocont functor}{omega\_cocont\_functor}{\coqdocdefinition{omega\_cocont\_functor}} (\coqdocvar{C} \coqdocvar{D} : \coqref{UniMath.CategoryTheory.precategories.precategory}{\coqdocdefinition{precategory}}) : \coqref{UniMath.Foundations.Basics.Preamble.UU}{\coqdocdefinition{UU}} :=\coqdoceol
\coqdocindent{1.00em}
\coqref{UniMath.Foundations.Basics.Preamble.:type scope:'xCExA3' x '..' x ',' x}{\coqdocnotation{Σ}} \coqref{UniMath.Foundations.Basics.Preamble.:type scope:'xCExA3' x '..' x ',' x}{\coqdocnotation{(}}\coqdocvar{F} : \coqref{UniMath.CategoryTheory.functor categories.functor}{\coqdocdefinition{functor}} \coqdocvariable{C} \coqdocvariable{D}\coqref{UniMath.Foundations.Basics.Preamble.:type scope:'xCExA3' x '..' x ',' x}{\coqdocnotation{),}} \coqref{UniMath.CategoryTheory.CocontFunctors.is omega cocont}{\coqdocdefinition{is\_omega\_cocont}} \coqdocvariable{F}.\coqdoceol
\coqdocemptyline
\end{coqdoccode}

\begin{lemma}[Invariance of cocontinuity under isomorphism]
  Let $F,G : \C\to\D$ be functors, and let $\alpha : F \cong G$ be a
  natural isomorphism, then $G$ preserves colimits (of a certain
  shape) if $F$ does.
\end{lemma}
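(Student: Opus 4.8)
The plan is to unfold the definition of colimit preservation and transport the colimiting property along the natural isomorphism $\alpha$. Fix a graph $G$ and a diagram $d$ of shape $G$ in $\C$, together with a cocone $a$ under $d$ with tip $C$ that is colimiting; we must show $Ga$ is colimiting for $Gd$. First I would observe that a natural isomorphism $\alpha : F \cong G$ induces, for every object $X : \C$, an isomorphism $\alpha_X : FX \cong GX$, and for every diagram $d$ an isomorphism of cocones: applying $\alpha$ componentwise sends the cocone $Fa$ under $Fd$ (with tip $FC$) to the cocone $Ga$ under $Gd$ (with tip $GC$), compatibly with the diagram maps $Fd \to Gd$ induced by naturality of $\alpha$. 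So the situation is: $Fa$ is colimiting (since $F$ preserves colimits of shape $G$ and $a$ is colimiting), and $Ga$ is related to it by a pointwise isomorphism.

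Next I would reduce to a lemma stating that colimiting cocones are stable under this kind of isomorphism. Concretely: if $a$ is a colimiting cocone under a diagram $e$ with tip $L$, and $\beta : L \cong L'$ is an isomorphism, then the cocone $v \mapsto \beta \circ a(v)$ is colimiting with tip $L'$; and if $e \cong e'$ is an isomorphism of diagrams (a natural family of isos commuting with the diagram maps), then transporting a colimiting cocone along it stays colimiting. Both facts are routine: given a competing cocone $a'$ with tip $C'$, one precomposes/postcomposes with the relevant isomorphism to get a competing cocone for the original colimit, extracts the unique mediating morphism, and then composes back with $\beta^{-1}$ (or the diagram iso) to get the desired mediating morphism, checking uniqueness by the same transport. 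In the formalization this is just manipulation of \coqdocdefinition{iscontr} along equivalences. Combining these two stability facts handles the passage from $Fa$ (colimiting) to $Ga$.

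Then the proof assembles as: given $a$ colimiting for $d$, conclude $Fa$ is colimiting for $Fd$ by hypothesis on $F$; observe $Gd \cong Fd$ as diagrams and $GC \cong FC$ via $\alpha$ (the square commutes by naturality of $\alpha$ applied to the morphisms $\dmor_d(e)$); transport colimitingness of $Fa$ along these isomorphisms to obtain that $Ga$ is colimiting for $Gd$. Since the graph was arbitrary, $G$ preserves colimits of every shape that $F$ does, which is exactly the statement.

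I do not expect a serious obstacle here: the only mild subtlety is bookkeeping, namely being careful that the comparison cocone one builds from $\alpha$ really is a cocone (the coherence equalities follow from naturality of $\alpha$), and that the mediating-morphism transport is done on the correct side so that the triangle identities $\compose{a(v)}{f} = a'(v)$ survive. In \UniMath this amounts to a short computation with \coqref{UniMath.Foundations.Basics.PartA.iscontr}{\coqdocdefinition{iscontr}} and the fact that it is preserved under equivalence of types; I would likely factor out the "isomorphism of cocones preserves colimitingness" sublemma as its own statement, since it is reused and keeps the transport explicit.
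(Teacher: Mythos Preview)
Your proposal is correct and is the natural direct argument. The paper does not actually give a proof of this lemma: it states the result and only adds the remark that, since preservation of colimits is a proposition, mere existence of the natural isomorphism suffices. Your decomposition into ``colimiting cocones are stable under isomorphism of the tip'' and ``under isomorphism of the diagram'' is exactly what is needed, and the coherence you flag (that the transported family is again a cocone) is discharged by naturality of $\alpha$, as you say.
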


Note that, as preservation of colimits is a proposition, it suffices
for the natural isomorphism $\alpha$ to merely exist for the lemma to
hold.

Next, we construct colimits in the functor category from colimits in the target
category:
\begin{problem}[Colimits in functor categories]\label{problem:colims_in_functor_cat}
  Let $\C$ be a category, and let $\D$ be a category 
  with all (specified) colimits of a given shape.
  To construct colimits of the same shape in the functor category $[\C,\D]$.
\end{problem}

\begin{construction}[Solution to Problem~\ref{problem:colims_in_functor_cat}]\label{constr:colims_in_functor_cat}
 The construction of colimits in a functor category is pointwise:
 the colimit $C$ of a diagram is given, at point $c : \C_0$, 
 as the colimit in $\D$ of the diagram obtained by evaluating the diagram in $c : \C_0$.
\end{construction}

Limits have been formalized in the same way as colimits, that is,
parametrized by graphs and diagrams. We have implemented a similar construction
for lifting limits to functor categories. We omit the details of the
dualization.

  In the formalization some (co)limits (e.\,g., pullbacks and pushouts) 
  are also 
  implemented \emph{directly}, in addition to them being formalized as
  a colimit over a specific graph.  For instance, binary
  coproducts are formalized as a type parametrized by two objects in a
  category, instead of by a diagram on the graph $\mathbf{2}$ with two
  objects and no non-trivial morphisms.  We provide suitable maps going back and
  forth between the different implementations of (co)limits.
  
  The advantage of formalizing the `special' (co)limits as instances of 
  general (co)limits is that results such as the lifting of 
  (co)limits to functor categories restricts immediately to 
  these (co)limits of special shapes.
  
  On the other hand, the direct formulation is more convenient to 
  work with in practice. In particular, we experienced some performance
  issues in the compilation of our library when we attempted to replace
  the direct lifting of binary (co)products to functor categories by
  a specialization of the general lifting of (co)limits.
  Those performance issues are related to a `structure vs.\ property'
  question: the lifting should happen in such a way that the 
  binary product of two functors $F,G : \C \to \D$, 
  evaluated in an object $C : \C_0$, \emph{computes} 
  (that is, is judgmentally equal) to the binary product of $FC$ and $GC$,
  the latter of which was given by hypothesis as a structure.

\subsection{Initial algebras from colimits of chains}

The construction of initial algebras as colimits of chains was first
described by Adámek in~\cite{Adamek74}. It is a purely categorical
construction and the formalization presented no surprises.

\begin{problem}[Initial algebras of $\omega$-cocontinuous
  functors]\label{problem:init_alg}
  Let $\C$ be a category with initial object $0$, and let $F : \C\to\C$ be $\omega$-cocontinuous. 
  Let $c$ be a colimiting cocone with tip $C$
  of the chain $\chain_F$ given as follows:
  \[
  \begin{xy}
            \xymatrix@C=4pc{
	      0 \ar[r]^{!} \ar[rd]_{c_0} 
	         & F0  \ar[r]^{F!} \ar[d]_{c_1} 
	          & F^20  \ar[r]^{F^2!} \ar[ld]^{c_2} 
	           & \ldots   
\\ 	                   &  C & 
            }
  \end{xy}
  \]
  Equip $C$ with an $F$-algebra structure $\alpha : \C(FC,C)$ and show that
  $(C,\alpha)$ is an initial $F$\-algebra.
\end{problem}

To motivate the solution to the problem, we recall Lambek's well-known lemma that we also formalized.
\begin{lemma}[Lambek]
  Given $F : \C\to\C$ and an initial algebra $(A,a)$ of $F$, then
  $a : \C(FA,A)$ is an isomorphism.
\end{lemma}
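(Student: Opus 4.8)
The plan is to produce an explicit two-sided inverse of $a$. The key idea is to feed the universal property of the initial algebra $(A,a)$ the $F$-algebra $(FA,Fa)$, where $Fa : \C(F(FA),FA)$ is the $F$-algebra structure on $FA$ obtained by applying $F$ to the structure map $a$.

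First I would invoke initiality of $(A,a)$ to obtain the unique $F$-algebra morphism $h$ from $(A,a)$ to $(FA,Fa)$. Concretely, $h$ is a morphism in $\C(A,FA)$ whose defining equation reads $\compose{a}{h} = \compose{Fh}{Fa}$, a square of morphisms in $\C(FA,FA)$; this $h$ will be the candidate inverse. The second ingredient is the observation that $a$, viewed as a morphism in $\C(FA,A)$, is itself an $F$-algebra morphism from $(FA,Fa)$ to $(A,a)$: the square that has to be checked is $\compose{Fa}{a} = \compose{Fa}{a}$, which holds by reflexivity.

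I would then combine these two algebra morphisms. Their composite $\compose{h}{a}$ is an $F$-algebra endomorphism of the initial algebra $(A,a)$; since $1_A$ is also such an endomorphism, the uniqueness part of initiality forces $\compose{h}{a} = 1_A$. For the reverse composite I would \emph{not} appeal to initiality again, but compute directly: using the defining equation of $h$ together with functoriality of $F$, $\compose{a}{h} = \compose{Fh}{Fa} = F(\compose{h}{a}) = F(1_A) = 1_{FA}$. Hence $h$ is a two-sided inverse of $a$, and $a$ is an isomorphism.

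There is no real obstacle here — the argument is short and should formalize without surprises; the only thing requiring attention is bookkeeping. One has to keep the directions of the various composites straight and, in particular, notice that the two identities witnessing that $h$ inverts $a$ arise in genuinely different ways: $\compose{h}{a} = 1_A$ comes from uniqueness of morphisms out of $(A,a)$, whereas $\compose{a}{h} = 1_{FA}$ is that same equation transported along $F$, via functoriality together with the algebra-morphism equation for $h$. As motivation for what follows, the lemma shows that an initial algebra is necessarily a fixed point of $F$ (the isomorphism $a$ witnesses $FA \cong A$), which is exactly why one can hope to construct it as a colimit fixpoint as in Problem~\ref{problem:init_alg}.
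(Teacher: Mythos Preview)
Your argument is correct and is exactly the approach the paper takes: the paper's proof is the one-line hint ``the inverse arrow to $a$ is obtained as the unique algebra morphism to the algebra $(FA,Fa)$'', and you have spelled out precisely this, including the standard asymmetric verification of the two inverse laws (one via uniqueness, the other via functoriality).
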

\begin{proof}
The inverse arrow to $a$ is obtained as the unique algebra morphism 
to the algebra $(FA,Fa)$.
\end{proof}

Thanks to this result, we are bound to find an $\alpha : \C(FC,C)$ above that is even an isomorphism.
  
\begin{construction}[Solving Problem~\ref{problem:init_alg}]\label{thm:init_alg}
  In order to construct an isomorphism $\alpha : FC \cong C$, we use
  that we obtain an isomorphism between any two objects that are
  colimits for the same diagram.  It hence suffices to show that $FC$
  is the tip of a colimit of the above diagram. But $FC$ is a colimit of the
  diagram $F\chain_F$ obtained by applying $F$ to each object and arrow of
  $\chain_F$, by $\omega$-cocontinuity of $F$.  At the same time, the colimit
  of $F\chain_F$ is the same as of $\chain_F$, since the colimit of a chain remains
  the same under the \enquote{shift} of a chain, or, more generally, under
  the removal of a finite prefix of a chain (this is due to the fact that
  the cocones can always be ``completed leftwards'' by pure calculation).
  
  Given an algebra $(A,a)$, we have to construct a cocone under $\chain_F$
  with tip $A$ in order to obtain a morphism from $C$ to $A$. The
  cocone is defined by induction on natural numbers: the morphism of
  index $0$ is the one from the initial object. The morphism at index
  $n+1$ is constructed by composing $a$ with the image of that at
  index $n$ under $F$.  This forms a cocone, which induces a morphism
  $f : \C(C,A)$.  This morphism is also a morphism
  of algebras from $(C,\alpha)$ to $(A,a)$.  Its uniqueness is a
  consequence of it being unique as a morphism out of the tip $C$ of the colimit.
\end{construction}

\subsection{Colimits in $\set$}\label{sec:colimits_in_sets}

The construction of colimits in the category of sets we 
present in this section requires two consequences of the 
univalence axiom: function extensionality and univalence for propositions.

It is well-known that the construction of colimits can be split into
the construction of coproducts and the construction of coequalizers
(see \cite[p.\,113]{maclane} for the dual situation with limits).
Using this point of view, it is
the construction of coequalizers that is not possible 
in pure
Martin-Löf type theory (see, e.g., \cite{DBLP:conf/types/ChicliPS02})
and requires the aforementioned consequences of the univalence axiom.

\subsubsection{Set quotients in \UniMath}
\label{sec:quotients_in_unimath}

Set-level quotients were constructed by Voevodsky in his Foundations
library (which is now a part of \UniMath); a brief overview can be found
in~\cite{voevodsky_experimental}.  None of the work described in this
section is our own.

Given a type $X$, we call $\eqrel(X)$ the type of equivalence
relations $R : X\to\X\to\prop$, that is, 
reflexive, symmetric, and transitive relations.
For such an equivalence relation $R$, the \fat{set quotient} $X/R$,
together with the canonical surjection $\pr:X\to X/R$,
has the following universal property: for any set $S$
and map $f: X \to S$ such that $R(x,y)$ implies $f(x) = f(y)$, there
is a unique map $\hat{f} : X/R \to S$ such that the following diagram
commutes.
\[
\begin{xy}
  \xymatrix{
            X  \ar[rd]^{f} \ar[d]_{\pr} &  \\
            X/R \ar[r]_{\hat{f}} &  S                        
  }
\end{xy}
\]
Note that, for any $x, y : X$, we have
\begin{equation}
R(x,y) \simeq (\pr(x) = \pr(y)) \label{eq:equiv_quot_eq}
\end{equation}

The construction Voevodsky gives of the set quotient $X/R$
in terms of equivalence classes of $R$,
uses function extensionality and univalence for
propositions.

\subsubsection{Construction of colimits in $\set$}\label{smallsec:colimits_in_sets}

The goal of this section is a solution to the following problem:
\begin{problem}[Colimits in $\set$]\label{problem:colimits-in-sets}
 Given a graph $G$ and a diagram $d$ of shape $G$ in $\set$,
 to construct the colimit of $d$.
\end{problem}

\begin{construction}[Solution to Problem~\ref{problem:colimits-in-sets}]
\label{constr:colimits-in-sets}
The tip of the colimit of $d$ is given by
  \[ C:= \left(\sm{v:G}\dob_d(v)\right)/ \sim \]
  with $\sim$ being the
  smallest equivalence relation containing the relation $\sim_0$,
  defined by
  \[(u,A) \sim_0 (v,B) \text{ iff } \exists e \in\edge_G(u,v) \text{ with } \dmor_d(e)(A) = B \enspace . \]
  The colimiting cocone under $C$ is given by composing the projection
  $\pr$ with the injection maps
  $\C\bigl(d(u),\sm{v:G}\dob_d(v)\bigr)$. The fact that the family of
  maps thus obtained constitutes a cocone makes use of the  
  equivalence of \eqref{eq:equiv_quot_eq}. The (unique) map to
  any cocone is obtained by the universal property of the set quotient.
  Showing uniqueness of that map makes use of
  the fact that the projection is surjective, and hence an epimorphism
  in the category of sets.
\end{construction}

Note that in the above formula, we use the truncated existential
$\exists$ instead of the proof-relevant $\Sigma$. This is necessary in
order to give $\sim_0$ the target type $\prop$, and hence to apply the
construction of quotients described in
Section~\ref{sec:quotients_in_unimath}.

Note also that for the above construction to be correct, we need 
the type of vertices of $G$ to be small.
In the present work, we are ultimately interested in colimits of chains,
that is, of diagrams where the set of vertices is given by the set of 
natural numbers---a small set.

In order to construct the smallest equivalence relation containing a relation
  $R_0$, we need to close $R_0$ under
reflexivity, symmetry and transitivity:

\begin{definition}\label{def:eq_closure}
  Let $R_0 : X \to X \to \prop$ be a relation on a type $X$. Its
  closure is defined to be the relation $x\sim y$ given by
  \[ x \sim y := \prd{R : \eqrel(X)} (R_0 \subseteq R) \to R(x,y)  \]
\end{definition}
Here, we denote by $R_0 \subseteq R$ that $R_0(x,y)$ implies $R(x,y)$
for any $x,y : X$.
Note that this definition requires impredicativity for h-levels:
the fact that $x\sim y : \prop$ is a consequence of $R(x,y)$
being a proposition for any equivalence relation $R$.
We do not worry about the universe level of the 
relation $\sim$.

\begin{lemma}
  The relation defined in Definition~\ref{def:eq_closure} is the smallest
  equivalence relation containing $R_0$.
\end{lemma}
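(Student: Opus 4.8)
The plan is to verify the three defining properties of \enquote{smallest equivalence relation containing $R_0$} in turn: that $\sim$ is an equivalence relation, that $R_0 \subseteq {\sim}$, and that $\sim$ is contained in every equivalence relation containing $R_0$. All three are obtained by unfolding Definition~\ref{def:eq_closure}, recalling that $x\sim y$ means precisely: for every $R : \eqrel(X)$ with $R_0\subseteq R$, we have $R(x,y)$.

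First I would record that $\sim$ is proposition-valued: for fixed $x,y$, the type $\prd{R:\eqrel(X)}(R_0\subseteq R)\to R(x,y)$ is a $\Pi$-type whose codomain $R(x,y)$ is a proposition, since each $R$ takes values in $\prop$; hence the $\Pi$-type is itself a proposition. This is exactly the impredicativity-for-h-levels point flagged after the definition, and it is what makes $\sim$ an admissible input to the set-quotient construction of Section~\ref{sec:quotients_in_unimath}. I would not worry about the universe level of $\sim$.

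Next, reflexivity, symmetry and transitivity of $\sim$: each is proved by fixing an arbitrary $R : \eqrel(X)$ with $R_0\subseteq R$, extracting from the hypotheses the relevant instances of $R$, and invoking the corresponding property of $R$. Concretely, $x\sim x$ because $R(x,x)$ holds by reflexivity of $R$; if $x\sim y$ then $R(x,y)$ for every such $R$, so $R(y,x)$ by symmetry of $R$, whence $y\sim x$; and if $x\sim y$ and $y\sim z$ then $R(x,y)$ and $R(y,z)$, so $R(x,z)$ by transitivity of $R$, whence $x\sim z$. For containment, given $R_0(x,y)$ and an arbitrary $R$ with $R_0\subseteq R$, the hypothesis $R_0\subseteq R$ yields $R(x,y)$ directly, so $x\sim y$; hence $R_0\subseteq{\sim}$. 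For minimality, given any $R : \eqrel(X)$ with $R_0\subseteq R$ and any $x,y$ with $x\sim y$, instantiating the universal quantifier in the definition of $\sim$ at this very $R$ (with its proof of $R_0\subseteq R$) produces $R(x,y)$; hence ${\sim}\subseteq R$.

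I do not expect a genuine obstacle: the argument is a direct unfolding with no calculation. The only step requiring a moment's care is the one already isolated above, namely that quantifying over all equivalence relations does not raise the h-level, so that $\sim$ really lands in $\prop$; in the formalization this is precisely where impredicativity of propositions (a consequence of univalence) enters.
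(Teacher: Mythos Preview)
Your proof is correct and follows the same idea as the paper's. The paper phrases the equivalence-relation part more conceptually (``equivalence relations are closed under arbitrary intersections''), while you unfold that statement into separate checks of reflexivity, symmetry and transitivity, but the underlying argument is identical; the paper's proof of minimality and containment of $R_0$ is likewise ``direct by the impredicative definition'', exactly as you spell out.
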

\begin{proof}
  Minimality is direct by the impredicative definition; $\sim$ is
  itself an equivalence relation because equivalence relations are
  closed under arbitrary intersections.
\end{proof}

\subsection{Functors preserving colimits}
\label{sec:inductive_sets}

In this section, we prove results on functors preserving colimits, in
particular colimits of chains. The first is a classical result about
preservation of colimits by left adjoints~\cite[p.~119]{maclane}.

\begin{lemma}
  \label{lemma:left_adjoint_preserves_colimits}
  If $F : \C \to \D$ is a left adjoint with right adjoint $G : \D \to
  \C$, then it preserves colimits.
\end{lemma}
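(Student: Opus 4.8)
The plan is to reduce the universal property of the colimit of the image diagram $Fd$ to that of $d$, by transporting along the natural hom-set bijection that the adjunction supplies. Write $\phi_{X,Y} : \D(FX,Y) \simeq \C(X,GY)$ for this family of equivalences (constructed from the unit and counit of $F \dashv G$ in the usual way), natural in both $X$ and $Y$. Let $d$ be a diagram of shape $G$ in $\C$ and let $a \in \Cocone(d,C)$ be a colimiting cocone with tip $C$; we must show that $Fa \in \Cocone(Fd, FC)$ is colimiting, i.e., that for every cocone $b \in \Cocone(Fd, D')$ the type of morphisms $f : \D(FC, D')$ with $\compose{Fa(v)}{f} = b(v)$ for all $v : G$ is contractible.

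First I would show that $\phi$ transports cocones: for $b \in \Cocone(Fd, D')$, the family $v \mapsto \phi_{\dob_d(v), D'}(b(v)) : \C(\dob_d(v), GD')$ is a cocone $b^{\sharp} \in \Cocone(d, GD')$, the required equalities $\compose{\dmor_d(e)}{b^{\sharp}(v)} = b^{\sharp}(u)$ following from those for $b$ together with naturality of $\phi$ in its first argument. Since each $\phi_{\dob_d(v),D'}$ is an equivalence, $b \mapsto b^{\sharp}$ is an equivalence $\Cocone(Fd, D') \simeq \Cocone(d, GD')$. Next, for a fixed $b$, I would exhibit an equivalence between the type of $f : \D(FC, D')$ with $\compose{Fa(v)}{f} = b(v)$ for all $v$ and the type of $g : \C(C, GD')$ with $\compose{a(v)}{g} = b^{\sharp}(v)$ for all $v$; this is again induced by $\phi_{C,D'}$, the point being that $\phi_{C,D'}(\compose{Fa(v)}{f}) = \compose{a(v)}{\phi_{C,D'}(f)}$ and that $\phi_{\dob_d(v),D'}(b(v)) = b^{\sharp}(v)$ by definition, so that the predicate cut out on $f$ matches the predicate cut out on $\phi_{C,D'}(f)$. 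Since $a$ is colimiting, the latter type is contractible; contractibility transports along the equivalence to the former, which finishes the proof.

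The main obstacle is the bookkeeping in the second equivalence: one must check that applying $\phi$ really does carry the equation $\compose{Fa(v)}{f} = b(v)$ to the equation $\compose{a(v)}{\phi_{C,D'}(f)} = b^{\sharp}(v)$, which is where the naturality of the hom-set bijection in its first variable --- equivalently, a triangle identity of the adjunction --- is used; assembling the equivalence of $\Sigma$-types from the resulting pointwise equivalences and transporting contractibility are then routine. An alternative, more point-free route avoiding some of this computation is to combine the natural isomorphism $\D(FC,-) \cong \C(C,G-)$ with the reformulation of ``$C$ is the colimit of $d$'' as a natural isomorphism $\C(C,-) \cong \Cocone(d,-)$, obtaining $\D(FC,-) \cong \Cocone(Fd,-)$ directly; which of the two is more convenient depends on how adjunctions and the universal property of colimits are packaged in the library.
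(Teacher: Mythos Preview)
Your proposal is correct and follows essentially the same approach as the paper: both transport the contractibility statement along the hom-set bijection $\phi$ of the adjunction, using naturality of $\phi$ in the first variable to turn $\compose{Fa(v)}{f} = b(v)$ into $\compose{a(v)}{\phi(f)} = \phi(b(v))$, and then invoke the colimiting property of $a$ against the transported cocone $(\phi(b(v)))$ with tip $GD'$. The paper presents this as a single displayed chain of equivalences ending in the unit type, while you separate out the cocone transport $b \mapsto b^{\sharp}$ first, but the content is the same.
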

\begin{proof}
  Call $\phi$ the (natural) family of isomorphisms $\phi_{C,D} : \D(FC,D) \simeq
  \C(C,GD)$ of the adjunction. We omit the subscripts in what follows.
  Given a colimiting cocone $(a_i)_{i : I}$ with tip $L$ for some diagram $d$, we
  need to show that the right-hand cocone is colimiting for the
  diagram $Fd$.
  \[
     \begin{xy}
       \xymatrix{
        A_i \ar[rr]^{f} \ar[rd]_{a_i} & & A_j \ar[ld]^{a_j} \\
             &L&
       }
     \end{xy}
 \hspace{1cm}
 \begin{xy}
       \xymatrix{
        FA_i \ar[rr]^{Ff} \ar[rd]_{Fa_i} & & FA_j \ar[ld]^{Fa_j} \\
             &FL&
       }
     \end{xy}
  \]
   We hence need to show that, for
  any cocone $(e_i)_i$ under $Fd$ with tip $M$, the type $
  \sum_{x:\D(FL, M)} \prd{i:I}\compose{Fa_i}{x}=e_i$ is contractible. We
  show that it is equivalent to a contractible one, and hence
  contractible itself:
  \begin{align*}
   \sum_{x:\D(FL, M)} \prd{i:I}\compose{Fa_i}{x}=e_i \enspace
   &\simeq \enspace \sum_{y:\C(L, GM)} \prd{i:I}\compose{Fa_i}{\inv{\phi}(y)}=e_i \\
   &\simeq \enspace \sum_{y:\C(L, GM)} \prd{i:I} \inv{\phi}(\compose{a_i}{y})=e_i \\
   &\simeq \enspace \sum_{y:\C(L, GM)} \prd{i:I} \compose{a_i}{y}= \phi(e_i) \\
   &\simeq \enspace 1
  \end{align*}
  The last equivalence is given by hypothesis for the cocone
  $\bigl(\phi(e_i)\bigr)_i$ with tip $GM$:
  \[
     \begin{xy}
       \xymatrix@R=5pc{
        A_i \ar[rr]^{f} \ar[rd]_{a_i} \ar@{-->}[rrrd]_>>>>>>>>>>{\phi(e_i)}|!{[dr];[rr]}\hole & & A_j \ar[ld]^>>>>{a_j} \ar@{-->}[rd]^{\phi(e_j)}\\
             &L \ar@{.>}[rr]_{y}& & GM
       }
     \end{xy}
 \]
\end{proof}

In what follows we write $C^I$ for the $I$-indexed product category of
a category $C$.

\begin{lemma}[Examples of preservation of colimits]\label{lem:ex-pres-colim}\hfill
  \begin{enumerate}
    \item\label{pres-colim-id}
      The identity functor preserves colimits.
    \item\label{pres-colim-const}
      Any constant functor $\constfunctor{d} : \C \to \D$ preserves colimits of chains.
    \item\label{pres-colim-delta}
      If $\C$ has specified products, the diagonal functor
      $\Delta : \C \to \C^I$ mapping an object $X$ to the constant $I$-indexed family $\langle X \rangle_{i : I}$
      preserves colimits.
    \item\label{pres-colim-plus} 
      If $\C$ has specified coproducts, the functor $\amalg : \C^I \to \C$,
      mapping $I$-indexed families of $X_i$ to their coproduct, preserves colimits.
    \end{enumerate}
  \end{lemma}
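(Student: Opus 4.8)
The plan is to prove the four parts separately, reducing the last two to Lemma~\ref{lemma:left_adjoint_preserves_colimits}. Part~\ref{pres-colim-id} is pure definitional unfolding: applying $\Id{\C}$ to a diagram, to a cocone, or to the predicate $\iscolimitingCocone$ returns the same data, so a colimiting cocone is sent to itself and there is nothing to do beyond unfolding definitions.

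For part~\ref{pres-colim-const}, fix a chain $A_0 \to A_1 \to \cdots$ with colimiting cocone $(a_n)_n$ and tip $L$. The functor $\constfunctor{d}$ turns this into the constant chain $d \to d \to \cdots$ with cocone $(1_d)_n$ and tip $d$, and I would show directly that $(1_d)_n$ is colimiting. Given any cocone $(e_n)_n$ under the constant chain with tip $M$, the cocone equation for the edge from $n$ to $n+1$ reads $\compose{1_d}{e_{n+1}} = e_n$, hence $e_{n+1} = e_n$, so by induction on $\nat$ every $e_n$ equals $e_0$. Then $e_0 : \C(d,M)$ is a mediating morphism, and it is the unique one, since any $x$ with $\compose{1_d}{x} = e_0$ already equals $e_0$; this exhibits the relevant $\Sigma$-type as contractible. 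The essential point — and the reason the statement concerns colimits of chains rather than arbitrary colimits — is that the chain graph is nonempty and connected, every vertex being reached from $0$; over an empty or disconnected shape a constant functor generally fails to preserve colimits, since it would send, for instance, the colimit of the empty diagram, an initial object, to $d$.

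For parts~\ref{pres-colim-delta} and \ref{pres-colim-plus}, I would exhibit each functor as a left adjoint and invoke Lemma~\ref{lemma:left_adjoint_preserves_colimits}. In part~\ref{pres-colim-delta}, the specified $I$-indexed products of $\C$ assemble into a functor $\prod : \C^I \to \C$, and the natural isomorphisms $\C^I(\Delta X, \langle Y_i\rangle_{i}) \simeq \prod_{i}\C(X,Y_i) \simeq \C(X, \prod_{i}Y_i)$ exhibit $\Delta \dashv \prod$; the hypothesis that $\C$ has specified products is exactly what makes this right adjoint available. In part~\ref{pres-colim-plus}, symmetrically, $\C(\amalg\langle X_i\rangle_{i}, Y) \simeq \prod_{i}\C(X_i,Y) \simeq \C^I(\langle X_i\rangle_{i}, \Delta Y)$ exhibits $\amalg \dashv \Delta$, with $\Delta$ always available, so $\amalg$ preserves colimits; here specified coproducts are needed merely to define $\amalg$ as a functor. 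Alternatively, part~\ref{pres-colim-delta} can be obtained directly from the componentwise description of colimits in $\C^I$, in analogy with Construction~\ref{constr:colims_in_functor_cat}.

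I expect the only part requiring genuine, if modest, work to be part~\ref{pres-colim-const}: one must carry out the induction showing that an arbitrary cocone under the constant chain is itself constant, and be careful that this relies essentially on connectedness of the chain graph. Parts~\ref{pres-colim-delta} and \ref{pres-colim-plus} collapse to routine naturality checks for the adjunction isomorphisms once Lemma~\ref{lemma:left_adjoint_preserves_colimits} is available, and part~\ref{pres-colim-id} is immediate.
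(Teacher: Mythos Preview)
Your proposal is correct and follows essentially the same approach as the paper: parts~\ref{pres-colim-id} and~\ref{pres-colim-const} are handled directly, while parts~\ref{pres-colim-delta} and~\ref{pres-colim-plus} are reduced to Lemma~\ref{lemma:left_adjoint_preserves_colimits} via the adjunction chain $\amalg \dashv \Delta \dashv \prod$. Your detailed treatment of part~\ref{pres-colim-const}, including the explanation of why chains (rather than arbitrary shapes) are required, matches the paper's remark immediately following its proof.
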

\begin{proof}
  The points \ref{pres-colim-id} and \ref{pres-colim-const} are direct.
  The other two points follow by
  Lemma~\ref{lemma:left_adjoint_preserves_colimits}. Indeed, under the
  assumptions specified in each case we have adjunctions:
  \[
  \amalg \dashv \Delta \dashv \Pi
  \]
  
  where $\Pi : \C^I \to \C$ is the functor that maps $I$-indexed
  families of $X_i$ to their product.
\end{proof}

Note that point \ref{pres-colim-const} is only stated for chains, that
is because it is in general not true that constant functors preserve
colimits.

The next results state that various functors preserve cocontinuity of
all kinds. By this, we mean that if the input functors preserve
colimits of shape $G$ for a graph $G$, then so does the output
functor, in particular, this yields preservation of
$\omega$-cocontinuity (which does not follow from preservation of
cocontinuity).
\begin{lemma}[Examples of preservation of cocontinuity]\label{lem:ex-pres-cocont}\hfill
  \begin{enumerate}
    \item\label{pres-colim-comp}
      The composition of two functors preserves colimits of a certain
      kind, if the input functors do.

    \item\label{pres-colim-family}
      Given a family of functors $F_i : \C \to \D$ indexed by $i : I$,
      where $I$ has decidable equality. If all the $F_i$ preserve
      colimits of a certain kind, then the functor
      $\langle F_i \rangle_{i : I}: \C^I \to \D^I$ preserves colimits
      of that kind.
    \end{enumerate}
  \end{lemma}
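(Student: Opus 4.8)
The plan is to handle the two items separately; in both cases the image of a colimiting cocone is shown to be colimiting by feeding it into the preservation hypotheses on the input functor(s), once the relevant mapped diagrams and cocones have been identified.

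For item~\ref{pres-colim-comp}, let $F:\C\to\D$ and $G:\D\to\E$ both preserve colimits of a given shape, and let $a$ be a colimiting cocone with tip $L$ under a diagram $d$ of that shape in $\C$. Applying the hypothesis on $F$, the cocone $Fa$ is colimiting under $Fd$ with tip $FL$; applying the hypothesis on $G$ to this, $G(Fa)$ is colimiting under $G(Fd)$ with tip $G(FL)$. It then only remains to note that mapping a diagram, a cocone, or an object along the composite $G\hcomp F$ agrees (judgmentally, since functor composition computes on objects and morphisms) with first mapping along $F$ and then along $G$. So this item amounts to chaining the two hypotheses.

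For item~\ref{pres-colim-family}, the crucial fact is that colimits in the product category $\C^I$ are computed index by index, in the same spirit as the pointwise construction of colimits in functor categories (Construction~\ref{constr:colims_in_functor_cat}): a cocone $a$ with tip $L$ under a diagram $d$ in $\C^I$ is colimiting if and only if, for every $i:I$, its $i$-th component is a colimiting cocone with tip $L_i$ under the $i$-th component diagram in $\C$. The \enquote{if} direction falls out of unfolding the universal property through the description $\C^I(X,Y)=\prd{i:I}\C(X_i,Y_i)$ and using function extensionality. For the \enquote{only if} direction --- the one actually needed here --- fix $i:I$ and a cocone $(N,\epsilon)$ for the $i$-th component diagram in $\C$; build a cocone in $\C^I$ with tip the family $M$ that equals $N$ at index $i$ and equals $L_j$ at every other index $j$, carrying the arrows $\epsilon$ at index $i$ and the components of $a$ elsewhere. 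The colimiting property of $a$ in $\C^I$ yields a unique morphism $L\to M$ in $\C^I$, whose $i$-th component is the desired mediating morphism in $\C$; uniqueness transfers back by replacing the $i$-th component of a competing morphism and appealing again to uniqueness in $\C^I$. Defining the family $M$ (and performing this replacement) by case distinction on whether $j=i$ is precisely where decidability of equality on $I$ enters. Granting this componentwise characterisation, the lemma is immediate: if $a$ is colimiting in $\C^I$, each of its components is colimiting in $\C$, hence each $F_i$ applied to the $i$-th component is colimiting in $\D$ by the preservation hypothesis on $F_i$, and therefore $\langle F_i\rangle_{i:I}(a)$ --- whose $i$-th component is exactly that --- is colimiting componentwise, hence colimiting in $\D^I$.

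I expect the main obstacle to be the \enquote{only if} half of the componentwise characterisation in item~\ref{pres-colim-family}: since $I$ carries only a decidable-equality witness rather than a judgmental recognition of reflexivity, the family $M$ defined by case analysis is merely \emph{propositionally} equal to $N$ at index $i$, so the construction is threaded with transports along the resulting paths, making the bookkeeping delicate even though the mathematical content is routine. Item~\ref{pres-colim-comp}, by contrast, should be essentially a one-liner once the functoriality of mapping diagrams and cocones along a functor is in place.
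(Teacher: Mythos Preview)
Your proposal is correct and follows essentially the same approach as the paper. The paper phrases the key step of item~\ref{pres-colim-family} as \enquote{the projection functors preserve colimits} rather than as the \enquote{only if} half of a componentwise characterisation, but the construction is identical---replace the $i$-th component of the tip using decidable equality on $I$---and the paper likewise flags the transport bookkeeping as the cumbersome part of the formalisation.
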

\begin{proof}
  The first point is direct. For~\ref{pres-colim-family} we sketch the
  binary case.

  We first prove that the projection functors preserves colimits. For
  the first projection, $\pi_1 : \C^2 \to \C$, we are by assumption
  given a colimiting cocone
  
  \[
   \begin{xy}
       \xymatrix{
        (A_i,B_i) \ar[rr]^{} \ar[rd]_{(a_i,b_i)} & & (A_j,B_j) \ar[ld]^{(a_j,b_j)} \\
             &(L,M)&
       }
     \end{xy}
  \]

  and need to show that the cocone $(a_i)$ with tip $L$ is colimiting. Given a
  cocone $(a'_i)$ with tip $X$ this can be illustrated constructing the
  map $f$ in:
  \[
 \begin{xy}
  \xymatrix@R=5pc{
                 A_i \ar[rr]^{} \ar[rd]_{a_i}   \ar@{-->}[rrrd]_>>>>>>>>>>>>>>>{a'_i}|!{[dr];[rr]}\hole
                           &  &  A_j \ar[ld]_>>>>>>>{a_j} \ar@{-->}[rd]^{a'_j}                           
                           \\
                             & L \ar@{.>}[rr]_{\exists ! f}& & X
  }
 \end{xy}
\]
  From the cocone $(a'_i)$ we can form 
  \[
   \begin{xy}
       \xymatrix{
        (A_i,B_i) \ar[rr]^{} \ar[rd]_{(a'_i,b_i)} & & (A_j,B_j) \ar[ld]^{(a'_j,b_j)} \\
             &(X,M)&
       }
     \end{xy}
  \]
 
  and by assumption obtain a unique map from $(L,M)$ to $(X,M)$. This
  then gives us the desired map $f : L \to X$.

  The proof that the second projection functor preserves colimits is
  analogous. For this we need to construct a cocone over $(L,X)$ from
  a cocone $(b'_i)$ with $b'_i : B_i \to X$ instead.  Decidable
  equality on $I$ is needed for the general case for proving that
  $\pi_i : \C^I \to \C$ preserves colimits. Indeed, we need to be able
  to decide equality on the indices to construct the cocone whose tip
  contains $X$ at index $i$.
  
  Using that the projection functors preserve colimits it is direct to
  show that $\langle F_1,F_2 \rangle : \C^2 \to \D^2$ preserves
  colimits of a certain kind if $F_1$ and $F_2$ do so. Given a
  colimiting cocone $(A_i,B_i)$ with tip $(L,M)$ we obtain the
  colimiting cocones $(A_i)$ with tip $L$ and $(B_i)$ with tip $M$ by
  the above proofs. As $F_1$ and $F_2$ preserve colimits we get that
  $(F_1L,F_2M)$ is the colimit of $(F_1A_i,F_2B_i)$.
\end{proof}

It was quite cumbersome to formalize the proof of
point~\ref{pres-colim-family} above as we needed to define cocones
where the type of the tips depends on the decidable equality of
$I$. The interested reader may consult the formalization for details.

Using what we have defined so far we can define the coproduct of an
$I$-indexed family of functors $F_i : \C \to \D$ by:
\[
\bigoplus_{i : I} F_i = \amalg \circ \langle F_i \rangle_{i : I} \circ \Delta
\]

On an object $X$ this functor acts by:
\[
  X \mapsto \langle X\rangle_{i : I}
    \mapsto \langle F_i X \rangle_{i : I}
    \mapsto \underset{i : I}{\coprod}~F_i X
\]

Being the composition of ($\omega$-)cocontinuous functors this is also
($\omega$-)cocontinuous.

We now turn our attention to the binary version of the product
functor, which we denote by $\times : \C^2 \rightarrow \C$. In order
to show that this functor is $\omega$-cocontinuous we need more
structure on the category $\C$.

\begin{definition}[Exponentials]\label{def:exponentials}
  Let $\C$ have specified binary products. 
  An exponential structure for $\C$ is, 
  for any $A : \C_0$, 
  a right adjoint for 
  the functor $A\times{-}$ given
  on objects by \( X \mapsto A \times X\).
  Given an exponential structure on $\C$, we 
  denote the right adjoint of $A\times{-}$
  by $({-})^A$. That is, on objects it acts as $B \mapsto B^A$.
\end{definition}
\begin{example}
  The exponential structure on the category $\Set$ is given, for the functor $A\times{-}$,
  by the functor given on object $B$ by  $B^A = A \to B$.
\end{example}
The functor ${-}\times A$ is defined analogously for each
$A:\C_0$. The functors $A \times{-}$ and ${-}\times A$ are naturally
isomorphic, so if one of them has a right adjoint the other does as
well. Hence the choice of which argument is fixed in
Definition~\ref{def:exponentials} is not crucial. The following
lemma is another instance of
Lemma~\ref{lemma:left_adjoint_preserves_colimits}:
\begin{lemma}
  Let $\C$ have (specified) binary products and exponentials, and let $A : \C_0$.  
  The functors $A \times{-}$ and ${-}\times A$ preserves colimits. 
\end{lemma}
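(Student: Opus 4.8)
The plan is to reduce this to Lemma~\ref{lemma:left_adjoint_preserves_colimits}, exactly as the surrounding text announces. By Definition~\ref{def:exponentials}, an exponential structure on $\C$ provides, for the given object $A$, a right adjoint $({-})^A$ of the functor $A \times {-}$. Hence $A \times {-}$ is a left adjoint, and Lemma~\ref{lemma:left_adjoint_preserves_colimits} immediately yields that it preserves colimits of every shape.

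For the functor ${-} \times A$ I would proceed via the natural isomorphism $A \times {-} \cong {-} \times A$ mentioned just above the statement. Pointwise this is the commutativity isomorphism $A \times X \cong X \times A$ of the specified binary products, built from their universal property in the usual way; naturality in $X$ follows from the uniqueness clause of that universal property. Since preservation of colimits of a fixed shape is a proposition, the lemma on invariance of cocontinuity under isomorphism applies — it even suffices that such a natural isomorphism merely exist — and we conclude that ${-} \times A$ preserves colimits as well. Alternatively, one can transport the adjunction $A \times {-} \dashv ({-})^A$ along this isomorphism to exhibit $({-})^A$ as a right adjoint of ${-} \times A$ directly, and again invoke Lemma~\ref{lemma:left_adjoint_preserves_colimits}; either route works.

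I do not expect a genuine obstacle here: conceptually the argument is just \enquote{left adjoints preserve colimits} combined with \enquote{preservation of colimits is invariant under natural isomorphism}. The only point requiring care in the formalization is the bookkeeping around \emph{specified} structure — the binary products and the exponentials are supplied as chosen data rather than as mere existence — so one must thread the actual chosen product and the actual chosen right adjoint through the proof, and make sure that the symmetry isomorphism used for the ${-} \times A$ case is built from the same specified binary products.
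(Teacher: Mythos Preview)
Your proposal is correct and matches the paper's approach. The paper treats the lemma as an instance of Lemma~\ref{lemma:left_adjoint_preserves_colimits}, using the remark immediately preceding it that the natural isomorphism $A\times{-}\cong{-}\times A$ transfers the right adjoint, so that \emph{both} functors are left adjoints; this is precisely your ``alternative'' route, while your primary route via invariance of cocontinuity under isomorphism is an equally short variant.
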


Only the next result is specifically about $\omega$-cocontinuity.  A
search for existing proofs of this theorem in the literature only
revealed a sketch in an online resource \cite{metayer};
however, we have not found a precise proof of it.  Here, we give a
direct proof of this theorem.  While the proof idea is simple, writing
out all the details in the formalization is quite complicated.
Our outline here is not more detailed than the one in~\cite{metayer}, but
we have the advantage of being able to refer to the formalization for
details.

\begin{theorem}\label{theorem:times_cocont}
  Let $\C$ be a category with specified binary products such that $A\times{-}$ and ${-}\times B$ are
  $\omega$-cocontinuous for all $A,B:\C_0$. Then the functor $\times : \C^2 \to \C$ is
  $\omega$-cocontinuous.
\end{theorem}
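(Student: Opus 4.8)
The plan is to realize the chain $(A_n \times B_n)_n$ as a cofinal subdiagram of a two-dimensional grid diagram, and to compute the colimit of that grid by iterating one-dimensional colimits, feeding in the hypotheses on $A\times{-}$ and ${-}\times B$ one coordinate at a time. This is also the approach sketched in \cite{metayer}; since only $\omega$-cocontinuity is assumed of the one-variable functors, everything in sight will be a chain (or a chain of chains), which is exactly why we only obtain $\omega$-cocontinuity of $\times$.

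First I would fix a chain $d$ in $\C^2$, write $d_n = (A_n,B_n)$ with transition morphisms $(f_n,g_n)$, and assume $((a_n,b_n))_n$ is colimiting with tip $(L,M)$; applying $\times$ turns this into the cocone $(a_n\times b_n)_n$ with tip $L\times M$, and the goal is that it is colimiting for $(A_n\times B_n)_n$. A preliminary step is to note that the projection functors $\pi_1,\pi_2 : \C^2 \to \C$ preserve colimits --- this is already contained in the proof of Lemma~\ref{lem:ex-pres-cocont}, point~\ref{pres-colim-family} --- so $(a_n)_n$ is colimiting for $(A_n)_n$ with tip $L$ and $(b_n)_n$ is colimiting for $(B_n)_n$ with tip $M$.

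Next I would introduce the grid diagram $D$ with $D(m,n) := A_m\times B_n$, transition morphisms $f_m\times\mathrm{id}$ in the first coordinate and $\mathrm{id}\times g_n$ in the second, and show that $L\times M$ is a colimit of $D$. For each fixed $m$, the chain $n\mapsto A_m\times B_n$ has colimit $A_m\times M$ because $A_m\times{-}$ is $\omega$-cocontinuous (together with $(b_n)_n$ being colimiting); these colimits assemble, with induced transition morphisms $f_m\times\mathrm{id}_M$, into a chain whose colimit is $L\times M$ because ${-}\times M$ is $\omega$-cocontinuous (together with $(a_n)_n$ being colimiting). Invoking that the colimit of such a grid may be computed by taking colimits along the rows first and then along the resulting chain --- a formal interchange of colimits, which I would also have to establish in the graph/diagram formalism --- this shows $L\times M$ is a colimit of $D$. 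Finally, the diagonal $n\mapsto(n,n)$ is cofinal in the grid, since any $(m,n)$ admits a monotone path to $(k,k)$ with $k=\max(m,n)$; hence the restriction of $D$ along the diagonal, which is precisely the chain $(A_n\times B_n)_n$, has the same colimit $L\times M$, and unwinding the identifications shows the resulting colimiting cocone is pointwise $a_n\times b_n$, as required.

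The hard part is the bookkeeping rather than the mathematics, as the paper warns: I would need to set up the grid diagram and the ``colimit of a grid equals iterated colimit'' lemma over graphs and diagrams, formalize cofinality of the diagonal (a genuine strengthening of the ``shift of a chain'' observation used in Construction~\ref{thm:init_alg}), and carefully check that the cocone produced by all these manipulations really is the expected $(a_n\times b_n)_n$. This is made more delicate by binary products being implemented directly rather than as an instance of general limits, so the action of $\times$ on objects and morphisms has to be reconciled with the colimit constructions explicitly.
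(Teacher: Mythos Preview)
Your proposal is correct and follows essentially the same approach as the paper: set up the grid $A_m\times B_n$, compute its colimit as an iterated colimit using the $\omega$-cocontinuity of $A\times{-}$ and ${-}\times B$ one coordinate at a time, and then identify this with the colimit of the diagonal chain. The paper's sketch is less explicit than yours---it does not name the interchange-of-colimits step or the cofinality of the diagonal separately, bundling all of this into ``the handling of the arrows involved, something we completely omitted in this sketch''---but the underlying argument is the same, and your decomposition into these two auxiliary lemmas is a reasonable way to organize what the formalization actually has to do.
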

\begin{proof}
    Given a diagram
  \[ 
    \begin{xy}
        \xymatrix@C=4pc{
          (A_0,B_0) \ar[r]^{(f_0,g_0)} & (A_1,B_1)  \ar[r]^{(f_1,g_1)} & (A_2,B_2)  \ar[r]^{(f_2,g_2)} & \ldots
        }
    \end{xy}
  \]
  
  with colimit $(L,R)$ (we omit the cocone maps), we need to show that
  $L\times R$ is the colimit of
  \[
    \begin{xy}
        \xymatrix@C=4pc{
           A_0 \times B_0 \ar[r]^{f_0 \times g_0} & A_1 \times B_1  \ar[r]^{f_1 \times g_1} & A_2 \times B_2  \ar[r]^{f_2 \times g_2} & \ldots
        }
    \end{xy}
  \]
  To this end, we consider the grid
  \[
    \begin{xy}
        \xymatrix@C=4pc{
            (A_0,B_0) \ar[r]^{(f_0,1)} \ar[d]_{(1,g_0)}& (A_1,B_0)  \ar[r]^{(f_1,1)} \ar[d]_{(1,g_0)}& (A_2,B_0)  \ar[r]^{(f_2,1)} \ar[d] & \ldots \\
	      (A_0,B_1) \ar[r]^{(f_0,1)} \ar[d]_{(1,g_1)} & (A_1,B_1)  \ar[r]^{(f_1,1)} \ar[d]& (A_2,B_1)  \ar[r]^{(f_2,1)} \ar[d]& \ldots \\
	         \vdots &  \vdots & \vdots & \vdots
         }
    \end{xy}
  \]
  The idea is to first take the colimit in each column, and then to
  take the colimit of the chain of colimits thus obtained.  In
  slightly more detail, by hypothesis, the colimit of the $i$th column
  is given by $A_i \times R$.  This gives rise to a chain $A_{i}
  \times R \to A_{i+1}\times R$, the limit of which is given by
  $L\times R$.  The difficult part of the proof is actually the
  handling of the arrows involved, something we completely omitted in
  this sketch.
\end{proof}
Note that if $\C$ has exponentials then the conditions of the lemma are
fulfilled. Hence it applies in particular to $\set$, or any other
cartesian closed category.

Using what we have defined so far, it is possible to construct many
datatypes, for example lists or binary trees over sets. 

\begin{example}[Lists of sets]\label{ex:lists}

Lists over a set $A$ can be defined as the initial algebra of the
following endofunctor on $\Set$ (using our notation for constant functors):
\[
L_A = \constfunctor 1 + \constfunctor A \times \Id{\Set}
\]

which, when evaluated at a set $X$, is $L_A(X) = 1 + A \times X$.
In \UniMath this is written as:

\begin{coqdoccode}
\coqdocemptyline
\coqdocnoindent
\coqdockw{Definition} \coqdef{UniMath.CategoryTheory.Inductives.Lists.L A}{L\_A}{\coqdocdefinition{L\_A}} : \coqref{UniMath.CategoryTheory.CocontFunctors.omega cocont functor}{\coqdocdefinition{omega\_cocont\_functor}} \coqref{UniMath.CategoryTheory.category hset.HSET}{\coqdocabbreviation{HSET}} \coqref{UniMath.CategoryTheory.category hset.HSET}{\coqdocabbreviation{HSET}} := \coqref{UniMath.CategoryTheory.CocontFunctors.:cocont functor hset scope:'''' x}{\coqdocnotation{'}}1 \coqref{UniMath.CategoryTheory.CocontFunctors.:cocont functor hset scope:x '+' x}{\coqdocnotation{+}} \coqref{UniMath.CategoryTheory.CocontFunctors.:cocont functor hset scope:'''' x}{\coqdocnotation{'}}\coqdocvariable{A} \coqref{UniMath.CategoryTheory.CocontFunctors.:cocont functor hset scope:x '*' x}{\coqdocnotation{\ensuremath{\times}}} \coqref{UniMath.CategoryTheory.CocontFunctors.:cocont functor hset scope:'Id'}{\coqdocnotation{Id}}.\coqdoceol
\coqdocemptyline
\end{coqdoccode}

Here \verb+HSET+ is the category $\Set$. This definition directly
produces an $\omega$-cocontinuous functor by exploiting the \Coq
notation mechanism and the packaging of functors with a proof that they
are $\omega$-cocontinuous.

By Construction~\ref{thm:init_alg} this has an initial algebra
consisting of $\mu L_A : \Set$ (representing lists of $A$) and a morphism
$\alpha : L_A(\mu L_A) \to \mu L_A$. 
If we expand the type of the morphism we
get
\[
\alpha : 1 + A \times \mu L_A \to \mu L_A
\]
and by precomposing with the injection maps into the coproduct we obtain:
\begin{align*}
\texttt{nil\_map} & : 1 \to \mu L_A\\
\texttt{cons\_map} & : A \times \mu L_A \to \mu L_A
\end{align*}
We write \texttt{nil} for \texttt{nil\_map tt} of type $\mu L_A$ (here
\texttt{tt} denotes the canonical element of the terminal set $1$)
and \texttt{cons} for the curried version of \texttt{cons\_map} whose
type is $A \to \mu L_A \to \mu L_A$.
As their names indicate, they correspond to the standard constructors
for lists where \texttt{nil} is the empty list and \texttt{cons} adds
an element to the front of a list.

Given a set $X$, an element $x : X$ and a function
$f : A \times X \to X$ we can construct another $L$-algebra by
$(X,[\lambda \_.x,f])$ where $[\lambda \_.x,f]$ is the coproduct of
the constant map to $x$ with $f$ and hence of type
$1 + A \times X \to X$. By initiality of $(\mu L_A,\alpha)$ we get an
$L$-algebra morphism $\texttt{foldr} : \mu L_A \to X$ satisfying:
\[
  \begin{xy}
   \xymatrix{
                  **[l]1 + A \times \mu L_A  \ar[r]^{\alpha} \ar[d]_{L_A(\texttt{foldr})} &   **[r]\mu L_A \ar[d]^{\texttt{foldr}}\\
                  **[l]1 + A \times X  \ar[r]^{[\lambda \_.x,f]}                 &   **[r]X \\
     }
  \end{xy}
\]

By precomposing with the injection maps this commutative diagram gives
us the equations:
\begin{align*}
\texttt{foldr nil} &= x\\
\texttt{foldr}~(\texttt{cons}~y~ys) &= f~(y,\texttt{foldr}~ys)
\end{align*}
These are the usual computation rules (modulo currying and implicit
arguments) of the \texttt{foldr} function as defined, for example, in
\Haskell. Hence this defines a recursion principle. We can also obtain
an induction principle:

\begin{coqdoccode}
\coqdocemptyline
\coqdocnoindent
\coqdockw{Lemma} \coqdef{UniMath.CategoryTheory.Inductives.Lists.listIndhProp}{listIndhProp}{\coqdoclemma{listIndhProp}} (\coqdocvar{P} : \coqref{UniMath.CategoryTheory.Inductives.Lists.List}{\coqdocdefinition{List}} \coqref{UniMath.Foundations.Basics.Preamble.:type scope:x 'xE2x86x92' x}{\coqdocnotation{→}} \coqref{UniMath.Foundations.Basics.Propositions.hProp}{\coqdocdefinition{hProp}}) :\coqdoceol
\coqdocindent{1.00em}
\coqdocvariable{P} \coqref{UniMath.CategoryTheory.Inductives.Lists.nil}{\coqdocdefinition{nil}} \coqref{UniMath.Foundations.Basics.Preamble.:type scope:x 'xE2x86x92' x}{\coqdocnotation{→}} \coqref{UniMath.Foundations.Basics.Preamble.:type scope:x 'xE2x86x92' x}{(}\coqref{UniMath.Foundations.Basics.Preamble.:type scope:'xCExA0' x '..' x ',' x}{\coqdocnotation{\ensuremath{\Pi}}} \coqdocvar{a} \coqdocvar{l}\coqref{UniMath.Foundations.Basics.Preamble.:type scope:'xCExA0' x '..' x ',' x}{\coqdocnotation{,}} \coqdocvariable{P} \coqdocvariable{l} \coqref{UniMath.Foundations.Basics.Preamble.:type scope:x 'xE2x86x92' x}{\coqdocnotation{→}} \coqdocvariable{P} (\coqref{UniMath.CategoryTheory.Inductives.Lists.cons}{\coqdocdefinition{cons}} \coqdocvariable{a} \coqdocvariable{l})\coqref{UniMath.Foundations.Basics.Preamble.:type scope:x 'xE2x86x92' x}{)} \coqref{UniMath.Foundations.Basics.Preamble.:type scope:x 'xE2x86x92' x}{\coqdocnotation{→}} \coqref{UniMath.Foundations.Basics.Preamble.:type scope:'xCExA0' x '..' x ',' x}{\coqdocnotation{\ensuremath{\Pi}}} \coqdocvar{l}\coqref{UniMath.Foundations.Basics.Preamble.:type scope:'xCExA0' x '..' x ',' x}{\coqdocnotation{,}} \coqdocvariable{P} \coqdocvariable{l}.\coqdoceol
\coqdocemptyline
\end{coqdoccode}

Using all of this we can define standard functions on these lists, for
example \texttt{map} and \texttt{length}, and prove some of their properties:

\begin{coqdoccode}
\coqdocemptyline
\coqdocnoindent
\coqdockw{Definition} \coqdef{UniMath.CategoryTheory.Inductives.Lists.length}{length}{\coqdocdefinition{length}} : \coqref{UniMath.CategoryTheory.Inductives.Lists.List}{\coqdocdefinition{List}} \coqexternalref{:type scope:x '->' x}{http://coq.inria.fr/distrib/8.5pl3/stdlib/Coq.Init.Logic}{\coqdocnotation{\ensuremath{\rightarrow}}} \coqexternalref{nat}{http://coq.inria.fr/distrib/8.5pl3/stdlib/Coq.Init.Datatypes}{\coqdocinductive{nat}} := \coqref{UniMath.CategoryTheory.Inductives.Lists.foldr}{\coqdocdefinition{foldr}} \coqref{UniMath.CategoryTheory.category hset.natHSET}{\coqdocdefinition{natHSET}} 0 (\coqref{UniMath.Foundations.Basics.Preamble.::'xCExBB' x '..' x ',' x}{\coqdocnotation{\ensuremath{\lambda}}} \coqdocvar{\_} (\coqdocvar{n} : \coqexternalref{nat}{http://coq.inria.fr/distrib/8.5pl3/stdlib/Coq.Init.Datatypes}{\coqdocinductive{nat}}\coqref{UniMath.Foundations.Basics.Preamble.::'xCExBB' x '..' x ',' x}{\coqdocnotation{),}} 1 \coqexternalref{:nat scope:x '+' x}{http://coq.inria.fr/distrib/8.5pl3/stdlib/Coq.Init.Peano}{\coqdocnotation{+}} \coqdocvariable{n}).\coqdoceol
\coqdocemptyline
\coqdocnoindent
\coqdockw{Definition} \coqdef{UniMath.CategoryTheory.Inductives.Lists.map}{map}{\coqdocdefinition{map}} (\coqdocvar{f} : \coqref{UniMath.CategoryTheory.Inductives.Lists.lists.::'A'}{\coqdocnotation{A}} \coqexternalref{:type scope:x '->' x}{http://coq.inria.fr/distrib/8.5pl3/stdlib/Coq.Init.Logic}{\coqdocnotation{\ensuremath{\rightarrow}}} \coqref{UniMath.CategoryTheory.Inductives.Lists.lists.::'A'}{\coqdocnotation{A}}) : \coqref{UniMath.CategoryTheory.Inductives.Lists.List}{\coqdocdefinition{List}} \coqexternalref{:type scope:x '->' x}{http://coq.inria.fr/distrib/8.5pl3/stdlib/Coq.Init.Logic}{\coqdocnotation{\ensuremath{\rightarrow}}} \coqref{UniMath.CategoryTheory.Inductives.Lists.List}{\coqdocdefinition{List}} :=\coqdoceol
\coqdocindent{1.00em}
\coqref{UniMath.CategoryTheory.Inductives.Lists.foldr}{\coqdocdefinition{foldr}} \coqdocvar{\_} \coqref{UniMath.CategoryTheory.Inductives.Lists.nil}{\coqdocdefinition{nil}} (\coqref{UniMath.Foundations.Basics.Preamble.::'xCExBB' x '..' x ',' x}{\coqdocnotation{\ensuremath{\lambda}}} \coqref{UniMath.Foundations.Basics.Preamble.::'xCExBB' x '..' x ',' x}{\coqdocnotation{(}}\coqdocvar{x} : \coqref{UniMath.CategoryTheory.Inductives.Lists.lists.::'A'}{\coqdocnotation{A}}) (\coqdocvar{xs} : \coqref{UniMath.CategoryTheory.Inductives.Lists.List}{\coqdocdefinition{List}}\coqref{UniMath.Foundations.Basics.Preamble.::'xCExBB' x '..' x ',' x}{\coqdocnotation{),}} \coqref{UniMath.CategoryTheory.Inductives.Lists.cons}{\coqdocdefinition{cons}} (\coqdocvariable{f} \coqdocvariable{x}) \coqdocvariable{xs}).\coqdoceol
\coqdocemptyline
\coqdocnoindent
\coqdockw{Lemma} \coqdef{UniMath.CategoryTheory.Inductives.Lists.length map}{length\_map}{\coqdoclemma{length\_map}} (\coqdocvar{f} : \coqref{UniMath.CategoryTheory.Inductives.Lists.lists.::'A'}{\coqdocnotation{A}} \coqexternalref{:type scope:x '->' x}{http://coq.inria.fr/distrib/8.5pl3/stdlib/Coq.Init.Logic}{\coqdocnotation{\ensuremath{\rightarrow}}} \coqref{UniMath.CategoryTheory.Inductives.Lists.lists.::'A'}{\coqdocnotation{A}}) : \coqref{UniMath.Foundations.Basics.Preamble.:type scope:'xCExA0' x '..' x ',' x}{\coqdocnotation{\ensuremath{\Pi}}} \coqdocvar{xs}\coqref{UniMath.Foundations.Basics.Preamble.:type scope:'xCExA0' x '..' x ',' x}{\coqdocnotation{,}} \coqref{UniMath.CategoryTheory.Inductives.Lists.length}{\coqdocdefinition{length}} (\coqref{UniMath.CategoryTheory.Inductives.Lists.map}{\coqdocdefinition{map}} \coqdocvariable{f} \coqdocvariable{xs}) \coqref{UniMath.Foundations.Basics.Preamble.:type scope:x '=' x}{\coqdocnotation{=}} \coqref{UniMath.CategoryTheory.Inductives.Lists.length}{\coqdocdefinition{length}} \coqdocvariable{xs}.\coqdoceol
\coqdocemptyline
\end{coqdoccode}

Note that the \coqdocdefinition{foldr} function in the formalization
takes a curried function as opposed to the one above. 

The computation rules for these lists do not hold definitionally, this
make them a little cumbersome to work with as one has to rewrite with
the equations above explicitly instead of letting \Coq do the
simplifications automatically. This is discussed further in
Section~\ref{sec:conclusion}.
\end{example}

We have also defined binary trees analogously to lists as the initial
algebra of the functor that maps $X$ to $1 + A \times X \times X$. It is hence
possible to introduce various homogeneous
datatypes using what has been developed so far.

  For nested datatypes, such as the introductory example of lambda
  terms, we can just try to use $[\C,\C]$ instead of the base category
  $\C$. While this is the right solution in principle, there are some
  technical details to be addressed to make this work. This is done in
  the next section which allows us to define heterogeneous nested
  datatypes representing syntax of languages with binders.

\subsection{The datatype specified by a binding signature}\label{sec:inductive_families}

In the introduction, we showed the motivating code example of a
representation of lambda terms by the family \verb+LC+ of types that
we qualified as \emph{nested datatype}, a name due to \cite{birdmeertens}.
In general, nested datatypes are datatypes that consist of a family of
types that are indexed over all types and where the constructors of
the datatype relate different family members. The homogeneous lists
are indexed over all types, but are no nested datatype since each
\verb+list X+ can be understood individually, while \verb+LC+ has the
constructor \verb+Abs+ that relates representations of lambda terms
with different sets of free variables. Being indexed ``over all
types'' needs to be specified properly. For us, it means that the
indexing parameter of the family runs through the objects of the same
category $\C$ that serves to represent the family members. In
particular, there is no inductive definition of a suitable maximal
indexing set, such as the natural numbers to represent a countably
infinite supply of ``fresh'' variable names.

From the point of view of category theory, nested datatypes are endofunctors
on a category $\C$ that arise as fixed points (up to isomorphism)
of endofunctors on $[\C,\C]$.
In the present work, we exclusively study fixed points given by initial
algebras. We do not insist on the datatype to be truly a nested
datatype in the above sense of relating different family members through the constructors.
Nonetheless, we want to capture the general situation where
indices of family members in the arguments of datatype constructors
are calculated by an arbitrary functor $F$. As illustrated in Example~\ref{ex:sig_strength_lc},
this calculation is done by using precomposition with that functor, in the example with
$F=\option$ that represents \enquote{context extension}.
Indeed, looking at the example, we see that variable binding is indicated by a
summand in the signature functor that maps an endofunctor $X$ to $X \hcomp \option$.

So, in order to construct nested datatypes in our setting, we would
like to show that functors on functor categories of the form
$\_\hcomp F : [\B,\C] \to [\A,\C]$
(with $F:[\A,\B]$) are $\omega$-cocontinuous, i.e., preserve colimits of chains.
Ultimately, we are interested in the case where $\A = \B = \C$,
but we prove a more general theorem below.

First, we need some auxiliary results.

\begin{lemma}\label{lem:colim-invariant-iso}
  Let $G$ be a graph and $D$ be a diagram of shape $G$ in $\C$. Given two cocones 
  with tips $C$ and $C'$, respectively, such that 
  the cocone with tip $C$ is colimiting, then the cocone with tip $C'$
  is colimiting if and only if the induced morphism from $C$ to $C'$ is an isomorphism.
\end{lemma}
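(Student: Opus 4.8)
The plan is to derive both implications directly from the universal property of the given colimiting cocone, so that the lemma holds for an arbitrary (not necessarily univalent) category $\C$. Write $a$ for the colimiting cocone with tip $C$ and $a'$ for the cocone with tip $C'$, and let $f : \C(C,C')$ be the morphism supplied by the universal property of $(C,a)$ applied to the competing cocone $a'$: it is characterised by $\compose{a(v)}{f} = a'(v)$ for every $v : G$, and it depends only on $a$ being colimiting and on $a'$ being a cocone, so it is precisely the \enquote{induced morphism} of the statement. Since $\iscolimitingCocone$ packages a contractibility statement, I will use both its existence part (the centre of contraction) and its uniqueness part (the paths to the centre).

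For the direction \enquote{$f$ an isomorphism $\Rightarrow$ $a'$ colimiting}: given any cocone $b$ with tip $E$, use that $a$ is colimiting to obtain the unique $h : \C(C,E)$ with $\compose{a(v)}{h} = b(v)$, and set $g := \compose{\inv{f}}{h}$. Using $a'(v) = \compose{a(v)}{f}$ together with $\compose{f}{\inv{f}} = 1_C$ and associativity of $\compose{\_}{\_}$, one computes $\compose{a'(v)}{g} = \compose{a(v)}{h} = b(v)$, which gives a mediating morphism out of $C'$. For its uniqueness, if $g'$ also satisfies $\compose{a'(v)}{g'} = b(v)$ for all $v$, then $\compose{a(v)}{(\compose{f}{g'})} = b(v)$, so $\compose{f}{g'} = h$ by uniqueness of $h$, and hence $g' = \compose{\inv{f}}{(\compose{f}{g'})} = \compose{\inv{f}}{h} = g$, using $\compose{\inv{f}}{f} = 1_{C'}$. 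Thus $\iscolimitingCocone(D,C',a')$ holds.

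For the direction \enquote{$a'$ colimiting $\Rightarrow$ $f$ an isomorphism}: apply the universal property of the now-colimiting $a'$ to the cocone $a$, obtaining $f' : \C(C',C)$ with $\compose{a'(v)}{f'} = a(v)$ for all $v$. Then $\compose{a(v)}{(\compose{f}{f'})} = \compose{a'(v)}{f'} = a(v) = \compose{a(v)}{1_C}$, so $\compose{f}{f'}$ and $1_C$ are both endomorphisms of $C$ mediating $a$ with itself; by the uniqueness part of $a$ being colimiting (instantiated at the competing cocone $a$) they coincide, that is, $\compose{f}{f'} = 1_C$. Symmetrically, $\compose{a'(v)}{(\compose{f'}{f})} = \compose{a(v)}{f} = a'(v) = \compose{a'(v)}{1_{C'}}$ gives $\compose{f'}{f} = 1_{C'}$ by uniqueness for $a'$. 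Hence $f$ is an isomorphism with inverse $f'$.

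I expect no genuine obstacle here: conceptually this is just the standard observation that a colimiting cocone admits no nontrivial self-mediating endomorphism, used once in each direction. In the formalization the only care required is the bureaucratic one of unpacking and repacking the contractibility data and of bookkeeping the associativity and identity laws of $\compose{\_}{\_}$; morally, this lemma is the engine behind the phrase \enquote{we obtain an isomorphism between any two objects that are colimits for the same diagram} used in Construction~\ref{thm:init_alg}, and it is in the spirit of the earlier remark on uniqueness of colimits.
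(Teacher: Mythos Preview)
Your argument is correct and is exactly the standard one: transport mediating morphisms along $f$ and $\inv f$ for one direction, and use the uniqueness clause of each colimiting cocone against itself to force $\compose{f}{f'}=1_C$ and $\compose{f'}{f}=1_{C'}$ for the other. The paper states Lemma~\ref{lem:colim-invariant-iso} without proof, treating it as a routine auxiliary fact, so there is no alternative approach to compare against; your write-up would serve perfectly well as the missing justification, and your closing remark correctly identifies this lemma as the engine behind the \enquote{any two colimits are isomorphic} step in Construction~\ref{thm:init_alg}.
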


\begin{theorem}\label{thm:colimits_pointwise}
  Fix a graph $G$ and assume $\C$ has colimits of shape $G$.
  Given a diagram $D$ of shape $G$ in the functor category $[\A,\C]$
  and a cocone with tip $F$,
  then this cocone is colimiting if and only if for any object $A : \A_0$
  the ``pointwise'' cocone with tip $FA$ is colimiting for the pointwise diagram $DA$ in $\C$.
\end{theorem}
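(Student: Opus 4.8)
The plan is to reduce both directions of the stated equivalence to two facts already available: the pointwise construction of colimits in functor categories (Construction~\ref{constr:colims_in_functor_cat}, solving Problem~\ref{problem:colims_in_functor_cat}) and the invariance of colimits under isomorphism (Lemma~\ref{lem:colim-invariant-iso}), together with the elementary observation that a natural transformation is an isomorphism in $[\A,\C]$ precisely when it is objectwise an isomorphism in $\C$.

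First I would fix the diagram $D$ of shape $G$ in $[\A,\C]$ and, using the hypothesis that $\C$ has colimits of shape $G$, form via Construction~\ref{constr:colims_in_functor_cat} a colimiting cocone for $D$ with some tip $F' : [\A,\C]$, whose value at each $A : \A_0$ is \emph{by construction} a colimiting cocone for the pointwise diagram $DA$ with tip $F'A$. The key bookkeeping step, which I would spell out carefully, concerns the evaluation functor $\mathrm{ev}_A : [\A,\C] \to \C$: it sends $D$ to $DA$, it sends the given cocone with tip $F$ to the pointwise cocone with tip $FA$, and---crucially---it sends the unique morphism of cocones induced between any two cocones over $D$ to the unique morphism of cocones induced between their images. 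In particular the comparison morphism between $F$ and $F'$ coming from the universal property evaluates at $A$ to the corresponding comparison morphism between $FA$ and $F'A$.

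For the right-to-left direction, assume each pointwise cocone with tip $FA$ is colimiting for $DA$. Let $\kappa : F' \to F$ be the comparison morphism induced by the universal property of the colimiting cocone $F'$. By the compatibility above, $\kappa_A$ is the comparison morphism from the colimiting cocone $F'A$ to the cocone $FA$, which is colimiting by assumption; hence by Lemma~\ref{lem:colim-invariant-iso} each $\kappa_A$ is an isomorphism. Therefore $\kappa$ is objectwise an isomorphism, hence an isomorphism in $[\A,\C]$, and applying Lemma~\ref{lem:colim-invariant-iso} in $[\A,\C]$ (where $F'$ is colimiting) shows that the given cocone with tip $F$ is colimiting. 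For the left-to-right direction, assume the cocone with tip $F$ is colimiting; then the comparison morphism $F \to F'$ between two colimiting cocones over $D$ is an isomorphism by Lemma~\ref{lem:colim-invariant-iso}, so its value at each $A$ is an isomorphism $FA \to F'A$; since $F'A$ is colimiting for $DA$, Lemma~\ref{lem:colim-invariant-iso} again yields that the pointwise cocone with tip $FA$ is colimiting.

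The main obstacle I anticipate is precisely the bookkeeping around evaluation: establishing that evaluating at $A$ commutes with ``taking the induced comparison morphism of cocones'', i.e.\ that the morphism extracted from contractibility of the type of cocone morphisms in $[\A,\C]$ restricts, at $A$, to the morphism extracted from the analogous contractible type in $\C$. This is conceptually immediate but in a formal development requires unfolding the definitions of cocone, of morphism of cocones, and of the colimit universal property, and checking the naturality squares line up; everything else is routine once the pointwise colimit $F'$ and this compatibility are in hand.
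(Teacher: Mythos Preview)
Your proposal is correct and follows essentially the same strategy as the paper: build a pointwise colimit $F'$ via Construction~\ref{constr:colims_in_functor_cat}, compare $F$ with $F'$ using Lemma~\ref{lem:colim-invariant-iso}, and pass between isomorphisms in $[\A,\C]$ and objectwise isomorphisms in $\C$. The only minor difference is in the right-to-left direction: the paper, instead of reusing the ambient $F'$, lifts the \emph{given} pointwise colimiting cocones $FA$ themselves to obtain an $F'$ with $F'A = FA$ definitionally, so that the comparison $F' \to F$ is pointwise the identity; your uniform use of a single $F'$ is equally valid and the extra bookkeeping you flag (that evaluation at $A$ sends the induced comparison in $[\A,\C]$ to the induced comparison in $\C$) is exactly what the paper suppresses.
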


\begin{proof}

In the proof we only mention the tip $F$ of the cocone, but
formally we have to handle the whole cocone.
 
 First, suppose that $F$ is a colimit.
 For any $A : \A_0$, we have the colimit, say $F'A$, of $DA$ in $\C$.
 Via Construction~\ref{constr:colims_in_functor_cat}, the pointwise colimits $F'A$ yield a functor $F'$ that is a colimit of $D$.
 Since both $F$ and $F'$ are colimits of $D$, we obtain an 
 isomorphism $F' \cong F$ by Lemma~\ref{lem:colim-invariant-iso}, and hence an isomorphism $FA \cong F'A$ for any $A : \A_0$.
 Since $F'A$ is a colimit for $DA$, so is $FA$.

On the other hand, suppose that $FA$ is a colimit of $DA$ for any $A : \A_0$.
Lifting those colimits to the functor category, we obtain a functor $F'$, that is
definitionally equal to $F$ on objects, and that is a colimit of $D$.
The induced natural transformation from $F'$ to $F$ is an isomorphism $F \cong F'$ that is pointwise the identity.
By Lemma~\ref{lem:colim-invariant-iso}, since $F'$ is a colimit of $D$, so is $F$.

\end{proof}

Using this we can now prove the main technical contribution of this section.

\begin{theorem}[Precomposition functor preserves colimits]\label{thm:precomp_cocont}
Fix a graph $G$ and suppose $\C$ has specified colimits of shape $G$.
Let $F : \A \to \B$ be a functor, then the functor $\_ \hcomp F : [\B,\C] \to [\A,\C]$
 preserves colimits of shape $G$.

\end{theorem}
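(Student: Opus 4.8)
The plan is to reduce the claim to the pointwise characterization of colimits in functor categories established in Theorem~\ref{thm:colimits_pointwise}. Fix a diagram $D$ of shape $G$ in $[\B,\C]$ together with a colimiting cocone with tip $X : [\B,\C]$. I must show that the image cocone under $\_ \hcomp F$, which has tip $X \hcomp F : [\A,\C]$ and is a cocone under the diagram $D \hcomp F$ (i.e.\ the diagram obtained by postcomposing every object and morphism of $D$ with $\_ \hcomp F$), is again colimiting. By Theorem~\ref{thm:colimits_pointwise} applied to the category $[\A,\C]$, it suffices to check this pointwise: for every object $A : \A_0$, the cocone with tip $(X \hcomp F)(A) = X(FA)$ must be colimiting for the pointwise diagram $(D \hcomp F)(A)$ in $\C$.

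Now the key observation is that evaluating the postcomposed diagram $D \hcomp F$ at an object $A : \A_0$ gives definitionally the same data as evaluating the original diagram $D$ at the object $FA : \B_0$: on objects, $((D \hcomp F)(A))(v) = (D(v))(FA)$, and similarly on the transition morphisms, and the tip satisfies $(X \hcomp F)(A) = X(FA)$. So the pointwise diagram $(D\hcomp F)(A)$ in $\C$ is literally the pointwise diagram $D(FA)$ of the original diagram $D$, with the same cocone. But since the original cocone with tip $X$ was assumed colimiting in $[\B,\C]$, the other direction of Theorem~\ref{thm:colimits_pointwise} tells us that for \emph{every} object $B : \B_0$ — in particular for $B = FA$ — the pointwise cocone with tip $X(B)$ is colimiting for $D(B)$ in $\C$. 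Instantiating $B := FA$ finishes the argument.

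The main obstacle is not conceptual but bookkeeping: one has to be careful that the ``image cocone'' produced by the functor $\_\hcomp F$ (via \coqdocdefinition{mapcocone}) really does coincide, as a cocone, with the pointwise cocone of $D$ taken at $FA$ — this is the place where the definitional equalities $(X\hcomp F)(A) = X(FA)$ and $((D\hcomp F)(A))(v) = (D(v))(FA)$ have to line up, both on the tip and on every leg, and also on the naturality/commutation data of the cocones. In a formalization this typically requires either that these equalities hold judgmentally (so the two cocones are literally the same term) or an explicit transport; here, because $\_\hcomp F$ acts by plain composition, the equalities are judgmental and the two cocones are identical on the nose, so no transport is needed. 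Everything else is a direct appeal to Theorem~\ref{thm:colimits_pointwise} in both directions, with no new colimit constructed by hand.
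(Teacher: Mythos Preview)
Your proof is correct and follows exactly the paper's approach: apply Theorem~\ref{thm:colimits_pointwise} in one direction to reduce to a pointwise statement at each $A:\A_0$, observe that the resulting pointwise diagram and cocone coincide with those of $D$ at $FA$, and then apply Theorem~\ref{thm:colimits_pointwise} in the other direction. Your additional paragraph on the bookkeeping of definitional equalities for the image cocone is a useful elaboration that the paper leaves implicit.
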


\begin{proof}

 Let $D$ be a diagram of shape $G$ in $[\B,\C]$, and let $C$ be its colimit.
 We need to show that $C\hcomp F$ is the colimit of the diagram $G\hcomp F$ in $[\A,\C]$.
 By Theorem~\ref{thm:colimits_pointwise}, it suffices to show that for any $A : \A_0$, the object $(C\hcomp F)A \equiv C(FA)$ is a colimit of
 $(G\hcomp F)A \equiv G(FA)$ in $\C$.
 By the other implication of Theorem~\ref{thm:colimits_pointwise}, instantiated to $FA$, this is indeed the case.

\end{proof}

\begin{example}
 Putting together results~\ref{thm:precomp_cocont},  \ref{lem:ex-pres-colim}\ref{pres-colim-plus} in the binary case,
\ref{lem:ex-pres-cocont}\ref{pres-colim-family}, and \ref{theorem:times_cocont}, we obtain 
 that the functor for the untyped lambda calculus of Example~\ref{ex:sig_strength_lc} defined on objects as
 \begin{align*}
     X &\mapsto  \langle X,X\rangle  \\
       &\mapsto  \langle X \hcomp \option , X \times X \rangle \\
       &\mapsto  X \hcomp \option + X \times X
 \end{align*}
 is $\omega$-cocontinuous, being the composition of $\omega$-cocontinuous functors.
 Hence initial algebras can be constructed for it by Construction~\ref{thm:init_alg}.
 Note that we have not taken into account the variables yet. 
 This will be done below.
\end{example}

More generally,
any signature functor over a category $\C$ 
obtained from a binding signature
via Construction~\ref{constr_H:sem_sig_from_binding} 
preserves colimits of chains:

\begin{theorem}\label{lem:sem_sig_cocont}
  Let $\C$ be a
  category with coproducts, products, and colimits of chains such that
  $F\times {-}$ is $\omega$-cocontinuous for every $F:\C\to\C$.
  Then,
  the signature functor over $\C$ associated to a binding signature via
  Construction~\ref{constr_H:sem_sig_from_binding} 
  is $\omega$-cocontinuous.
\end{theorem}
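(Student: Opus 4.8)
The plan is to exhibit the signature functor $H$ as a finite composite of functors each of which is already known to be $\omega$-cocontinuous, and then conclude by Lemma~\ref{lem:ex-pres-cocont}\ref{pres-colim-comp}. Recall from Construction~\ref{constr_H:sem_sig_from_binding} that, writing $I$ and $\arity$ for the data of the binding signature, $H = \bigoplus_{i:I} H_i$, where for $\arity(i) = [n_1,\ldots,n_k]$ the functor $H_i : [\C,\C]\to[\C,\C]$ sends $X$ to the iterated binary product $\prod_{1\leqslant j\leqslant k} X\hcomp\option^{n_j}$. I would treat the two layers — the per-arity functor $H_i$ and the outer $I$-indexed coproduct — separately, working throughout with $\omega$-cocontinuity (colimits of chains) rather than full cocontinuity, since the empty-list case of the iterated product forces the constant functor $\constfunctor{1}$ to appear, and constant functors preserve only colimits of chains (Lemma~\ref{lem:ex-pres-colim}\ref{pres-colim-const}).

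For a fixed $i$ I would factor $H_i$ as the composite
\[ [\C,\C] \;\longrightarrow\; [\C,\C]^{k} \;\longrightarrow\; [\C,\C]^{k} \;\longrightarrow\; [\C,\C], \]
where the first map is the diagonal $\Delta$, the second is the tuple $\langle\,{-}\hcomp\option^{n_j}\,\rangle_{1\leqslant j\leqslant k}$, and the third is the iterated binary product. Here $\Delta$ is $\omega$-cocontinuous by Lemma~\ref{lem:ex-pres-colim}\ref{pres-colim-delta}, using that $[\C,\C]$ has (pointwise) products since $\C$ does; each ${-}\hcomp\option^{n_j}$ is $\omega$-cocontinuous by Theorem~\ref{thm:precomp_cocont} (taking the functor called $F$ there to be $\option^{n_j}$ and the target category to be $\C$, which has colimits of chains), hence so is the tuple by Lemma~\ref{lem:ex-pres-cocont}\ref{pres-colim-family}, the finite index set having decidable equality. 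For the iterated product I would first observe that, products in $[\C,\C]$ being computed pointwise, the hypothesis that $F\times{-}$ is $\omega$-cocontinuous for every $F:\C\to\C$ is exactly the statement that $A\times{-}$ is $\omega$-cocontinuous on $[\C,\C]$ for every object $A$; the naturally isomorphic functor ${-}\times B$ is then $\omega$-cocontinuous as well (cocontinuity being invariant under natural isomorphism), so Theorem~\ref{theorem:times_cocont} applied to $[\C,\C]$ gives that $\times : [\C,\C]^2\to[\C,\C]$ is $\omega$-cocontinuous. The iterated product is a finite composite of instances of this binary product functor together with unit and reassociation isomorphisms, so it is $\omega$-cocontinuous by Lemma~\ref{lem:ex-pres-cocont}\ref{pres-colim-comp} and invariance under isomorphism; thus each $H_i$ is $\omega$-cocontinuous.

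For the outer layer I would use the decomposition $\bigoplus_{i:I} H_i = \amalg\circ\langle H_i\rangle_{i:I}\circ\Delta$ recorded before the statement, with $\Delta:[\C,\C]\to[\C,\C]^I$ and $\amalg:[\C,\C]^I\to[\C,\C]$. Here $\Delta$ is $\omega$-cocontinuous by Lemma~\ref{lem:ex-pres-colim}\ref{pres-colim-delta} and $\amalg$ by Lemma~\ref{lem:ex-pres-colim}\ref{pres-colim-plus} (using that $\C$, hence $[\C,\C]$, has coproducts), while $\langle H_i\rangle_{i:I}$ is $\omega$-cocontinuous by Lemma~\ref{lem:ex-pres-cocont}\ref{pres-colim-family}. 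This last step is precisely where the hypothesis that the index type $I$ has decidable equality (Definition~\ref{def:binding_sig}) is consumed, as announced just after that definition. Composing, $H$ is $\omega$-cocontinuous by Lemma~\ref{lem:ex-pres-cocont}\ref{pres-colim-comp}.

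The mathematical content is light; the real work — and the main obstacle — is the bookkeeping. One must keep every statement uniformly at the level of colimits of chains, handle the iterated product as a fold over a list of natural numbers (so the base case is $\constfunctor{1}$ and the step uses $\times$), and carefully transport $\omega$-cocontinuity across the natural isomorphisms relating $X\hcomp\option^{0}$ with $X$, reassociated iterated products, and the two forms $A\times{-}$ versus ${-}\times A$. Finally, decidable equality must be threaded through every invocation of Lemma~\ref{lem:ex-pres-cocont}\ref{pres-colim-family}, both for the finite arity lists and for the signature's index type $I$.
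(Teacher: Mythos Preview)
Your proposal is correct and follows essentially the same approach as the paper: the paper does not spell out a proof but, as the example immediately preceding the theorem indicates, the intended argument is precisely to exhibit $H$ as a composite of the diagonal, tupling of precomposition functors, the (iterated) binary product, and the $I$-indexed coproduct, invoking Lemmas~\ref{lem:ex-pres-colim}, \ref{lem:ex-pres-cocont}, Theorem~\ref{theorem:times_cocont}, and Theorem~\ref{thm:precomp_cocont} in turn. Your identification of where decidable equality of $I$ enters (via Lemma~\ref{lem:ex-pres-cocont}\ref{pres-colim-family}), of the need for the constant-functor clause in the empty-arity case, and of the lifting of chain colimits to $[\C,\C]$ (Construction~\ref{constr:colims_in_functor_cat}) for Theorem~\ref{thm:precomp_cocont}, all match the paper's setup.
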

By Lemma~\ref{lemma:left_adjoint_preserves_colimits}, the last requirement on $\C$ is satisfied if $\C$ has exponentials,
thus the theorem applies to $\C=\Set$. We also remark that the theorem uses the lifting
of colimits to functor categories (Construction \ref{constr:colims_in_functor_cat}).

\medskip
The binding signatures studied in Section~\ref{sec:signature} are
incapable of expressing that the free variables in the language are
considered as legal expressions, as we will argue now.
Had we also $\var:I$ in
Example~\ref{ex:binding_sig_lc}, any element of
$\arity(\var)$ would mean a lambda-term as argument to the
constructor, and if $\arity(\var)$ were the empty list, this would
generate one constant only. On the level of signature functors,
however, we just have to replace the $H$ found by
Construction~\ref{constr_H:sem_sig_from_binding} by
$\constfunctor{\Id{\C}} + H$.
Indeed, for any $(\constfunctor{\Id{\C}}+H)$-algebra
$(T,\alpha)$, the natural transformation $\alpha : \Id{\C} + HT \to T$
decomposes
into two $[\C,\C]$-morphisms
$\et{} : \Id{C} \arr T$, $\ta{} : H T \arr T$ defined by
\[
\begin{array}{c}
\et{} = \al{} \vcomp \inl_{\Id{\C},HT} \qquad\mbox{and}\qquad
\ta{} = \al{} \vcomp \inr_{\Id{\C},HT} \enspace .
\end{array}
\]
In case $(T,\alpha)$ is an initial algebra, the first component $\eta$
can then be considered as the injection of variables into the
well-formed expressions, i.\,e., for every object $C:\C$, $\eta_C:C\to
TC$ injects $C$ as ``variable names'' into $TC$, the ``terms over
$C$''. The second component $\tau$ represents all the other constructors
of $T$ together, hence those specified by the binding signature we started with.

\begin{definition}\label{def:datatype_binding_sig}
 The datatype specified by a signature functor $H$ over $\C$ (and hence by a binding signature)
 is given by an initial algebra of $\constfunctor{\Id{\C}} + H$.
\end{definition}

Combining Theorem~\ref{lem:sem_sig_cocont} with 
Ad\'amek's Theorem (Construction~\ref{thm:init_alg}), we obtain

\begin{construction}[Datatypes specified by binding signatures]\label{thm:initial_alg_for_binding_sig}
           Let $\C$ be a category with coproducts,
           products, and colimits of chains such that $F\times{-}$
           is $\omega$-cocontinuous for every $F:\C\to\C$.  For any
           binding signature $(I,\arity)$, construct the
           $\omega$-cocontinuous signature functor $H$. Then,
           $\constfunctor{\Id{\C}} + H$ is
           $\omega$-cocontinuous. Construct the datatype over $\C$ as
           initial algebra of the latter functor, where we get the required 
           colimiting cocone of Construction~\ref{thm:init_alg} 
           from $\C$ having specified colimits of chains.
           In particular,
           denoting the carrier of the algebra by $T:\C\to\C$, this
           yields $\et{} : \Id{C} \arr T$, $\ta{} : H T \arr T$ such
           that $[\et{},\ta{}]$ is an isomorphism.
\end{construction}
Once again, for $\C=\Set$, the prerequisites of the construction are
met, in particular thanks to the construction of colimits in the
category of sets (Construction~\ref{constr:colimits-in-sets}).

\section{From binding signatures to monads}\label{sec:thewholechain}

In this section we combine the results of the previous sections with
the construction of a substitution operation on an initial algebra in
order to obtain a ``substitution'' monad from a binding signature. We
end the section with two examples: the untyped lambda calculus and a
variation of Martin-Löf type theory.

\subsection{A substitution operation on the datatype of a binding signature}\label{sec:hss}

The results of the previous section permit the construction of initial algebras of signature functors.
The purpose of this section is to construct a \fat{substitution} operation on such initial algebras.
To this end, we apply Theorem~\ref{thm:subst_on_init} (a variant of a theorem from previous work, stated below) to our specific situation.
The goal of this section is hence
to recall the previous results and discuss some necessary modifications.

Even if not only initial algebras are considered (e.\,g., one might aim
at inverses of final coalgebras to model coinductive syntax, as
was one of the motivations for \cite{DBLP:journals/tcs/MatthesU04}),
the following abstract definition of the existence of a substitution
operation makes sense.
\begin{definition}[Matthes and Uustalu \cite{DBLP:journals/tcs/MatthesU04}]\label{def:hss}
Given a signature with strength  $(H, \th{})$,
 we call an $(\constfunctor{\Id{\C}}+H)$-algebra $(T, \al{})$ a
\fat{heterogeneous substitution system} (or \enquote{hss} for short) for
$(H, \th{})$, if, for every $\Ptd(\C)$-morphism $f : (Z,e) \arr (T,
\et{})$, there exists a unique $[\C,\C]$-morphism $h : T \hcomp Z \arr
T$, denoted $\gst{(Z,e)}{f}$, satisfying
\[
\begin{array}{c}
\xymatrix@C2pc@R1pc{
**[l] Z + (H T) \hcomp Z \ar[d]_{1_Z + \th{T,(Z,e)}} \ar[r]^-{\al{} \hcomp Z}
  & T \hcomp Z \ar[dd]^{h} \\
**[l] Z + H (T \hcomp Z) \ar[d]_{1_Z + H h}
  & \\
**[l] Z + H T \ar[r]^-{\copair{f}{\ta{}}}
  & T
}
\hspace{2mm} \mbox{i.e.,} \hspace{2mm}
\xymatrix@C3pc@R1pc{
Z \ar[r]^{\et{} \hcomp Z} \ar[ddr]_{f}
    & T \hcomp Z \ar[dd]^{h}
        & (H T) \hcomp Z  \ar[l]_-{\ta{} \hcomp Z} \ar[d]^{\th{T, (Z,e)}} \\
    &
        & H (T \hcomp Z) \ar[d]^{H h} \\
    & T
        & H T  \ar[l]_-{\ta{}}
}
\end{array}
\] 
\end{definition}
We remark that $(T, \al{})$ being an hss for given $(H, \th{})$ is a
proposition. Nevertheless, we may also consider the triple $(T,
\al{},\gst{}{-})$, including the (uniquely existing) operation 
$f\mapsto\gst{(Z,e)}{f}$.

The following is a variant of a theorem from
\cite{DBLP:journals/tcs/MatthesU04}, 
formalized in
\cite{DBLP:journals/corr/AhrensM16}. 
The original theorem required the existence of a right
  adjoint for the functor ${\_} \hcomp Z : [\C,\C] \arr [\C,\C]$ for
  every $\Ptd(\C)$-object $(Z,e)$.
  The present variant replaces that hypothesis on right adjoints by 
  suitable assumptions on $\omega$-cocontinuity.
  \begin{theorem}[Construction of a substitution operation on an initial algebra]\label{thm:subst_on_init}
      Let $\C$ be a category with initial object, binary coproducts and products, and 
      colimits of chains. 
      Let $(H,\theta)$ be a signature over base category $\C$. 
      If $H$ is $\omega$-cocontinuous,
  then an initial 
  $(\constfunctor{\Id{\C}} + H)$-algebra can be constructed via Construction~\ref{thm:init_alg},
  and this initial algebra
  is a heterogeneous substitution
  system for $(H,\th{})$. 
  \end{theorem}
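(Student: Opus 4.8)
The plan is to obtain the initial algebra from Adámek's Construction~\ref{thm:init_alg} and then to equip it with the substitution operation $f\mapsto\gst{(Z,e)}{f}$ by a generalized iteration in Mendler style, using the available $\omega$-cocontinuity in place of the right adjoint that the original formulation of this theorem assumed. First, $\constfunctor{\Id{\C}}+H$ is $\omega$-cocontinuous: the constant functor $\constfunctor{\Id{\C}}$ preserves colimits of chains by Lemma~\ref{lem:ex-pres-colim}\ref{pres-colim-const}, $H$ does so by hypothesis, and binary coproduct on $[\C,\C]$ is computed pointwise from that on $\C$ by Construction~\ref{constr:colims_in_functor_cat}, hence preserves colimits by Lemma~\ref{lem:ex-pres-colim}\ref{pres-colim-plus}; composing these via Lemma~\ref{lem:ex-pres-cocont} gives the claim. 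Since $\C$, hence $[\C,\C]$, has colimits of chains, Construction~\ref{thm:init_alg} applies and yields the initial $(\constfunctor{\Id{\C}}+H)$-algebra $(T,\al{})$ as the tip of a colimiting cocone $(c_n)_n$ on the chain $\chain$ of $\constfunctor{\Id{\C}}+H$; I write $\chain_n$ for its $n$-th vertex, so $\chain_0=0$ and $\chain_{n+1}=\Id{\C}+H\chain_n$, and decompose $\al{}$ into $\et{}:\Id{\C}\to T$ and $\ta{}:HT\to T$ as in Section~\ref{sec:inductive_families}.

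Now fix $(Z,e):\Ptd(\C)$ and a $\Ptd(\C)$-morphism $f:(Z,e)\to(T,\et{})$; the task is to produce a unique $h:T\hcomp Z\to T$ making the diagram of Definition~\ref{def:hss} commute. The crucial point is that the precomposition functor $\_\hcomp Z$ preserves colimits of chains, which is Theorem~\ref{thm:precomp_cocont} with all three categories taken to be $\C$ and the functor taken to be $Z$. Hence $T\hcomp Z$ is the tip of a colimiting cocone $(c_n\hcomp Z)_n$ on the chain $n\mapsto\chain_n\hcomp Z$, in which $\chain_0\hcomp Z$ is again initial and $\chain_{n+1}\hcomp Z\cong Z+(H\chain_n)\hcomp Z$ since $\_\hcomp Z$ preserves coproducts and $\Id{\C}\hcomp Z=Z$. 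I would then take $h$ to be the unique morphism to $T$ induced by the cocone $(h_n)_n$ with $h_0$ the unique morphism out of the initial object and $h_{n+1}:=\copair{f}{\,\ta{}\vcomp Hh_n\vcomp\th{\chain_n,(Z,e)}}$ under the above identification of $\chain_{n+1}\hcomp Z$. That $(h_n)_n$ is a cocone is shown by induction on $n$: the base case is uniqueness of morphisms out of the initial object, and the step uses that the chain map at stage $n+1$ is obtained by applying $\constfunctor{\Id{\C}}+H$ to the chain map at stage $n$, together with naturality of $\th{}$ in its first argument.

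It remains to verify that $h$ satisfies the two hss equations of Definition~\ref{def:hss} and that it is the only such morphism. Both equations are checked after precomposition with the colimit injections $c_n\hcomp Z$, respectively $(Hc_n)\hcomp Z$, that is, along the chain. For the $\et{}$-triangle one uses $\et{}=c_{n+1}\vcomp\inl$, reducing the equation to the first component of the definition of $h_{n+1}$; for the $\ta{}$-triangle one uses $\ta{}\vcomp Hc_n=c_{n+1}\vcomp\inr$ and, again, that $H$ is $\omega$-cocontinuous, so that $HT\hcomp Z$ and $H(T\hcomp Z)$ are the colimits of $(H\chain_n)\hcomp Z$ and $H(\chain_n\hcomp Z)$ and $\th{T,(Z,e)}$ is determined by the $\th{\chain_n,(Z,e)}$ by naturality, reducing the equation to the second component of the definition of $h_{n+1}$. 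For uniqueness, given any $h'$ making the diagram of Definition~\ref{def:hss} commute, one proves by induction on $n$ that $h'\vcomp(c_n\hcomp Z)=h_n$---the step uses both hss equations for $h'$, the fact that $\copair{\et{}}{\ta{}}\hcomp Z$ is the copairing of $\et{}\hcomp Z$ and $(\ta{}\hcomp Z)\vcomp((Hc_n)\hcomp Z)$, and the induction hypothesis---so that $h'=h$ by the universal property of the colimit. Since being an hss is a proposition, nothing further is needed.

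Conceptually the argument is a routine instantiation of the cocontinuity-based generalized Mendler iteration of \cite{DBLP:journals/fac/BirdP99}, so the genuinely new ingredient relative to the formalization in \cite{DBLP:journals/corr/AhrensM16} is Theorem~\ref{thm:precomp_cocont} together with the observation that this iteration scheme needs only that $\_\hcomp Z$ preserve colimits of chains, not that it have a right adjoint. I therefore expect the main obstacle to be bookkeeping rather than mathematical depth: defining the cocone $(h_n)_n$ and verifying the cocone law and both hss equations requires careful manipulation of the monoidal coherence isomorphisms $\alpha$, $\lambda$, $\rho$ of $[\C,\C]$ and of the identifications $\chain_0\hcomp Z\cong 0$, $\Id{\C}\hcomp Z=Z$, and $(HX)\hcomp Z$ versus $H(X\hcomp Z)$, none of which need hold definitionally in the formalization---precisely the points at which the original proof was already delicate.
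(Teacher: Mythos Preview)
Your proposal is correct and follows essentially the same approach as the paper: the paper's proof simply cites the cocontinuity-based generalized Mendler iteration scheme of \cite[Theorem~1]{DBLP:journals/fac/BirdP99} for both existence and uniqueness of $\gst{}{f}$, with Theorem~\ref{thm:precomp_cocont} supplying the needed preservation of chain colimits by $\_\hcomp Z$ in place of the right-adjoint hypothesis of the earlier work. What you have written is a faithful unpacking of that scheme in this instance, and you correctly identify both the governing reference and the role of Theorem~\ref{thm:precomp_cocont}; the only difference is that the paper treats the iteration scheme as a black box while you spell out the cocone $(h_n)_n$ and the inductive verifications explicitly.
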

  The proof is done by generalized iteration in Mendler-style (in the
  category-theoretic form introduced by
  \cite[Theorem 1]{DBLP:journals/fac/BirdP99}), both for the existence and the
  uniqueness of $\gst{}{f}$. Here, unlike in the previous work \cite{DBLP:journals/tcs/MatthesU04,DBLP:journals/corr/AhrensM16}, the initial
  algebra has to come from $\omega$-cocontinuity of the
  signature functor. The previous condition on existence of the
  right adjoint in the theorem would not allow to apply it to the category $\set$.

\begin{theorem}[Matthes and Uustalu \cite{DBLP:journals/tcs/MatthesU04}, formalized in \cite{DBLP:journals/corr/AhrensM16}](Construction of a monad from a substitution system)
\label{thm:monad_from_hss}
Let $\C$ be a category with binary coproducts and $(H,\theta)$ a
signature with strength over base category $\C$. If $(T,\alpha)$ is an
hss for $(H,\theta)$, then $T$, together with the canonically
associated $\et{}:\Id{C} \arr T$ as unit and
$\gst{}{1_{(T,\et{})}}:T\hcomp T\to T$ as multiplication, form a
monad.
\end{theorem}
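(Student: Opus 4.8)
The plan is to check the three monad laws
\[
\mu \vcomp (\et{} \hcomp T) = 1_T, \qquad \mu \vcomp (T \hcomp \et{}) = 1_T, \qquad \mu \vcomp (\mu \hcomp T) = \mu \vcomp (T \hcomp \mu),
\]
where $\mu := \gst{(T,\et{})}{1_{(T,\et{})}}$ and, throughout, equalities are understood up to the monoidal coherence isomorphisms $\alpha,\lambda,\rho$ of Section~\ref{sec:notation_category_theory} (pointwise identities, but still to be threaded through every composite). Naturality of the structure maps is free: $T$ is a functor as the carrier of an algebra, $\et{}$ is the composite $\al{} \vcomp \inl_{\Id{\C},HT}$, and $\mu$ is by construction a $[\C,\C]$-morphism. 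The whole argument rests on the \emph{uniqueness} clause in the definition of heterogeneous substitution system, applied several times to identify morphisms built from $\mu$ with instances of $\gst{}{-}$. First I would instantiate the hss property at $(Z,e) := (T,\et{})$ and $f := 1_{(T,\et{})}$: this exhibits $\mu$ as the unique $h : T \hcomp T \arr T$ with $h \vcomp (\et{} \hcomp T) = 1_T$ and $h \vcomp (\ta{} \hcomp T) = \ta{} \vcomp H h \vcomp \th{T,(T,\et{})}$. The first of these \emph{is} the left unit law, so that law needs no further work; the second is the workhorse for the rest.

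The key intermediate result I would establish is a fusion lemma: \emph{for every $\Ptd(\C)$-morphism $f : (Z,e) \arr (T,\et{})$ one has $\gst{(Z,e)}{f} = \mu \vcomp (T \hcomp f)$.} This is proved by the uniqueness of $\gst{(Z,e)}{f}$, checking that $\mu \vcomp (T \hcomp f)$ satisfies the two equations characterising it. For the unit equation, rewrite $(T \hcomp f) \vcomp (\et{} \hcomp Z) = (\et{} \hcomp T) \vcomp f$ by the interchange law and then use the left unit law. For the other equation, rewrite $(T \hcomp f) \vcomp (\ta{} \hcomp Z) = (\ta{} \hcomp T) \vcomp (HT \hcomp f)$, apply the second $\mu$-equation, and finish with naturality of $\th{}$ in its second component along the $\Ptd(\C)$-morphism $f$. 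From this lemma the remaining two laws drop out.

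Right unitality: take $(Z,e) := \idwt{\Ptd(\C)}$ and $f := \et{}$, which is a $\Ptd(\C)$-morphism $\idwt{\Ptd(\C)} \arr (T,\et{})$; fusion gives $\mu \vcomp (T \hcomp \et{}) = \gst{\idwt{\Ptd(\C)}}{\et{}}$, and a separate appeal to uniqueness shows $\gst{\idwt{\Ptd(\C)}}{\et{}} = 1_T$, the only input beyond the left unit law being the `linearity' of $\th{}$, i.e.\ $\th{T,\idwt{\Ptd(\C)}} = H(\lambda^{-1}_T) \vcomp \lambda_{HT}$. Associativity: observe that $(T \hcomp T, \et{} \hcomp \et{})$ is a pointed endofunctor and that $\mu$ is a $\Ptd(\C)$-morphism $(T \hcomp T, \et{} \hcomp \et{}) \arr (T,\et{})$ (this uses the left unit law). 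Fusion then gives $\mu \vcomp (T \hcomp \mu) = \gst{(T \hcomp T,\, \et{} \hcomp \et{})}{\mu}$, so it remains to show that $\mu \vcomp (\mu \hcomp T)$ also satisfies the two equations characterising $\gst{(T \hcomp T,\, \et{} \hcomp \et{})}{\mu}$. The unit equation reduces again to the left unit law; the other uses interchange, two applications of the second $\mu$-equation, naturality of $\th{}$ in its first component along $\mu$, and the second strength law $\th{T,(T \hcomp T,\, \et{} \hcomp \et{})} = H(\alpha^{-1}) \vcomp \th{T \hcomp T,(T,\et{})} \vcomp (\th{T,(T,\et{})} \hcomp T) \vcomp \alpha$. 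By uniqueness the two composites coincide, which is associativity.

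I expect no conceptual obstacle — every step is forced by the universal property — and the real effort lies in the bookkeeping: tracking the coherence isomorphisms $\alpha,\lambda,\rho$ so that every composite type-checks in the formalization (where they are not definitionally trivial), and unwinding the 2-categorical interchange law in each rewrite. The fusion lemma is the one place where a small idea is needed; once it is in place, right unitality and associativity are routine diagram chases against the uniqueness clause of the heterogeneous substitution system.
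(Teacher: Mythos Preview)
Your proposal is correct and follows precisely the argument of the cited source \cite{DBLP:journals/tcs/MatthesU04} (the present paper gives no proof of its own, merely attributing the result). In particular, your fusion lemma $\gst{(Z,e)}{f} = \mu \vcomp (T \hcomp f)$ is exactly the ``First Substitution Law'' of Matthes and Uustalu, and the derivations of the unit and associativity laws via uniqueness, using the two strength axioms, match theirs.
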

Functional programmers normally do not consider monad multiplication when
studying monads but rather the operation called \verb+bind+. It is
well-known that the formulations of monads with unit and
multiplication and those with unit and bind are equivalent. Given
$A,B:\C$ and a \emph{substitution rule} $f:A\to TB$, the effect of a
parallel substitution with $f$, is then $\gst{}{1_{(T,\et{})}}_B\vcomp
Tf:TA\to TB$, which is the bind operation for argument $f$. 
For $\C=\Set$, this just means that, for an argument $t:TA$, each
free variable occurrence of a variable $a:A$ in $t$ is replaced by the term
$f~a:TB$. The monad laws then become
 conditions for substitution, and
they are guaranteed by the theorem.

\subsection{Binding signatures to monads}\label{sec:monad}

We now recall the results presented in the paper and explain how
to combine them in order to obtain a monad from a binding
signature. 

Let $\C$ be a category with binary products and coproducts.
Let $(I,\arity)$ be a binding signature, by
constructions~\ref{constr_H:sem_sig_from_binding}
and~\ref{constr:sem_sig_from_binding} we obtain a signature with
strength $(H,\theta)$. In \UniMath:

\begin{coqdoccode}
\coqdocemptyline
\coqdocnoindent
\coqdockw{Definition} \coqdef{UniMath.SubstitutionSystems.BindingSigToMonad.BindingSigToSignature}{BindingSigToSignature}{\coqdocdefinition{BindingSigToSignature}} (\coqdocvar{TC} : \coqref{UniMath.CategoryTheory.limits.terminal.Terminal}{\coqdocdefinition{Terminal}} \coqdocvariable{C})\coqdoceol
\coqdocindent{1.00em}
(\coqdocvar{sig} : \coqref{UniMath.SubstitutionSystems.BindingSigToMonad.BindingSig}{\coqdocdefinition{BindingSig}}) (\coqdocvar{CC} : \coqref{UniMath.CategoryTheory.limits.coproducts.Coproducts}{\coqdocdefinition{Coproducts}} (\coqref{UniMath.SubstitutionSystems.BindingSigToMonad.BindingSigIndex}{\coqdocdefinition{BindingSigIndex}} \coqdocvariable{sig}) \coqdocvariable{C}) :\coqdoceol
\coqdocindent{1.00em}
\coqref{UniMath.SubstitutionSystems.Signatures.Signature}{\coqdocdefinition{Signature}} \coqdocvariable{C} \coqdocvariable{hsC}.\coqdoceol
\coqdocemptyline
\end{coqdoccode}

Note that we here require that $\C$ has both binary and $I$-indexed
coproducts, we could instead assume that $\C$ has all indexed
coproducts (as in the statement of
Problem~\ref{prob:sem_sig_from_binding}).

Theorem~\ref{lem:sem_sig_cocont} tells us that $H$ is
$\omega$-cocontinuous:

\begin{coqdoccode}
\coqdocemptyline
\coqdocnoindent
\coqdockw{Lemma} \coqdef{UniMath.SubstitutionSystems.BindingSigToMonad.is omega cocont BindingSigToSignature}{is\_omega\_cocont\_BindingSigToSignature}{\coqdoclemma{is\_omega\_cocont\_BindingSigToSignature}}\coqdoceol
\coqdocindent{1.00em}
(\coqdocvar{TC} : \coqref{UniMath.CategoryTheory.limits.terminal.Terminal}{\coqdocdefinition{Terminal}} \coqdocvariable{C}) (\coqdocvar{CLC} : \coqref{UniMath.CategoryTheory.limits.graphs.colimits.Colims of shape}{\coqdocdefinition{Colims\_of\_shape}} \coqref{UniMath.CategoryTheory.CocontFunctors.nat graph}{\coqdocdefinition{nat\_graph}} \coqdocvariable{C})\coqdoceol
\coqdocindent{1.00em}
(\coqdocvar{HF} : \coqref{UniMath.Foundations.Basics.Preamble.:type scope:'xCExA0' x '..' x ',' x}{\coqdocnotation{\ensuremath{\Pi}}} \coqref{UniMath.Foundations.Basics.Preamble.:type scope:'xCExA0' x '..' x ',' x}{\coqdocnotation{(}}\coqdocvar{F} : \coqref{UniMath.SubstitutionSystems.BindingSigToMonad.BindingSigToMonad.::'[C,C]'}{\coqdocnotation{[}}\coqref{UniMath.SubstitutionSystems.BindingSigToMonad.BindingSigToMonad.::'[C,C]'}{\coqdocnotation{C}}\coqref{UniMath.SubstitutionSystems.BindingSigToMonad.BindingSigToMonad.::'[C,C]'}{\coqdocnotation{,}}\coqref{UniMath.SubstitutionSystems.BindingSigToMonad.BindingSigToMonad.::'[C,C]'}{\coqdocnotation{C}}\coqref{UniMath.SubstitutionSystems.BindingSigToMonad.BindingSigToMonad.::'[C,C]'}{\coqdocnotation{]}}\coqref{UniMath.Foundations.Basics.Preamble.:type scope:'xCExA0' x '..' x ',' x}{\coqdocnotation{),}} \coqref{UniMath.CategoryTheory.CocontFunctors.is omega cocont}{\coqdocdefinition{is\_omega\_cocont}} (\coqdocvariable{constprod\_functor1} \coqdocvariable{F}))\coqdoceol
\coqdocindent{1.00em}
(\coqdocvar{sig} : \coqref{UniMath.SubstitutionSystems.BindingSigToMonad.BindingSig}{\coqdocdefinition{BindingSig}})\coqdoceol
\coqdocindent{1.00em}
(\coqdocvar{CC} : \coqref{UniMath.CategoryTheory.limits.coproducts.Coproducts}{\coqdocdefinition{Coproducts}} (\coqref{UniMath.SubstitutionSystems.BindingSigToMonad.BindingSigIndex}{\coqdocdefinition{BindingSigIndex}} \coqdocvariable{sig}) \coqdocvariable{C}) \coqdoceol
\coqdocindent{1.00em}
(\coqdocvar{PC} : \coqref{UniMath.CategoryTheory.limits.products.Products}{\coqdocdefinition{Products}} (\coqref{UniMath.SubstitutionSystems.BindingSigToMonad.BindingSigIndex}{\coqdocdefinition{BindingSigIndex}} \coqdocvariable{sig}) \coqdocvariable{C}) :\coqdoceol
\coqdocindent{1.00em}
\coqref{UniMath.CategoryTheory.CocontFunctors.is omega cocont}{\coqdocdefinition{is\_omega\_cocont}} (\coqref{UniMath.SubstitutionSystems.BindingSigToMonad.BindingSigToSignature}{\coqdocdefinition{BindingSigToSignature}} \coqdocvariable{TC} \coqdocvariable{sig} \coqdocvariable{CC}).\coqdoceol
\coqdocemptyline
\end{coqdoccode}

Here \coqdocvar{constprod\_functor1 F} denotes the functor that sends
$G$ to $F \times G$. Construction~\ref{thm:init_alg} allows us to
construct an initial algebra for $\constfunctor{\Id{\C}}+H$ under
suitable hypotheses on $\C$:

\begin{coqdoccode}
\coqdocemptyline
\coqdocnoindent
\coqdockw{Definition} \coqdef{UniMath.SubstitutionSystems.BindingSigToMonad.SignatureInitialAlgebra}{SignatureInitialAlgebra}{\coqdocdefinition{SignatureInitialAlgebra}}\coqdoceol
\coqdocindent{1.00em}
(\coqdocvar{IC} : \coqref{UniMath.CategoryTheory.limits.initial.Initial}{\coqdocdefinition{Initial}} \coqdocvariable{C}) (\coqdocvar{CLC} : \coqref{UniMath.CategoryTheory.limits.graphs.colimits.Colims of shape}{\coqdocdefinition{Colims\_of\_shape}} \coqref{UniMath.CategoryTheory.CocontFunctors.nat graph}{\coqdocdefinition{nat\_graph}} \coqdocvariable{C})\coqdoceol
\coqdocindent{1.00em}
(\coqdocvar{H} : \coqref{UniMath.SubstitutionSystems.Signatures.Signature}{\coqdocdefinition{Signature}} \coqdocvariable{C} \coqdocvariable{hsC}) (\coqdocvar{Hs} : \coqref{UniMath.CategoryTheory.CocontFunctors.is omega cocont}{\coqdocdefinition{is\_omega\_cocont}} \coqdocvariable{H}) :\coqdoceol
\coqdocindent{1.00em}
\coqref{UniMath.CategoryTheory.limits.initial.Initial}{\coqdocdefinition{Initial}} (\coqdocvariable{FunctorAlg} (\coqdocvariable{Id\_H} \coqdocvariable{H})).\coqdoceol
\coqdocemptyline
\end{coqdoccode}

By Theorem~\ref{thm:subst_on_init} we then obtain an initial
heterogeneous substitution system:

\begin{coqdoccode}
\coqdocemptyline
\coqdocnoindent
\coqdockw{Definition} \coqdef{UniMath.SubstitutionSystems.BindingSigToMonad.InitialHSS}{InitialHSS}{\coqdocdefinition{InitialHSS}}\coqdoceol
\coqdocindent{1.00em}
(\coqdocvar{IC} : \coqref{UniMath.CategoryTheory.limits.initial.Initial}{\coqdocdefinition{Initial}} \coqdocvariable{C}) (\coqdocvar{CLC} : \coqref{UniMath.CategoryTheory.limits.graphs.colimits.Colims of shape}{\coqdocdefinition{Colims\_of\_shape}} \coqref{UniMath.CategoryTheory.CocontFunctors.nat graph}{\coqdocdefinition{nat\_graph}} \coqdocvariable{C})\coqdoceol
\coqdocindent{1.00em}
(\coqdocvar{H} : \coqref{UniMath.SubstitutionSystems.Signatures.Signature}{\coqdocdefinition{Signature}} \coqdocvariable{C} \coqdocvariable{hsC}) (\coqdocvar{Hs} : \coqref{UniMath.CategoryTheory.CocontFunctors.is omega cocont}{\coqdocdefinition{is\_omega\_cocont}} \coqdocvariable{H}) :\coqdoceol
\coqdocindent{1.00em}
\coqref{UniMath.CategoryTheory.limits.initial.Initial}{\coqdocdefinition{Initial}} (\coqdocvariable{HSS} \coqdocvariable{H}).\coqdoceol
\coqdocemptyline
\end{coqdoccode}

Finally we can obtain a monad from a heterogeneous substitution system
by Theorem~\ref{thm:monad_from_hss}:

\begin{coqdoccode}
\coqdocemptyline
\coqdocnoindent
\coqdockw{Definition} \coqdef{UniMath.SubstitutionSystems.BindingSigToMonad.BindingSigToMonad.Monad from hss}{Monad\_from\_hss}{\coqdocvariable{Monad\_from\_hss}} (\coqdocvar{H} : \coqref{UniMath.SubstitutionSystems.Signatures.Signature}{\coqdocdefinition{Signature}} \coqdocvariable{C} \coqdocvariable{hsC}) : \coqdocvariable{HSS} \coqdocvariable{H} \coqref{UniMath.Foundations.Basics.Preamble.:type scope:x 'xE2x86x92' x}{\coqdocnotation{→}} \coqref{UniMath.CategoryTheory.Monads.Monad}{\coqdocdefinition{Monad}} \coqdocvariable{C}.\coqdoceol
\coqdocemptyline
\end{coqdoccode}

Combining all of this gives us the desired map from binding signatures
to monads:

\begin{coqdoccode}
\coqdocemptyline
\coqdocnoindent
\coqdockw{Definition} \coqdef{UniMath.SubstitutionSystems.BindingSigToMonad.BindingSigToMonad}{BindingSigToMonad}{\coqdocdefinition{BindingSigToMonad}}\coqdoceol
\coqdocindent{1.00em}
(\coqdocvar{TC} : \coqref{UniMath.CategoryTheory.limits.terminal.Terminal}{\coqdocdefinition{Terminal}} \coqdocvariable{C}) (\coqdocvar{IC} : \coqref{UniMath.CategoryTheory.limits.initial.Initial}{\coqdocdefinition{Initial}} \coqdocvariable{C}) (\coqdocvar{CLC} : \coqref{UniMath.CategoryTheory.limits.graphs.colimits.Colims of shape}{\coqdocdefinition{Colims\_of\_shape}} \coqref{UniMath.CategoryTheory.CocontFunctors.nat graph}{\coqdocdefinition{nat\_graph}} \coqdocvariable{C})\coqdoceol
\coqdocindent{1.00em}
(\coqdocvar{HF} : \coqref{UniMath.Foundations.Basics.Preamble.:type scope:'xCExA0' x '..' x ',' x}{\coqdocnotation{\ensuremath{\Pi}}} \coqref{UniMath.Foundations.Basics.Preamble.:type scope:'xCExA0' x '..' x ',' x}{\coqdocnotation{(}}\coqdocvar{F} : \coqref{UniMath.SubstitutionSystems.BindingSigToMonad.BindingSigToMonad.::'[C,C]'}{\coqdocnotation{[}}\coqref{UniMath.SubstitutionSystems.BindingSigToMonad.BindingSigToMonad.::'[C,C]'}{\coqdocnotation{C}}\coqref{UniMath.SubstitutionSystems.BindingSigToMonad.BindingSigToMonad.::'[C,C]'}{\coqdocnotation{,}}\coqref{UniMath.SubstitutionSystems.BindingSigToMonad.BindingSigToMonad.::'[C,C]'}{\coqdocnotation{C}}\coqref{UniMath.SubstitutionSystems.BindingSigToMonad.BindingSigToMonad.::'[C,C]'}{\coqdocnotation{]}}\coqref{UniMath.Foundations.Basics.Preamble.:type scope:'xCExA0' x '..' x ',' x}{\coqdocnotation{),}} \coqref{UniMath.CategoryTheory.CocontFunctors.is omega cocont}{\coqdocdefinition{is\_omega\_cocont}} (\coqdocvariable{constprod\_functor1} \coqdocvariable{F}))\coqdoceol
\coqdocindent{1.00em}
(\coqdocvar{sig} : \coqref{UniMath.SubstitutionSystems.BindingSigToMonad.BindingSig}{\coqdocdefinition{BindingSig}})\coqdoceol
\coqdocindent{1.00em}
(\coqdocvar{PC} : \coqref{UniMath.CategoryTheory.limits.products.Products}{\coqdocdefinition{Products}} (\coqref{UniMath.SubstitutionSystems.BindingSigToMonad.BindingSigIndex}{\coqdocdefinition{BindingSigIndex}} \coqdocvariable{sig}) \coqdocvariable{C})\coqdoceol
\coqdocindent{1.00em}
(\coqdocvar{CC} : \coqref{UniMath.CategoryTheory.limits.coproducts.Coproducts}{\coqdocdefinition{Coproducts}} (\coqref{UniMath.SubstitutionSystems.BindingSigToMonad.BindingSigIndex}{\coqdocdefinition{BindingSigIndex}} \coqdocvariable{sig}) \coqdocvariable{C}) :\coqdoceol
\coqdocindent{1.00em}
\coqref{UniMath.CategoryTheory.Monads.Monad}{\coqdocdefinition{Monad}} \coqdocvariable{C}.\coqdoceol
\coqdocemptyline
\end{coqdoccode}

We see that the category $\C$ needs to have binary coproducts and
products, initial and terminal objects, colimits of chains,
$I$-indexed coproducts and products, and the functor
$G \mapsto F \times G$ has to be $\omega$-cocontinuous.  All of the
assumptions on $\C$ are satisfied by $\Set$. In the formalization we
have implemented special functions instantiated with $\Set$ taking
fewer arguments, in particular:

\begin{coqdoccode}
\coqdocemptyline
\coqdocnoindent
\coqdockw{Definition} \coqdef{UniMath.SubstitutionSystems.BindingSigToMonad.BindingSigToMonadHSET}{BindingSigToMonadHSET}{\coqdocdefinition{BindingSigToMonadHSET}} : \coqref{UniMath.SubstitutionSystems.BindingSigToMonad.BindingSig}{\coqdocdefinition{BindingSig}} \coqref{UniMath.Foundations.Basics.Preamble.:type scope:x 'xE2x86x92' x}{\coqdocnotation{→}} \coqref{UniMath.CategoryTheory.Monads.Monad}{\coqdocdefinition{Monad}} \coqref{UniMath.CategoryTheory.category hset.HSET}{\coqdocabbreviation{HSET}}.\coqdoceol
\coqdocemptyline
\end{coqdoccode}

We end by showing how the framework developed in this paper can be
used to conveniently obtain monads from binding signatures for two
well-known languages.

\begin{example}[Untyped lambda calculus]

As explained in the beginning of the paper
the binding
signature for the untyped lambda calculus is given by
$I:= \{ \app{}, \abs{} \}$ and the arity function
\[ \app{} \mapsto [0,0] \enspace , \enspace \abs{} \mapsto [1] \enspace .\]

In \UniMath we implement this as a \verb+bool+-indexed family:

\begin{coqdoccode}
\coqdocemptyline
\coqdocnoindent
\coqdockw{Definition} \coqdef{UniMath.SubstitutionSystems.LamFromBindingSig.LamSig}{LamSig}{\coqdocdefinition{LamSig}} : \coqref{UniMath.SubstitutionSystems.BindingSigToMonad.BindingSig}{\coqdocdefinition{BindingSig}} :=\coqdoceol
\coqdocindent{1.00em}
\coqref{UniMath.SubstitutionSystems.BindingSigToMonad.mkBindingSig}{\coqdocdefinition{mkBindingSig}} \coqref{UniMath.Foundations.Basics.PartB.isdeceqbool}{\coqdoclemma{isdeceqbool}} (\coqdockw{fun} \coqdocvar{b} \ensuremath{\Rightarrow} \coqdockw{if} \coqdocvariable{b} \coqdockw{then} 0 \coqref{UniMath.SubstitutionSystems.LamFromBindingSig.Lam.::x '::' x}{\coqdocnotation{::}} 0 \coqref{UniMath.SubstitutionSystems.LamFromBindingSig.Lam.::x '::' x}{\coqdocnotation{::}} \coqref{UniMath.SubstitutionSystems.LamFromBindingSig.Lam.::'[]'}{\coqdocnotation{[]}} \coqdockw{else} 1 \coqref{UniMath.SubstitutionSystems.LamFromBindingSig.Lam.::x '::' x}{\coqdocnotation{::}} \coqref{UniMath.SubstitutionSystems.LamFromBindingSig.Lam.::'[]'}{\coqdocnotation{[]}}).\coqdoceol
\coqdocemptyline
\end{coqdoccode}

From this we obtain a signature with strength:

\begin{coqdoccode}
\coqdocemptyline
\coqdocnoindent
\coqdockw{Definition} \coqdef{UniMath.SubstitutionSystems.LamFromBindingSig.LamSignature}{LamSignature}{\coqdocdefinition{LamSignature}} : \coqref{UniMath.SubstitutionSystems.Signatures.Signature}{\coqdocdefinition{Signature}} \coqref{UniMath.CategoryTheory.category hset.HSET}{\coqdocabbreviation{HSET}} \coqref{UniMath.CategoryTheory.category hset.has homsets HSET}{\coqdoclemma{has\_homsets\_HSET}} :=\coqdoceol
\coqdocindent{1.00em}
\coqref{UniMath.SubstitutionSystems.BindingSigToMonad.BindingSigToSignatureHSET}{\coqdocdefinition{BindingSigToSignatureHSET}} \coqref{UniMath.SubstitutionSystems.LamFromBindingSig.LamSig}{\coqdocdefinition{LamSig}}.\coqdoceol
\coqdocemptyline
\end{coqdoccode}

Using this we can add variables in order to get a representation of
the complete syntax of the untyped lambda calculus. We also get an
initial algebra from this functor by Construction~\ref{thm:init_alg}:

\begin{coqdoccode}
\coqdocemptyline
\coqdocnoindent
\coqdockw{Definition} \coqdef{UniMath.SubstitutionSystems.LamFromBindingSig.LamFunctor}{LamFunctor}{\coqdocdefinition{LamFunctor}} : \coqref{UniMath.CategoryTheory.functor categories.functor}{\coqdocdefinition{functor}} \coqref{UniMath.SubstitutionSystems.LamFromBindingSig.Lam.::'HSET2'}{\coqdocnotation{HSET2}} \coqref{UniMath.SubstitutionSystems.LamFromBindingSig.Lam.::'HSET2'}{\coqdocnotation{HSET2}} := \coqdocvariable{Id\_H} \coqref{UniMath.SubstitutionSystems.LamFromBindingSig.LamSignature}{\coqdocdefinition{LamSignature}}.\coqdoceol
\coqdocemptyline
\coqdocnoindent
\coqdockw{Lemma} \coqdef{UniMath.SubstitutionSystems.LamFromBindingSig.lambdaFunctor Initial}{lambdaFunctor\_Initial}{\coqdoclemma{lambdaFunctor\_Initial}} : \coqref{UniMath.CategoryTheory.limits.initial.Initial}{\coqdocdefinition{Initial}} (\coqref{UniMath.CategoryTheory.FunctorAlgebras.FunctorAlg}{\coqdocabbreviation{FunctorAlg}} \coqref{UniMath.SubstitutionSystems.LamFromBindingSig.LamFunctor}{\coqdocdefinition{LamFunctor}}).\coqdoceol
\coqdocemptyline
\end{coqdoccode}

Here \coqdocnotation{HSET2} is notation for $[\Set,\Set]$. Using this we can
define constructors and propositional computation rules as for
lists. 
We omit these due to space constraints but the interested
reader can consult the formalization.
Finally we also get a substitution monad:

\begin{coqdoccode}
\coqdocemptyline
\coqdocnoindent
\coqdockw{Definition} \coqdef{UniMath.SubstitutionSystems.LamFromBindingSig.LamMonad}{LamMonad}{\coqdocdefinition{LamMonad}} : \coqref{UniMath.CategoryTheory.Monads.Monad}{\coqdocdefinition{Monad}} \coqref{UniMath.CategoryTheory.category hset.HSET}{\coqdocabbreviation{HSET}} := \coqref{UniMath.SubstitutionSystems.BindingSigToMonad.BindingSigToMonadHSET}{\coqdocdefinition{BindingSigToMonadHSET}} \coqref{UniMath.SubstitutionSystems.LamFromBindingSig.LamSig}{\coqdocdefinition{LamSig}}.\coqdoceol
\coqdocemptyline
\end{coqdoccode}

\end{example}

\begin{example}[Raw syntax of Martin-Löf type theory]\label{ex:mltt}

We have also implemented a more substantial example: the raw syntax of
Martin-Löf type theory as presented in~\cite{MLTT79}. This syntax has
$\Pi$-types, $\Sigma$-types, coproduct types, identity types, finite
types, natural numbers, W-types and an infinite hierarchy of
universes. See Table~\ref{tab:mltt} for a summary of this language.
\begin{table}[h!]
\caption{This is the syntax as presented on page 158 of~\cite{MLTT79}.}
\label{tab:mltt}
\begin{tabular}{lll}
\hline\noalign{\smallskip}
Types & Concrete syntax & Binding arities  \\
\noalign{\smallskip}\hline\noalign{\smallskip}
Pi types & (Πx:A)B, (λx)b, (c)a & [0,1], [1], [0,0]\\
Sigma types & (Σx:A)B, (a,b), (Ex,y)(c,d)  & [0,1], [0,0], [0,2]\\
Sum types & A + B, i(a), j(b), (Dx,y)(c,d,e) & [0,0], [0], [0], [0,1,1]\\
Id types & I(A,a,b), r, J(c,d) & [0,0,0], [], [0,0]\\
Fin types & $N_i$, $0_i \cdots (i-1)_i$, $R_i$(c,$c_0$,...,$c_{i-1}$) & [], [] $\cdots$ [], [0,0,...,0]\\
Natural numbers & N, 0, a', (Rx,y)(c,d,e) & [], [], [0], [0,0,2]\\
W-types & (Wx∈A)B, sup(a,b), (Tx,y,z)(c,d) & [0,1], [0,0], [0,3]\\
Universes & $U_0$, $U_1$, ... & [], [], ...\\
\noalign{\smallskip}\hline
\end{tabular}
\end{table}

Because there are both infinitely many finite types and universes the
syntax has infinitely many constructors. This is the reason why we
above consider families of lists of natural numbers and 
indexed coproducts. Note that all of the operations take finitely many
arguments which is why we don't need to also consider infinite arities
and indexed products.

We define the binding signatures for each of these types
separately. Below is the code for $\Pi$- and $\Sigma$-types:

\begin{coqdoccode}
\coqdocemptyline
\coqdocnoindent
\coqdockw{Definition} \coqdef{UniMath.SubstitutionSystems.MLTT79.PiSig}{PiSig}{\coqdocdefinition{PiSig}} : \coqref{UniMath.SubstitutionSystems.BindingSigToMonad.BindingSig}{\coqdocdefinition{BindingSig}} :=\coqdoceol
\coqdocindent{1.00em}
\coqref{UniMath.SubstitutionSystems.BindingSigToMonad.mkBindingSig}{\coqdocdefinition{mkBindingSig}} (\coqref{UniMath.Foundations.Combinatorics.StandardFiniteSets.isdeceqstn}{\coqdoclemma{isdeceqstn}} 3) (\coqref{UniMath.Foundations.Combinatorics.StandardFiniteSets.three rec}{\coqdocdefinition{three\_rec}} \coqref{UniMath.SubstitutionSystems.MLTT79.MLTT79.::'[0,1]'}{\coqdocnotation{[0,1]}} \coqref{UniMath.SubstitutionSystems.MLTT79.MLTT79.::'[1]'}{\coqdocnotation{[1]}} \coqref{UniMath.SubstitutionSystems.MLTT79.MLTT79.::'[0,0]'}{\coqdocnotation{[0,0]}}).\coqdoceol
\coqdocemptyline
\coqdocnoindent
\coqdockw{Definition} \coqdef{UniMath.SubstitutionSystems.MLTT79.SigmaSig}{SigmaSig}{\coqdocdefinition{SigmaSig}} : \coqref{UniMath.SubstitutionSystems.BindingSigToMonad.BindingSig}{\coqdocdefinition{BindingSig}} :=\coqdoceol
\coqdocindent{1.00em}
\coqref{UniMath.SubstitutionSystems.BindingSigToMonad.mkBindingSig}{\coqdocdefinition{mkBindingSig}} (\coqref{UniMath.Foundations.Combinatorics.StandardFiniteSets.isdeceqstn}{\coqdoclemma{isdeceqstn}} 3) (\coqref{UniMath.Foundations.Combinatorics.StandardFiniteSets.three rec}{\coqdocdefinition{three\_rec}} \coqref{UniMath.SubstitutionSystems.MLTT79.MLTT79.::'[0,1]'}{\coqdocnotation{[0,1]}} \coqref{UniMath.SubstitutionSystems.MLTT79.MLTT79.::'[0,0]'}{\coqdocnotation{[0,0]}} \coqref{UniMath.SubstitutionSystems.MLTT79.MLTT79.::'[0,2]'}{\coqdocnotation{[0,2]}}).\coqdoceol
\coqdocemptyline
\end{coqdoccode}

Here the function \coqdocdefinition{three\_rec} \verb+a b c+ performs
case analysis and returns one of \verb+a+, \verb+b+ or \verb+c+. We
then combine all of these binding signatures by taking their sum:

\begin{coqdoccode}
\coqdocemptyline
\coqdocnoindent
\coqdockw{Definition} \coqdef{UniMath.SubstitutionSystems.MLTT79.MLTT79Sig}{MLTT79Sig}{\coqdocdefinition{MLTT79Sig}} := \coqref{UniMath.SubstitutionSystems.MLTT79.PiSig}{\coqdocdefinition{PiSig}} \coqref{UniMath.SubstitutionSystems.MLTT79.::x '++' x}{\coqdocnotation{++}} \coqref{UniMath.SubstitutionSystems.MLTT79.SigmaSig}{\coqdocdefinition{SigmaSig}} \coqref{UniMath.SubstitutionSystems.MLTT79.::x '++' x}{\coqdocnotation{++}} \coqref{UniMath.SubstitutionSystems.MLTT79.SumSig}{\coqdocdefinition{SumSig}} \coqref{UniMath.SubstitutionSystems.MLTT79.::x '++' x}{\coqdocnotation{++}} \coqref{UniMath.SubstitutionSystems.MLTT79.IdSig}{\coqdocdefinition{IdSig}} \coqref{UniMath.SubstitutionSystems.MLTT79.::x '++' x}{\coqdocnotation{++}}\coqdoceol
\coqdocindent{12.50em}
\coqref{UniMath.SubstitutionSystems.MLTT79.FinSig}{\coqdocdefinition{FinSig}} \coqref{UniMath.SubstitutionSystems.MLTT79.::x '++' x}{\coqdocnotation{++}} \coqref{UniMath.SubstitutionSystems.MLTT79.NatSig}{\coqdocdefinition{NatSig}} \coqref{UniMath.SubstitutionSystems.MLTT79.::x '++' x}{\coqdocnotation{++}} \coqref{UniMath.SubstitutionSystems.MLTT79.WSig}{\coqdocdefinition{WSig}} \coqref{UniMath.SubstitutionSystems.MLTT79.::x '++' x}{\coqdocnotation{++}} \coqref{UniMath.SubstitutionSystems.MLTT79.USig}{\coqdocdefinition{USig}}.\coqdoceol
\coqdocemptyline
\end{coqdoccode}

Finally we also obtain a substitution monad on $\Set$ for this
language:

\begin{coqdoccode}
\coqdocemptyline
\coqdocnoindent
\coqdockw{Definition} \coqdef{UniMath.SubstitutionSystems.MLTT79.MLTT79Monad}{MLTT79Monad}{\coqdocdefinition{MLTT79Monad}} : \coqref{UniMath.CategoryTheory.Monads.Monad}{\coqdocdefinition{Monad}} \coqref{UniMath.CategoryTheory.category hset.HSET}{\coqdocabbreviation{HSET}} := \coqref{UniMath.SubstitutionSystems.BindingSigToMonad.BindingSigToMonadHSET}{\coqdocdefinition{BindingSigToMonadHSET}} \coqref{UniMath.SubstitutionSystems.MLTT79.MLTT79Sig}{\coqdocdefinition{MLTT79Sig}}.\coqdoceol
\coqdocemptyline
\end{coqdoccode}

\end{example}

\section{Conclusion and future work}

\subsection{Conclusions}\label{sec:conclusion}

  We have formalized some classical category-theoretic results on 
  the construction of initial algebras, as well as on cocontinuity of functors.
Maybe surprisingly, the formalization of results yielding $\omega$-cocontinuous functors as input
to the construction of initial algebras proved to be much more difficult than the construction of colimits in $\set$ as output of that theorem.

Our formalization has been integrated into the \UniMath
library. Statistics related to the contributions of this paper have
been summarized in Table~\ref{table:loc}.\footnote{The script for computing these statistics can be found at:\\
  \url{https://github.com/mortberg/UniMath/tree/locscript/loc}} 
The first three columns show lines of code and the last two show the
number of vernacular commands.
\begin{table}[h!]
\caption{Statistics for the formalization.}
\label{table:loc}
\begin{tabular}{lllll}
\hline\noalign{\smallskip}
Specification & Proof & Comments & \verb+Definition+ & \verb+Lemma+ and \verb+Theorem+  \\
\noalign{\smallskip}\hline\noalign{\smallskip}
3623 & 5283 & 1538 & 649 & 482 \\
\noalign{\smallskip}\hline
\end{tabular}
\end{table}

  Our datatypes come with a recursion principle, 
  defined via the universal property of the datatype as an initial 
  algebra. 
  This recursion principle allows us to define maps such as \texttt{foldr}
  for lists. Those maps satisfy the usual computation rules \emph{judgmentally},
  provided that 
  \begin{enumerate}
   \item the output type is one of the predefined types of \UniMath; and
   \item the computation is done \emph{lazily}.
  \end{enumerate}
  An instance of this is the length function for lists, the output type of which is
  the type $\texttt{nat}$ of natural numbers, defined as an inductive \Coq type.
  Maps whose output type is a datatype constructed via our framework do
  not compute to a normal form.
  An example of such a map is the function concatenating two lists into one list.
  Trying to compute the normal form of such a concatenated list leads
  to memory exhaustion.
  However, we can still reason about such maps by rewriting, that is,
  by replacing computational steps by a suitable lemma
  stating this step as a propositional equality.
  This is precisely how many recursive maps are handled in \textsc{SSReflect}~\cite{ssr}.
  There, computation of recursive maps is deliberately blocked for efficiency reasons in order to avoid
  too much unfolding. Instead, computation steps are simulated by 
  applying suitable rewriting lemmas.
  This indicates that the lack of a computable normal form
  for the inhabitants of our datatypes is not an obstacle
  for mathematical reasoning about the maps that we define on those 
  datatypes. 

    In the proofs and constructions presented here, the univalence principle is only used in 
    a restricted form: 
    \begin{itemize}
     \item \emph{function extensionality}, a consequence of univalence, is used in many places;
     \item the construction of set-level quotients by Voevodsky makes use of the \emph{univalence
          principle for propositions}: two propositions are equal when they are logically equivalent.
          Consequently, our construction of colimits in the category of sets 
          also depends on the univalence axiom for propositions.
    \end{itemize}
 An alternative to the use of these axioms (by admitting the univalence axiom) would have been to work 
 with \emph{setoids}. There, the idea is to abandon the identity type; instead, each type
 comes equipped with its own equivalence relation, reflecting the intended ``equality''.
 This would have been extremely cumbersome, since in that case, one needs to postulate respectively prove
 that any operation respects the equivalence in the source and target. 
 For the identity type, on the other hand, this respectfulness is automatic.

Another alternative would be to work in a system where these are
provable, and hence not axioms anymore, like Cubical Type
Theory~\cite{cubicaltt}. The additional judgmental equalities in such
a system could potentially simplify some proofs, but that needs
to be studied further.

\subsection{Future work}

In this section, we lay out some plans for future work and connections to other work.

  \subsubsection{Initiality for the constructed monad}  
  As illustrated in Section~\ref{sec:monad}, we have formalized a mechanism that,
  when provided with a binding signature, yields the associated \enquote{term monad}
  and a suitable recursion principle for defining maps from the term monad to 
  other (families of) sets. 
  This recursion principle stems from the universal property of initiality
  that the functor underlying the monad satisfies.
  However, the constructed monad itself has not, in the present work, been equipped with a universal property.
  
  Hirschowitz and Maggesi \cite{DBLP:conf/wollic/HirschowitzM07,DBLP:journals/iandc/HirschowitzM10}
  equip the term monad
  of a signature $S$ with a universal property by considering a 
    \emph{category of representations} of a given signature.
  A representation of $S$ is given by any monad $T$ and a family of module morphisms of suitable type over $T$.
  We should be able to formalize Hirschowitz and Maggesi's initiality theorem using the monad
  we have constructed in the present work.
  
  \subsubsection{Generalization to multi-sorted binding signatures}\label{sec:multi-sorted}
  
  The notion of binding signature considered in this paper does not incorporate a notion of \emph{typing}.
  Suitable generalizations to typed (or multi-sorted) signatures have been considered, for instance, in \cite{ahrens_zsido}. 
  In general, a multi-sorted signature contains not only information about the number of bound variables,
  but also of their types. Furthermore, it specifies an output type for each constructor.
  Multi-sorted binding signatures allow to specify languages such as the simply-typed lambda calculus and PCF (Dana Scott's language for ``Programming Computable Functions'').
We are currently working on extending our notions of signature, as well as the construction of 
initial algebras, to the multi-sorted setting.

  \subsubsection{Connection to Voevodsky's $C$-systems}\label{sec:c-systems}
  Voevodsky is currently considering Cartmell's \emph{contextual categories} \cite{DBLP:journals/apal/Cartmell86}, 
  under the name of \enquote{$C$-system},
  for a mathematical description of type theories (see, e.g., \cite{c-sys-universe,c-sys-from-module}).
  In particular, one of Voevodsky's goals is to give a precise construction of the $C$-system formed by the syntax of a given type theory.
  One step of this construction is given in \cite{c-sys-from-module}, where he
  constructs a $C$-system from a pair of a monad on $\set$ and a module over that monad with values in $\set$. 
  Such a pair can be constructed from a monad on $\Set^2$ and a choice of a set.

  It is our goal to formalize this construction in \UniMath, and to apply it to the term monads
  of 2-sorted signatures obtained via the generalization envisioned in Section~\ref{sec:multi-sorted}.
  We will thus obtain, for any suitable 2-sorted signature, a $C$-system of raw syntax of that signature.

\subsubsection*{Acknowledgements:}
 We thank Dan Grayson and Vladimir Voevodsky for helpful discussion on the subject matter. 
 We particularly thank Paige North for pointing to a size problem in an earlier version of one of our categorical 
 constructions during the writing phase of this article.
 Much of the formalization work on this article was done at the Institute for Advanced Study.
 We would like to thank the IAS for providing a pleasant and productive work environment.

\bibliographystyle{plain}

\end{document}